\pgfplotsset{compat=1.13}
\newenvironment{proof}{\par\noindent\textit{Proof.}}{$\Box$\par\bigskip\par}
\newenvironment{proofof}[1]{\par\noindent\textit{Proof of #1.}}{$\Box$\par\bigskip\par}
\newcommand{\IIF}[2]{\STATE \algorithmicif\ #1\ \algorithmicthen\ #2}
\newtheorem{theorem}{Theorem}
\newtheorem{lemma}{Lemma}
\newtheorem{corollary}{Corollary}
\newtheorem{problem}{Problem}
\newcommand{\vareps}{\varepsilon}
\newcommand{\cF}{\mathcal{F}}
\newcommand{\hw}{\hat{w}}
\newcommand{\wstar}{w^\star}
\newcommand{\ustar}{u^\star}
\newcommand{\wbest}{w^\text{best}}
\newcommand{\tw}{\tilde{w}}
\newcommand{\hv}{\hat{v}}
\newcommand{\hlambda}{\hat{\lambda}}
\newcommand{\hu}{\hat{u}}
\newcommand{\tu}{\tilde{u}}
\newcommand{\hk}{\hat{k}}
\newcommand{\tk}{\tilde{k}}
\newcommand{\hc}{\hat{c}}
\newcommand{\tc}{\tilde{c}}
\newcommand{\hS}{\hat{S}}
\newcommand{\Sstar}{S^{\star}}
\newcommand{\hx}{\hat{x}}
\newcommand{\tv}{\tilde{v}}
\newcommand{\bbR}{\mathbb{R}}
\newcommand{\R}{\bbR}
\newcommand{\thetahat}{\hat{\theta}}
\newcommand{\thetastar}{\theta^*}
\newcommand{\bbZ}{\mathbb{Z}}
\newcommand{\bbN}{\mathbb{N}_+}
\newcommand{\Otilde}{\widetilde{O}}
\newcommand{\Mmodel}{\mathcal{M}}
\newcommand{\Msupports}{\mathbb{M}}
\newcommand{\dualgreedy}{\text{Dual-Greedy}}
\newcommand{\activeconstraints}{\text{Active-Constraints}}
\newcommand{\optimalvalue}{\text{Opt-Value-of-$\LPdeltadual$}}
\newcommand{\deltarecovery}{\text{$\Delta$-recovery}}
\newcommand{\distributesparsity}{\text{Distribute-Sparsity}}
\newcommand{\recoverprimal}{\text{Recover}}
\newcommand{\deltanosparsity}{\text{$\Delta$-Separated-no-Sparsity}}
\newcommand{\mainalgorithm}{\textsc{LASSP}}
\newcommand{\rtoptimalvalue}[3]{#1 (\log{#2} + \log{#3})}
\newcommand{\mc}{c'}
\newcommand{\mk}{k'}
\newcommand{\mn}{{d'}}
\newcommand{\LPdeltaprimal}{\mathcal{P}}
\newcommand{\LagLPdeltaprimal}[1][]{
    \ifthenelse{\isempty{#1}}{L_{\LPdeltaprimal}}{L_{\LPdeltaprimal(#1)}}
}
\newcommand{\LPdeltaprimalnok}{\mathcal{P}_{\text{no-$k$}}}
\newcommand{\LPdeltadual}{\mathcal{D}}
\newcommand{\dpf}{DP}
\newcommand{\dpfinal}{\overline{DP}}
\newcommand{\boxpack}{\text{Box-Pack}}
\newcommand{\twoddelta}{\text{2D$\Delta$-separated}}
\newcommand{\projlagr}{\textsc{ProjLagr}}
\newcommand{\remark}{\textbf{Remark: }}
\newcommand{\cmax}{c_{max}}
\newcommand{\mcmax}{\mc_{max}}
\newcommand{\allones}{\mathbb{1}}
\newcommand{\allzeros}{\mathbb{0}}
\newcommand{\supp}{\text{supp}}
\newcommand{\sep}{\text{sep}}
\DeclareMathOperator*{\argmin}{arg\,min}
\DeclareMathOperator*{\argmax}{arg\,max}
\DeclareMathOperator*{\val}{VAL}
\newcommand{\prob}[1]{Pr\left[ #1 \right]}
\DeclarePairedDelimiter{\norm}{\lVert}{\rVert}
\DeclarePairedDelimiter{\abs}{\lvert}{\rvert}
\begin{document}

\title{A Fast Algorithm for Separated Sparsity via Perturbed Lagrangians}

\author{
Aleksander Mądry \\ MIT \\ madry@mit.edu
\and Slobodan Mitrović \\ EPFL \\ slobodan.mitrovic@epfl.ch
\and  Ludwig Schmidt \\ MIT \\ ludwigs@mit.edu
}

\date{}
\begin{titlepage}
\maketitle

\begin{abstract}
Sparsity-based methods are widely used in machine learning, statistics, and signal processing. There is now a rich class of structured sparsity approaches that expand the modeling power of the sparsity paradigm and incorporate constraints such as group sparsity, graph sparsity, or hierarchical sparsity. While these sparsity models offer improved sample complexity and better interpretability, the improvements come at a computational cost: it is often challenging to optimize over the (non-convex) constraint sets that capture various sparsity structures. In this paper, we make progress in this direction in the context of separated sparsity -- a fundamental sparsity notion that captures exclusion constraints in linearly ordered data such as time series. While prior algorithms for computing a projection onto this constraint set required quadratic time, we provide a perturbed Lagrangian relaxation approach that computes provably exact projection in only nearly-linear time. Although the sparsity constraint is non-convex, our perturbed Lagrangian approach is still guaranteed to find a globally optimal solution. In experiments, our new algorithms offer a 10$\times$ speed-up already on moderately-size inputs.
\end{abstract}
\end{titlepage}


\section{Introduction}
\label{sec:intro}

Over the past two decades, sparsity has become a widely used tool in several fields including signal processing, statistics, and machine learning.
In many cases, sparsity is the key concept that enables us to capture important structure present in real-world data while making the resulting problem computationally tractable and suitable for mathematical analysis.
Among the many applications of sparsity are sparse linear regression, compressed sensing, sparse PCA, and dictionary learning.

The first wave of sparsity-based techniques focused on the standard notion of sparsity that only constrains the number of non-zeros.
Over time, it became apparent that extending the notion of sparsity to encompass more complex structures present in real-world data can offer significant benefits.
Specifically, utilizing such additional structure often improves the statistical efficiency in estimation problems and the interpretability of the final result.
There is now a large body of work on structured sparsity that has introduced popular models such as group sparsity and hierarchical sparsity \cite{YL06,EM09,BCDH10,MJOB11,HZM11,RRN12,NRWY12,BBC14,HIS15icml}.
These statistical improvements, however, come at a computational cost: the resulting optimization problems are often much harder to solve.
The key reason is that the combinatorial sparsity structures give rise to non-convex constraints.
Consequently, many of the resulting algorithms have significantly worse running time than their ``standard sparsity'' counterparts. This trade-off raises an important question: can we design algorithms for structured sparsity that match the time complexity of commonly used algorithms for standard sparsity?


In this paper, we address this question in the context of the \emph{separated sparsity} model, a popular sparsity model for data with a known minimum distance between large coefficients \cite{hegdeLP,DDJB10,HB11,DSRB13,FMN15}.
In the one-dimensional case, such as time series data, neuronal spike trains are a natural example.
Here, a minimum refractory period ensures separation between consecutive spikes.
In two dimensions, separation constraints arise in the context of astronomical images or super-resolution applications~\cite{HB11,HK15}.

We introduce new algorithms for separated sparsity that run in \emph{nearly-linear} time.
This significantly improves over prior work that required at least quadratic time, which is quickly prohibitive for large data sets.
An important consequence of our fast running time is that it enables methods that utilize separated sparsity yet are essentially as fast as their counterparts based on standard sparsity only.
For instance, when we instantiate our algorithm in compressive sensing, the running time of our method matches that of common methods such as IHT or CoSaMP.

Our algorithms stem from a primal-dual linear programming (LP) perspective on the problem.
Our theoretical findings reveal a rich structure behind the separated sparsity model, which we utilize to obtain efficient methods.
Interestingly, our final algorithm has a very simple form that can be interpreted as a \emph{Lagrangian relaxation} of the sparsity constraint.
In spite of the non-convexity of the constraint, our algorithm is still guaranteed to find the globally optimal solution.

We also show that these algorithmic and theoretical contributions directly translate into empirical efficiency.
Specifically, we demonstrate that, compared to the state of the art procedures, our methods yield an order of magnitude speed-up already on moderate-size inputs.
We run experiments on synthetic data and real world neuronal spike train signals.


\section{Separated sparsity and applications}
In this section, we formally define separated sparsity and the corresponding algorithmic problems.
As a concrete application of our algorithms, we instantiate them in a sparse recovery context that is representative for many statistical problems such as compressed sensing and sparse linear regression.

First, we briefly introduce our notation.
As usual, $[d]$ denotes the set $\{1, \ldots, d\}$.
We say that a vector $\theta \in \bbR^d$ is $k$-sparse if $\theta$ contains at most $k$ non-zero coefficients.
We define the support of $\theta$ as the set of indices corresponding to non-zero coefficients, i.e., $\supp(\theta) = \{ i \in [d] \, | \, \theta_i \neq 0 \}$.
We let $\norm{\theta}$ denote the $\ell_2$-norm of a vector $\theta \in \bbR^d$.

\paragraph{Separated sparsity.}
\emph{Sparsity models} are a natural way to formalize structure beyond ``standard'' sparsity \cite{BCDH10}.
In this work we focus on the the separated sparsity model, defined as follows~\cite{hegdeLP}.
For a support $\Omega \subseteq [d]$, let $\sep(\Omega) = \min_{i \neq j \in \Omega} \abs{i - j}$ be the minimum separation of two indices in the support.
We define the following two sets of supports: set $\Msupports_{\Delta} = \{ \Omega \subseteq [d] \, | \, \sep(\Omega) \ge \Delta \}$, and \emph{$\Delta$-separated sparsity} supports $\Msupports_{k,\Delta} = \{ \Omega \in \Msupports_{\Delta} \, | \, \abs{\Omega} = k\}$.
That is, $\Msupports_{k,\Delta}$ is the set of support patterns containing $k$ non-zeros with at least $\Delta - 1$ zero entries between consecutive non-zeros.

In order to employ separated sparsity in statistical problems, we often want to add constraints based on the support set $\Msupports_{k,\Delta}$ to optimization problems such as empirical risk minimization.
A standard way of incorporating constraints into gradient-based algorithms is via a \emph{projection operator}.
In the context of separated sparsity, this corresponds to the following problem.

\begin{problem}
\label{prob:proj}
For a given input vector $x \in \bbR^d$, our goal is to project $x$ onto the set $\Msupports_{k,\Delta}$, i.e., to find a vector $\hx$ such that
\begin{equation}
\label{eq:proj}
        \hx  \, \in \, \argmin_{x' \in \bbR^d\ :\ \supp\left(x'\right)\in \Msupports_{k,\Delta}} \norm{x - x'} \; .
\end{equation}
\end{problem}
Problem \ref{prob:proj} is the main algorithmic problem we address in this paper.


\paragraph{Sparse recovery.}
Structured sparsity has been employed in a variety of machine learning tasks.
In order to keep the discussion coherent, we present our results in the context of the well-known sparse linear model:
\begin{equation}
\label{eq:sparselinear}
    y \; = \; X \thetastar + e
\end{equation}
where $y \in \R^n$ are the observations/measurements, $X \in \R^{n \times d}$ is the design or measurement matrix, and $e \in \R^n$ is a noise vector.
The goal is to find a good estimate $\thetahat$ of the unknown parameters $\thetastar$ up to the noise level.

The authors of \cite{BCDH10} give an elegant framework for incorporating structured sparsity into the estimation problem outlined above.
They design a general recovery algorithm that relies on a model-specific \emph{projection oracle}.
In the case of separated sparsity, this oracle is required to solve precisely the Problem \ref{prob:proj} stated above.

\section{The algorithm and our results}
\label{sec:contributions}
Given an arbitrary vector $x \in \bbR^d$, Problem \ref{prob:proj} requires us to find a vector $\hx$ such that $\hx \in \Mmodel_{k,\Delta}$ and $\norm{x - \hx}$ is minimized.
We now slightly reformulate the problem.
Let $c \in \bbR^d$ be a vector such that $c_i = x_i^2$ for all $i$.
Then it is not hard to see that this problem is equivalent to finding a set of $k$ entries in the vector $c$ such that each of these entries is separated by at least $\Delta$ and the sum of these entries is maximized.
Hence our main algorithmic problem is to find a set of $k$ entries in an non-negative input vector $c$ so that the entries are $\Delta$-separated  and their sum is maximized.
More formally, our goal is to find a support $\hS$ such that
\begin{equation}
\label{eq:proj-massaged}
        \hS  \, \in \, \argmax_{S \in \Msupports_{k,\Delta}} \sum_{i \in S} c_i \; .
\end{equation}
In the following, we also consider a relaxed version of \eqref{eq:proj-massaged} called $\projlagr$, which is parametrized by a trade-off parameter $\lambda$ and a vector $\tc$:
\[
      \projlagr(\lambda, \tc) \, := \, \argmax_{S \in \Msupports_{\Delta}} \sum_{i \in S} \tc_i + \lambda \left( k - |S|\right) \; .
\]
Intuitively, $\projlagr$ represents a Lagrangian relaxation of the sparsity constraint in Equation~\eqref{eq:proj-massaged}.

\subsection{Algorithm}
Our main contribution is a new algorithm for Problem \ref{prob:proj} that we call \emph{\textbf{L}agrangian \textbf{A}pproach to the \textbf{S}eparated \textbf{S}parsity \textbf{P}roblem} (LASSP).
The pseudo code is given in Algorithm~\ref{alg:main}.

LASSP is a \emph{Las Vegas} algorithm: it always returns a correct answer, but the running time of the algorithm is randomized.
Concretely, LASSP repeats a main loop until a stopping criterion is reached.
Every iteration of LASSP first adds a small perturbation to the coefficients $c$ (see Line \ref{line:main-perturb}).
This perturbation has only a small effect on the solution but improves the ``conditioning'' of the corresponding non-convex Lagrangian relaxation $\projlagr$ so that it returns a globally optimal solution that almost satisfies the constraint.
As we show in Section~\ref{sec:one-iteration-of-main}, we can solve this relaxation (Line~\ref{line:main-lambda}) in nearly-linear time.
After the algorithm has solved the Lagrangian relaxation, it obtains the final support $\hS$ in line~\ref{line:optimal-hc} by solving $\projlagr$ on a slightly shifted $\hlambda$ to ensure that the constraint is satisfied with good probability.

\remark We assume that the bit precision $\gamma$ required to represent the coefficients $c$ is finite, and we provide our results as a function of $\gamma$. For practical purposes, $\gamma$ is usually a constant. Since the solution to Equation \eqref{eq:proj-massaged} is invariant under scaling by a positive integer and $\gamma$ is finite, without loss of generality we assume that $c \in \bbZ^d$.

\begin{algorithm}
	\caption{\mainalgorithm:}
	\label{alg:main}
	{\bfseries Input:} $c \in \bbZ^{d}$, $k \in \bbN$ \\
	{\bfseries Output:} A solution $\hS$ to \eqref{eq:proj-massaged}
	
	\begin{algorithmic}[1]
        \REPEAT
		    \STATE\label{line:main-X} Let $X \in \bbZ^d$ be a vector such that $X_i$ is chosen uniformly at random from $\{0, \ldots, d^3 - 1\}$, $\forall i$
		    \STATE\label{line:main-perturb} Define vector $\tc := d^4 c + X$
		    \STATE\label{line:main-lambda} Let $\hlambda := \argmin_{\lambda \in \bbZ} \projlagr(\lambda, \tc)$. In case of ties, maximize $\hlambda$.
		    \STATE\label{line:optimal-hc} Choose $\hS \in \projlagr\left( \hlambda - \frac{1}{d + 1}, \tc \right)$
		\UNTIL{$\left| \hS \right| = k$}\label{line:main-while}
		\RETURN $\hS$
	\end{algorithmic}
\end{algorithm}

\subsection{Main results}

As our main result, in the following theorem we show that $\mainalgorithm$ runs in nearly linear time and solves Problem~\ref{prob:proj}.
\begin{restatable}{theorem}{theoremrandomized}
\label{thm:introrandomized}
Let $c \in \bbR^d$, and let $\gamma \in \bbN$ be the maximal number of bits needed to store any $c_i$.
There is an implementation of $\mainalgorithm$ that for every $c$ computes a solution $\hc$ to Problem \ref{prob:proj}.
With probability $1 - 1/d$, the algorithm runs in time $O(d (\gamma + \log d))$.
\end{restatable}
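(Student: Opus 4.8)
The plan is to prove the two assertions of the theorem separately: that $\mainalgorithm$ always outputs a correct projection (so that it is a genuine Las~Vegas procedure) and that, with probability $1-1/d$, it halts after a single pass of the repeat-loop in time $O(d(\gamma+\log d))$. Throughout I would write $L(\lambda) := \max_{S\in\Msupports_{\Delta}}\bigl(\sum_{i\in S}\tc_i + \lambda(k-|S|)\bigr)$ for the optimal value of the relaxation, so that $\hlambda$ in Line~\ref{line:main-lambda} is the largest integer minimizing the convex, piecewise-linear function $L$. The backbone of correctness is Lagrangian weak duality: for every $\lambda$ and every $S\in\Msupports_{k,\Delta}$ we have $\sum_{i\in S}\tc_i = \sum_{i\in S}\tc_i + \lambda(k-|S|) \le L(\lambda)$, because $|S|=k$ kills the penalty term and $S$ is feasible for the relaxation.

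First I would establish correctness. The loop exits only when $|\hS|=k$, and $\hS\in\Msupports_{\Delta}$ since it is returned by $\projlagr$, so the output always lies in $\Msupports_{k,\Delta}$. Moreover $\hS$ is optimal for $\projlagr(\hlambda-\tfrac{1}{d+1},\tc)$, hence attains value $\sum_{i\in\hS}\tc_i = L(\hlambda-\tfrac{1}{d+1})$; combined with the weak-duality bound above, this shows $\hS$ maximizes $\sum_{i\in S}\tc_i$ over all of $\Msupports_{k,\Delta}$, i.e. it is optimal for the \emph{perturbed} objective. It remains to transfer optimality from $\tc$ back to $c$. Here I would use that the perturbation is small relative to the resolution of $c$: for any $S$, $\sum_{i\in S}\tc_i = d^4\sum_{i\in S}c_i + \sum_{i\in S}X_i$ with $0\le \sum_{i\in S}X_i \le k(d^3-1) < d^4$, so whenever two feasible supports have distinct integer $c$-values, the $d^4$-scaling dominates the additive noise and the strict ordering of $c$-values is preserved. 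Consequently a maximizer of $\sum_{i\in S}\tc_i$ over $\Msupports_{k,\Delta}$ is also a maximizer of $\sum_{i\in S}c_i$, and translating back through $c_i=x_i^2$ yields a solution of Problem~\ref{prob:proj}. Crucially this step uses no randomness, so the answer is correct for \emph{every} draw of $X$.

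Next I would bound the cost of one iteration. Evaluating $\projlagr(\lambda,\tc)$ for a fixed $\lambda$ is a one-dimensional max-weight $\Delta$-separated selection, solvable by a simple dynamic program in $O(d)$ time (as developed in Section~\ref{sec:one-iteration-of-main}); the same computation returns the optimal size $|S_\lambda|$ and hence the slope $k-|S_\lambda|$ of $L$ at $\lambda$. Since $L$ is convex, I would locate $\hlambda$ by binary search over the integer range of candidate multipliers, whose width is $2^{O(\gamma+\log d)}$ after the $d^4$-scaling; this costs $O(\gamma+\log d)$ dynamic-program evaluations, i.e. $O(d(\gamma+\log d))$ in total, and Line~\ref{line:optimal-hc} adds one further $O(d)$ evaluation. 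Thus a single pass of the loop runs in $O(d(\gamma+\log d))$ time.

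The remaining and, I expect, hardest ingredient is the probabilistic claim that one pass suffices with high probability: I would invoke the structural analysis of the perturbed relaxation (Section~\ref{sec:one-iteration-of-main}) to show that, with probability at least $1-1/d$ over $X$, the support $\hS$ produced at $\hlambda-\tfrac{1}{d+1}$ has size exactly $k$. Two deterministic facts organize this. Every breakpoint of $L$ is a rational $p/q$ with $1 \le q \le d$ (a size difference), and no such rational lies in the open interval $(\hlambda-\tfrac{1}{d+1},\hlambda)$, since it would force the integer $q\hlambda-p$ to lie strictly between $0$ and $\tfrac{q}{d+1}\le \tfrac{d}{d+1}<1$; hence $\projlagr(\hlambda-\tfrac{1}{d+1},\tc)$ reads off the optimal size on the linear piece immediately left of $\hlambda$, whose slope is $\le 0$, so that size is $\ge k$. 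The $d^4$-scaling additionally spreads the breakpoints apart so that the slope-zero (optimal-size-$k$) region, \emph{when it exists}, is wide enough to contain an integer multiplier. The genuinely probabilistic step is that the additive noise $X$ makes the relaxation generic enough that such a size-$k$ region does exist with probability $\ge 1-1/d$ — the non-convexity of the constraint means that without this perturbation the optimal size can jump \emph{past} $k$, leaving no multiplier that certifies a size-$k$ solution. Granting this lemma, the repeat-loop terminates on its first pass with probability $\ge 1-1/d$, and on that event the total running time equals the cost of one pass, namely $O(d(\gamma+\log d))$; since correctness holds unconditionally, this establishes the theorem.
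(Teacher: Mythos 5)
Your decomposition matches the paper's exactly: correctness of the output conditioned on termination (the paper's Lemma~\ref{lemma:correctness}), an $O(d(\gamma+\log d))$ bound per iteration (Lemma~\ref{lemma:single-iteration-of-main}), and termination after one pass with probability $1-1/d$ (Lemma~\ref{lemma:a-single-iteration}). Your first two pieces are essentially the paper's own arguments. The correctness step --- weak duality plus the observation that $\sum_{i\in S}X_i < d^4$ cannot overturn an integer gap of $1$ in $\sum_{i\in S}c_i$ after the $d^4$-scaling --- is exactly the contradiction in Lemma~\ref{lemma:correctness}. The per-iteration bound via convexity of $\lambda\mapsto L(\lambda)$, a search over an integer window of width $2^{O(\gamma+\log d)}$, and an $O(d)$ dynamic program per evaluation is the paper's route through $\LPdeltadual_\lambda$ and ternary search (your binary search on the subgradient sign is an interchangeable variant). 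Your breakpoint-denominator argument for why no breakpoint of $L$ lies in $(\hlambda-\tfrac{1}{d+1},\hlambda)$ is a clean and correct substitute for the integrality computation in the paper's proof of Lemma~\ref{lemma:from-dual-to-greedy-on-primal}, and it correctly yields $|\hS|\ge k$ deterministically.

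The genuine gap is that the entire probabilistic content of the theorem is granted rather than proved. You write ``Granting this lemma, the repeat-loop terminates on its first pass,'' but that lemma --- that after the perturbation by $X$ there exists, with probability at least $1-1/d$, a multiplier at which the relaxation admits an optimal support of size exactly $k$, and that $\hlambda$ is such a multiplier --- is precisely Lemma~\ref{lemma:a-single-iteration}, the most intricate part of the paper. The paper proves it by passing to the dual: it defines active constraints of $\dualgreedy(\tc,\lambda)$, shows that exactly $k$ active constraints at $\hlambda$ forces $|\hS|=k$ at $\hlambda-\tfrac{1}{d+1}$ (Lemma~\ref{lemma:from-dual-to-greedy-on-primal}), and then runs a union bound showing that with probability $1-1/d$ no two dual constraints become tight at the same value of $w_0$, so the active-constraint count decreases through every integer between $0$ and its maximum and in particular hits $k$. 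None of this is replaced by an argument in your proposal. Moreover, the one mechanism you do gesture at --- that ``the $d^4$-scaling spreads the breakpoints apart so that the slope-zero region, when it exists, is wide enough to contain an integer multiplier'' --- is unsubstantiated and is not how the paper proceeds: after adding $X$ the numerators of the breakpoints are no longer multiples of $d^4$, so consecutive breakpoints of $L_{\tc}$ need only be $\Omega(1/d^2)$ apart, and scaling alone does not place an integer strictly to the right of the left endpoint of the zero-slope piece. The paper instead gets an integer minimizer from total unimodularity of the LP (Corollary~\ref{corollary:y0-integral-optimal}) and handles the positioning of $\hlambda$ inside the flat region through the active-constraints analysis in Appendix~\ref{app:perturbation-and-sparsity}. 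Without supplying a proof of this probabilistic/structural step, the proposal establishes only that $\mainalgorithm$ is correct \emph{if} it halts and that each pass is cheap, not the claimed $1-1/d$ running-time guarantee.
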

Combined with the framework of \cite{BCDH10}, we get the following.
\begin{corollary}
Let $y$, $X$, $\thetastar$, and $e$ be as in the sparse linear model in Equation \eqref{eq:sparselinear}.
We assume that $\supp(\thetastar) \in \Msupports_{k,\Delta}$ and that $X$ satisfies the model-RIP for $\Msupports_{k,\Delta}$.
There is an algorithm that for every $y$ and $e$ returns an estimate $\thetahat$ such that
\[
    \norm{\thetahat - \thetastar}_2 \; \leq \; C \norm{e}_2 \; .
\]
Moreover, the algorithm runs in time $\Otilde(T_X + d)$, where $T_X$ is the time of multiplying the matrices $X$ and $X^T$ by a vector.
\end{corollary}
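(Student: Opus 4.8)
The plan is to invoke the model-based compressive sensing framework of \cite{BCDH10}, using $\mainalgorithm$ as the required projection oracle. Recall that this framework instantiates a model-based variant of an iterative recovery algorithm (e.g., model-based CoSaMP): each iteration forms the signal proxy $X^T(y - X\thetahat)$, identifies a candidate support by projecting this proxy onto the model, solves a least-squares problem restricted to an $O(k)$-sized support, and projects the result back onto $\Msupports_{k,\Delta}$. Under the stated model-RIP assumption (of the appropriate constant order) the per-iteration estimation error contracts by a constant factor, so after $O(\log(\norm{\thetastar}_2 / \norm{e}_2))$ iterations the algorithm outputs $\thetahat$ with $\norm{\thetahat - \thetastar}_2 \le C\norm{e}_2$. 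The only model-specific ingredient is the projection oracle solving Problem~\ref{prob:proj}, which Theorem~\ref{thm:introrandomized} supplies in $\Otilde(d)$ time.

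First I would verify that every projection the framework invokes can be realized by $\mainalgorithm$. The back-projection step is exactly Problem~\ref{prob:proj} on $\Msupports_{k,\Delta}$, handled directly by Theorem~\ref{thm:introrandomized}. The support-identification step nominally requires a projection onto the constant-order sum of the model; since $\mainalgorithm$ accepts an arbitrary sparsity budget, I would either run it with an enlarged budget or appeal to the approximation-tolerant (head/tail) refinement of the framework, which only needs approximate projections onto the model itself. In each case the oracle call runs in $\Otilde(d)$ time.

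Next I would account for the running time. Per iteration, the proxy computation costs $O(T_X)$; the restricted least-squares problem, solved approximately by a logarithmic number of conjugate-gradient steps whose convergence is guaranteed by the RIP conditioning on the $O(k)$-sized support, costs $\Otilde(T_X)$; the model projections cost $\Otilde(d)$; and the remaining vector arithmetic costs $O(d)$. Multiplying by the $O(\log(\norm{\thetastar}_2/\norm{e}_2))$ iteration count and folding logarithmic factors into $\Otilde$ yields total time $\Otilde(T_X + d)$. Because $\mainalgorithm$ is a Las Vegas algorithm that runs in nearly-linear time with probability $1 - 1/d$ per call, I would union-bound over the $\Otilde(1)$ oracle calls (or equivalently bound the expected running time) to conclude that the time bound holds with high probability.

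The main obstacle is the sum-model projection: a union of two $\Delta$-separated supports need not itself be $\Delta$-separated, so the $B$-th order sum model is not equal to $\Msupports_{Bk,\Delta}$ and the identification step is not literally an instance of Problem~\ref{prob:proj}. Resolving this cleanly --- either by appealing to the head/tail version of the framework, which tolerates approximate projections onto the model alone, or by arguing that running $\mainalgorithm$ with a suitably enlarged sparsity budget provides a sufficient head approximation --- is the crux of the argument; once the correct projection interface is established, the remaining contraction and running-time bookkeeping follow from \cite{BCDH10} and Theorem~\ref{thm:introrandomized}.
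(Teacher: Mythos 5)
Your proposal takes essentially the same route as the paper, which offers no proof of this corollary beyond the single remark that it follows by ``combining'' Theorem~\ref{thm:introrandomized} with the framework of \cite{BCDH10}; you correctly identify $\mainalgorithm$ as the model-projection oracle, account for the per-iteration cost $\Otilde(T_X + d)$, and handle the Las Vegas running time. Your observation about the sum-model projection in the identification step is a genuine subtlety that the paper silently elides, and your proposed resolutions (the head/tail approximation-tolerant framework, or an enlarged sparsity budget) are the standard ways this is handled in \cite{hegdeLP,HIS15IT}, so your treatment is, if anything, more careful than the paper's.
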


The corollary shows that, up to logarithmic factors, the running time is dominated by $T_X + d$. This matches the time complexity of standard sparse recovery and shows that we can utilize separated sparsity without a significant increase in time complexity.
Many measurement matrices in compressive sensing enable fast multiplication with $X$ (e.g., a subsampled Fourier matrix), in which case the total running time becomes $\Otilde(d)$.
We validate these theoretical findings in Section~\ref{sec:experiments} by showing that LASSP runs significantly faster than the state of the art algorithm used for sparse recovery with separation constraints, while retaining the same accuracy of the recovered signal.

Our algorithm $\mainalgorithm$ is randomized. However, we also design a deterministic nearly-linear time algorithm and prove the following theorem.
For clarity of exposition, the statement of the deterministic algorithm and the proof of the theorem are deferred to Appendix~\ref{section:deterministic-algorithm}.

\begin{theorem}
\label{thm:introexact}
Let $c \in \R^d$ be the input vector and let $\gamma$ be as in Theorem~\ref{thm:introrandomized}.
Then there is an algorithm that computes a solution $\hc$ satisfying Equation~\eqref{eq:proj} and runs in time $O(d (\gamma + \log{k}) \, \log{d} \, \log \Delta)$.
\end{theorem}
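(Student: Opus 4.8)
The plan is to make $\mainalgorithm$ deterministic by replacing the random perturbation with an exact search for the optimal Lagrange multiplier, followed by a purely combinatorial step that repairs the cardinality of the recovered support. The central object is the relaxation $\projlagr(\lambda, c)$ and its value $g(\lambda) := \lambda k + \max_{S \in \Msupports_\Delta} \sum_{i \in S}(c_i - \lambda)$. For a fixed $\lambda$ this is an \emph{unconstrained} maximum-weight $\Delta$-separated subset problem, where including index $i$ contributes the (possibly negative) reward $c_i - \lambda$. First I would solve it by the natural dynamic program in which each index is either skipped or selected (the latter forcing a jump of $\Delta$), evaluating the recurrence with a sliding-window / range-maximum structure over windows of length $\Delta$. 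This yields one evaluation of $\projlagr$ in $O(d \log \Delta)$ time and, by also tracking ties, lets me read off both the smallest and the largest cardinality attained by a Lagrangian-optimal support.

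Next I would exploit the convexity that the Lagrangian framing buys us. Since $g$ is a maximum over $S \in \Msupports_\Delta$ of the affine functions $\lambda \mapsto \lambda(k - |S|) + \sum_{i \in S} c_i$, it is convex and piecewise linear, and the piece attaining the maximum near $\lambda$ has slope $k - |S|$; hence the optimal cardinality is monotonically non-increasing in $\lambda$, and the minimizer of $g$ is exactly the multiplier at which the optimal cardinality balances $k$. This monotonicity is what licenses a deterministic binary search (a $\solvedual$ step) over $\lambda$ for the critical multiplier $\hlambda = \argmin_{\lambda} g(\lambda)$, breaking ties toward the larger $\hlambda$ as in line~\ref{line:main-lambda}. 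Because we may assume $c \in \bbZ^d$ and the breakpoints of $g$ are differences of subset sums with denominators bounded by $k$, the candidate multipliers lie in a bounded range at a granularity needing only $O(\gamma + \log k)$ bits, so the search terminates after $O(\gamma + \log k)$ evaluations of $\projlagr$.

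The crux, and the step I expect to be the main obstacle, is that at $\hlambda$ the relaxation may admit \emph{no} optimal support of cardinality exactly $k$: as $\lambda$ passes $\hlambda$ the optimal cardinality can jump from some $k^+ > k$ (left slope negative) to some $k^- < k$ (right slope positive), skipping $k$ entirely — a direct manifestation of the non-convexity of $\Msupports_{k,\Delta}$. To close this gap deterministically I would invoke a $\distributesparsity$ procedure: starting from a Lagrangian-optimal support of size $k^+$ and one of size $k^-$ at the common multiplier $\hlambda$, I would show by an exchange argument that the cardinalities attained by Lagrangian-optimal supports form a contiguous set, and then constructively descend from $k^+$ to exactly $k$ while preserving the value $g(\hlambda)$. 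The linear order is essential here: removing or shifting a selected index perturbs only a bounded local window of the support, so local exchanges suffice and can be organized by a divide-and-conquer over the line, which I expect to contribute the remaining $\log d$ factor. Proving that such exchanges always keep the support $\Delta$-separated \emph{and} optimal, without getting stuck, is the delicate part of the argument.

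Finally I would certify global optimality by weak Lagrangian duality, exactly mirroring the correctness of $\mainalgorithm$ but without perturbation: any feasible support for Problem~\ref{prob:proj} of size exactly $k$ has Lagrangian value equal to $\sum_{i \in S} c_i$ (the term $\hlambda(k - |S|)$ vanishes) and is therefore bounded above by $g(\hlambda)$, so the size-$k$ support returned by $\distributesparsity$, which attains $g(\hlambda)$, is a global maximizer of \eqref{eq:proj-massaged} and hence solves \eqref{eq:proj}. Assembling the pieces — $O(\gamma + \log k)$ dual-search iterations, each solving $\projlagr$ in $O(d \log \Delta)$ time, together with the $O(\log d)$-overhead recovery via $\distributesparsity$ — yields the claimed running time $O(d (\gamma + \log k)\, \log d\, \log \Delta)$.
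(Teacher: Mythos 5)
Your overall architecture differs from the paper's: the paper never tries to repair the cardinality of a Lagrangian-optimal support at a single critical multiplier. Instead it runs a divide-and-conquer over the index set $[d]$: the subroutine $\deltarecovery$ pins down one selected index in a middle window of length $\Delta$ by re-solving the \emph{dual} on $O(\log\Delta)$ perturbed cost vectors (adding $1$ to all coefficients outside a candidate subinterval and testing whether the optimal dual value rises by exactly $k$), and $\distributesparsity$ then splits the sparsity budget between the two halves by counting active dual constraints at two consecutive integer values of $w_0$. The $\log d$ in the bound is the recursion depth over $[d]$, and the $\log\Delta$ is the binary search inside $\deltarecovery$ --- not, as in your accounting, an RMQ inside the $\projlagr$ solver (the recurrence $s_{i+1}=\max\{s_i,\, s_{i+1-\Delta}+\tc_{i+1}\}$ needs only $O(1)$ per step, so each $\projlagr$ evaluation is $O(d)$). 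Your dual search over integral $\lambda$ in the range $[-(k-1)c_{\max},\, c_{\max}]$ with $O(\gamma+\log k)$ iterations is sound and matches the paper (Corollary~\ref{corollary:y0-integral-optimal}, Lemmas~\ref{lemma:ystar-upper-bound} and~\ref{lemma:ystar-lower-bound}), and your final weak-duality certification is correct: by total unimodularity and strong duality an optimal solution of $\LPdeltaprimal$ of cardinality exactly $k$ \emph{is} an optimal solution of $\projlagr(\hlambda,c)$, so a cardinality-$k$ support attaining $g(\hlambda)$ is globally optimal.

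The genuine gap is exactly the step you flag as delicate and then do not carry out: given Lagrangian-optimal supports of cardinalities $k^+>k$ and $k^-<k$ at $\hlambda$, you assert that an exchange argument yields a Lagrangian-optimal, $\Delta$-separated support of cardinality exactly $k$, but you supply neither the contiguity proof nor the exchange rule. This is the entire difficulty of derandomizing $\mainalgorithm$ --- the paper's counterexample $c=(4,7,5,0,0,5,8,5)$, $\Delta=2$, $k=3$ exists precisely because the DP solver has no way to steer toward the right cardinality, and $\Msupports_{\Delta}$ is not a matroid, so no generic exchange property is available. A naive swap (delete an element of $S^+$, or replace a prefix of $S^+$ by a prefix of $S^-$) can violate $\Delta$-separation at the splice point or lose optimality, and ``local exchanges organized by divide-and-conquer'' is not an algorithm until you state the invariant that prevents it from getting stuck; the $O(\log d)$ overhead you charge to this step is likewise unjustified without it. A correct proof along your lines would need something like the prefix-optimality structure of the DP table (or the concave Monge property mentioned in the paper's remark) to show that sweeping a splice point between the minimum- and maximum-cardinality optimal supports changes the cardinality by at most one per step while preserving feasibility and value. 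As written, the crux of Theorem~\ref{thm:introexact} is assumed rather than proven.
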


\paragraph{Remark.} The $\Delta$-separated sparsity projection can be reduced to the problem of finding a minimum-weight path of length $k$ on an edge-weighted directed graph. After this work was done, it was pointed to us by Arturs Backurs and Christos Tzamos that, by properly designing the graph, the edge-weights of the corresponding directed graph satisfy the concave Monge property. This further implies that the separated-sparsity problem can be solved in nearly-linear time~\cite{aggarwal1994finding}. However, the algorithm presented in~\cite{aggarwal1994finding} is more complex than the algorithm we provide. The goal of our work was to design a simple, and yet very efficient, algorithm by applying minor but crucial modifications to a widely popular approach, i.e. to Lagrangian relaxation.

\subsection{Further results}

In Appendix~\ref{sec:dp} we present a dynamic programming approach that for a specific, but also natural, family of instances solves the separated sparsity problem in even linear time.

We also consider a natural extension to the 2D-variant of the separated sparsity projection problem and show that it is NP-hard in Appendix~\ref{app:twod}.
Moreover, in Appendix~\ref{app:blocks}, we extend our model to allow for blocks of separated variables and show that our algorithms for $\Msupports_{k,\Delta}$ also applies to the more general variant.
Finally, separated sparsity can be used to model signals in which a longer pattern is repeated multiple times so that any two patterns are at least $\Delta$ apart.
This model is called \emph{disjoint pulse streams}~\cite{HB11}.
Again, the algorithmic core remains the same and algorithms for $\Msupports_{k,\Delta}$ can also be used for this generalization.


\subsection{Additional related work}

The papers \cite{hegdeLP,FMN15} are closely related to our work.
The paper~\cite{hegdeLP} proposed the separated sparsity model, provided a sample complexity upper bound, and gave an LP-based model-projection algorithm. However, they resorted to a black-box approach for solving the LP, that lead to a fairly prohibitive $O(d^{3.5})$ time complexity.
Recently, \cite{FMN15} provided a faster dynamic program for this problem with a time complexity of $O(d^2)$ and also showed a sample complexity lower bound.
The algorithmic aspect of these papers is the main difference from our work: we exploit structure in both the primal and dual formulations of the LP and give an algorithm that {\em provably} runs in \emph{nearly-linear} time.

Beside the papers addressing the core algorithmic question of projecting onto separated sparse vectors, there is much of work utilizing the sparsity model for applications in neural signal processing \cite{DDJB10,HB11,DSRB13} and recovery with coherent dictionaries \cite{DB13,Needham15}.
In the latter application, the separated sparsity constraint enforces that the signal representation only consists of incoherent dictionary atoms.
We expect that our algorithmic techniques will also lead to improvements in the context of these applications.

In addition to separated sparsity, a large body of work on structured sparsity has emerged over the past few years. We refer the reader to the surveys \cite{DE11,BJMO12b,Wain14,HIS15beatcs} for an overview. The line of work most relevant to our paper is the \emph{model-based compressive sensing} framework introduced in \cite{BCDH10}, which is also the starting point for \cite{hegdeLP,FMN15}. While the framework provides a general recovery scheme based on a model-projection oracle satisfying Equation \eqref{eq:proj}, it does not provide any guidance on how to design such oracles for a specific sparsity model. We address precisely this problem for separated sparsity with our nearly-linear time algorithm.

Recently, two papers have proposed a fairly general framework for deriving model-projection oracles via \emph{graph sparsity}, i.e., sparsity structures that can be defined through connected components in a graph on the signal coefficients \cite{HZM11,HIS15icml}.
This framework generalizes several previously studied sparsity models such as block sparsity, tree sparsity, and cluster sparsity.
Moreover, the paper \cite{HIS15icml} gives projections that run in nearly-linear time.
Interestingly, these tools do not apply to the separated sparsity model we study in our paper.
Intuitively, graph sparsity captures structures in which the non-zero coefficients are \emph{clustered together}, while the separated sparsity model achieves a reduction in sample complexity for the opposite reason: the non-zero coefficients are \emph{far apart}.
Moreover, the algorithms in \cite{HIS15icml} are \emph{approximate} and project into a sparsity model with a relaxed sparsity constraint.
As we explain in Appendix~\ref{sec:noapprox}, the separated sparsity model requires more careful control over the output sparsity in order to achieve a meaningful sample complexity improvement over ``standard'' sparsity.
We circumvent this issue by providing an \emph{exact} projection onto the separated sparsity model.
\section{Proof of correctness and the roadmap}
\label{sec:structural-results}

We begin our analysis by proving that $\mainalgorithm$ returns a correct result \emph{if the algorithm terminates}.
As we will see later, establishing termination is the crucial part of the analysis.
Nevertheless, the following lemma is a useful warm-up for understanding how the different pieces of our algorithm fit together.
\begin{lemma}\label{lemma:correctness}
    When $\mainalgorithm$ terminates, it outputs a support $\hS$ such that $x$ restricted to $\hS$ is a solution to Problem~\ref{prob:proj}.
\end{lemma}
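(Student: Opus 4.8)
The plan is to show that upon termination the returned support $\hS$ is both \emph{feasible} and \emph{optimal} for the reformulated objective \eqref{eq:proj-massaged}; by the equivalence noted in Section~\ref{sec:contributions} (taking $c_i = x_i^2$ and setting $x'$ equal to $x$ on the support and zero elsewhere), this immediately yields a solution to Problem~\ref{prob:proj}. Feasibility is free from the stopping criterion: the loop exits only when $|\hS| = k$, and $\hS \in \projlagr(\hlambda - \tfrac{1}{d+1}, \tc) \subseteq \Msupports_\Delta$ by construction, so $\hS \in \Msupports_{k,\Delta}$. The remaining work is optimality, which I would split into two independent claims: optimality of $\hS$ for the perturbed weights $\tc$, and the transfer of optimality from $\tc$ back to the original weights $c$.

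For the first claim I would use a one-line Lagrangian (weak-duality / complementary-slackness) argument. Since $\hS$ maximizes $\sum_{i \in S}\tc_i + (\hlambda - \tfrac{1}{d+1})(k - |S|)$ over all of $\Msupports_\Delta$, every competitor $S' \in \Msupports_{k,\Delta} \subseteq \Msupports_\Delta$ satisfies
\[
\sum_{i\in\hS}\tc_i + \left(\hlambda - \tfrac{1}{d+1}\right)(k - |\hS|) \;\ge\; \sum_{i\in S'}\tc_i + \left(\hlambda - \tfrac{1}{d+1}\right)(k - |S'|).
\]
Because $|\hS| = k$ (termination) and $|S'| = k$ (feasibility), both penalty terms vanish and the inequality collapses to $\sum_{i\in\hS}\tc_i \ge \sum_{i\in S'}\tc_i$. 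Hence $\hS \in \argmax_{S\in\Msupports_{k,\Delta}}\sum_{i\in S}\tc_i$. I would emphasize that this step is insensitive to the exact multiplier: for \emph{any} $\lambda$, a set in $\projlagr(\lambda,\tc)$ of cardinality $k$ is automatically $\tc$-optimal among size-$k$ feasible sets, so the shift by $\tfrac{1}{d+1}$ matters only for termination, not for correctness.

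For the second claim I would exploit the scale separation built into the perturbation $\tc = d^4 c + X$. For any feasible $S$ we have $\sum_{i\in S}\tc_i = d^4\sum_{i\in S} c_i + \sum_{i\in S} X_i$, and since $0 \le X_i \le d^3 - 1$ and $k = |S| \le d$, the perturbation mass obeys $0 \le \sum_{i\in S} X_i \le d(d^3-1) = d^4 - d < d^4$. Consequently, if $\hS$ were strictly worse than some $S'$ under $c$, then by integrality of $c$ (cf.\ the remark reducing to $c \in \bbZ^d$) we would have $\sum_{i\in S'} c_i - \sum_{i\in\hS} c_i \ge 1$, giving $S'$ a $\tc$-advantage of at least $d^4$ from the scaled term — an advantage the perturbation, whose total magnitude is strictly below $d^4$, cannot erase. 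This contradicts the $\tc$-optimality just established, so $\hS$ is $c$-optimal as well.

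Combining the two claims, $\hS$ is a feasible, $c$-optimal member of $\Msupports_{k,\Delta}$, which solves \eqref{eq:proj-massaged} and hence Problem~\ref{prob:proj}. I expect no serious obstacle here, as this is a warm-up lemma; the step requiring the most care is the perturbation-dominance bound, where one must check the \emph{strict} inequality $\sum_{i\in S} X_i < d^4$ (via $k \le d$ and $X_i \le d^3-1$) and invoke integrality so that any genuine $c$-improvement is quantized at $1$ and thereby amplified past the perturbation scale.
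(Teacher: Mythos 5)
Your proof is correct and follows essentially the same route as the paper's: feasibility from the loop condition, the observation that the Lagrangian penalty vanishes for all size-$k$ competitors, and the scale-separation argument that an integral $c$-gap of at least $1$ becomes a $d^4$ gap that the perturbation $\sum_{i\in\hS} X_i \le d(d^3-1) < d^4$ cannot overcome. The paper folds your two optimality claims into a single chain of inequalities ending in a contradiction with Line~\ref{line:optimal-hc}, but the ingredients are identical, and your explicit check of the strict bound $\sum_{i \in S} X_i < d^4$ is a point the paper leaves implicit.
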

\begin{proof}
    As we have argued above, the problems in Equations~\eqref{eq:proj} and ~\eqref{eq:proj-massaged} are equivalent.
    So, we show that $\mainalgorithm$ outputs a solution to the problem in Equation~\eqref{eq:proj-massaged}.
    
    Let $\hS$ be the set returned by $\mainalgorithm$. By the condition of the loop in Line~\ref{line:main-while}, we have $|\hS| = k$. So, as $\hS \in \Msupports_{\Delta}$ (see the definition of $\projlagr$), we have $\hS \in \Msupports_{k, \Delta}$.
    
    Now, towards a contradiction, assume that support $\hS$ is not a solution to the problem in Equation~\eqref{eq:proj-massaged}, while support $\Sstar$ is. This implies that $\sum_{i \in \Sstar} c_i > \sum_{i \in \hS} c_i$. Now since, without loss of generality, we assumed that $c \in \bbZ^d$, the last inequality implies $\sum_{i \in \Sstar} c_i \ge 1 + \sum_{i \in \hS} c_i$, and hence
    \begin{equation}\label{eq:cstar-and-tc}
        \sum_{i \in \Sstar} d^4 c_i \ge d^4 + \sum_{i \in \hS} d^4 c_i \; .
    \end{equation}
    Observe that for any support $S \in \Msupports_{k, \Delta}$, the term $\lambda \left(k - |S| \right)$ equals zero, and recall that $\hS, \Sstar \in \Msupports_{k, \Delta}$. Furthermore, by the definition of the random vector $X$ in line~\ref{line:main-X} and from \eqref{eq:cstar-and-tc}
    \begin{eqnarray*}
        \sum_{i \in \Sstar} \left(d^4 c_i + X_i \right)& \ge & \sum_{i \in \Sstar} d^4 c_i \ge d^4 + \sum_{i \in \hS} d^4 c_i \\
        & > & \sum_{i \in \hS} \left( d^4 c_i + X_i \right).
    \end{eqnarray*}
    Since $\Sstar \in \Msupports_{\Delta}$, this chain of inequalities contradicts Line~\ref{line:optimal-hc} of $\mainalgorithm$ which chooses $\hS$ as an optimal solution to $\projlagr\left(\hlambda - 1/(d + 1), d^4 c + X \right)$. This further implies that $\hS$ is a solution to Problem \ref{prob:proj}.
\end{proof}

\subsection{Roadmap}
Lemma~\ref{lemma:correctness} shows that $\mainalgorithm$ outputs the right answer if it terminates.
But does $\mainalgorithm$ terminate on every input?
Answering this question is the most intricate part of this paper.
We split the proof in two main pieces.
The first part is Section~\ref{sec:duality-and-back}, where we provide an alternative view on the separated sparsity problem based on linear programming duality.
The duality view paves the way towards proving our main results.
In particular, we show that the subroutines in $\mainalgorithm$ can be implemented quickly.
\begin{lemma}\label{lemma:single-iteration-of-main}
    Single iteration of $\mainalgorithm$ can be implemented to run in time $O(d (\gamma + \log{d}))$.
\end{lemma}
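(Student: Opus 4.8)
The plan is to bound the cost of each of the four operations performed in a single pass of the main loop and show that each fits within the budget $O(d(\gamma + \log d))$. The first two, drawing $X$ (Line~\ref{line:main-X}) and forming $\tc = d^4 c + X$ (Line~\ref{line:main-perturb}), are immediate: we sample $d$ integers of $O(\log d)$ bits, and each $\tc_i$ is obtained from a $\gamma$-bit $c_i$ by one multiplication by $d^4$ and one addition, producing a number of $O(\gamma + \log d)$ bits; hence both lines cost $O(d(\gamma + \log d))$. All the real work is in evaluating $\projlagr$ in Lines~\ref{line:main-lambda} and~\ref{line:optimal-hc}, so I would reduce the lemma to two claims: (i) for a fixed $\lambda$, one can compute the optimal value of $\projlagr(\lambda,\tc)$ together with the size of an optimal support in $O(d)$ arithmetic operations on $O(\gamma + \log d)$-bit integers; and (ii) the minimizing $\hlambda$ can be located using only $O(\gamma + \log d)$ such evaluations.

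For (i), I would note that for fixed $\lambda$ the additive constant $\lambda k$ does not affect the maximizer, so $\projlagr(\lambda,\tc)$ amounts to finding a support $S \in \Msupports_\Delta$ maximizing $\sum_{i \in S}(\tc_i - \lambda)$, i.e.\ a maximum-weight $\Delta$-separated subset under the effective weights $w_i = \tc_i - \lambda$. This is solved by the one-dimensional recurrence $f(i) = \max\{f(i-1),\, f(\max(0,i-\Delta)) + w_i\}$, where $f(i)$ is the best value attainable using indices in $\{1,\dots,i\}$; a single left-to-right sweep with back-pointers recovers an optimal $S$ and its cardinality in $O(d)$ steps. Since $0 \le \tc_i \le d^4 2^\gamma + d^3$ and $\lambda$ always lies in the same range, every intermediate quantity is an $O(\gamma + \log d)$-bit integer, so in the word-RAM model each update is $O(1)$ and the sweep costs $O(d)$.

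For (ii), I would exploit convexity of the Lagrangian. Writing $g(\lambda)$ for the optimal value of $\projlagr(\lambda,\tc)$, the function $g(\lambda) = \lambda k + \max_{S}\sum_{i\in S}(\tc_i - \lambda)$ is a maximum of affine functions of $\lambda$, hence convex and piecewise linear, with subgradient $k - |S^\star(\lambda)|$ at $\lambda$, where $S^\star(\lambda)$ is an optimal support. Because $\tc_i \ge 0$, at $\lambda = 0$ every weight is non-negative and the sweep can return a support of maximum feasible cardinality, which is at least $k$, so $\hlambda \ge 0$; and for $\lambda > \max_i \tc_i$ all weights are negative, forcing the empty support. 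Thus the integer minimizer $\hlambda$ lies in $[0, \max_i \tc_i]$, an interval of $O(\gamma + \log d)$ bits. As the optimal cardinality $|S^\star(\lambda)|$ is monotonically non-increasing in $\lambda$, I would binary search this interval, comparing $|S^\star(\lambda)|$ with $k$ at each step and moving $\lambda$ up or down accordingly (retaining the largest minimizer to match the tie-breaking rule in Line~\ref{line:main-lambda}); this uses $O(\gamma + \log d)$ evaluations of the $O(d)$-time sweep. Line~\ref{line:optimal-hc} is then a single extra sweep at $\lambda = \hlambda - 1/(d+1)$; clearing the denominator by running it with integer weights $(d+1)\tc_i - (d+1)\hlambda + 1$ keeps the arithmetic integral and of the same bit length. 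Summing, the iteration costs $O(d(\gamma + \log d))$.

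The step I expect to be the main obstacle is the monotonicity and tie handling underlying (ii): a maximum-weight $\Delta$-separated subset need not be unique, and distinct optima at the same $\lambda$ can have different cardinalities, which is exactly what would break a naive binary search. I would rely on the duality-based structural analysis promised in Section~\ref{sec:duality-and-back} — together with the perturbation $X$, whose role is precisely to regularize these ties — to ensure that the support size behaves monotonically in $\lambda$ and that the sweep can be made to return a support of a well-defined (say, maximal) cardinality among all optima. The running-time bound then follows directly from the operation count above.
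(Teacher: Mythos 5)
Your overall skeleton matches the paper's: both arguments reduce the cost of one iteration to $O(\gamma+\log d)$ evaluations of the fixed-$\lambda$ subproblem, each solved by the same $O(d)$ left-to-right dynamic program (this is exactly Lemma~\ref{lemma:solve-projlagr}), and your bit-length and range bounds for $\lambda$ are fine. The difference, and the genuine gap, is in the outer search. You binary-search on the sign of $k-|S^\star(\lambda)|$, which requires that the cardinality of the support returned by the sweep be monotone in $\lambda$; as you note yourself, distinct optima at the same $\lambda$ can have different cardinalities, and for an arbitrarily chosen optimal support no such monotonicity holds. Your proposed repair is to lean on the perturbation $X$, but that is misplaced: Lemma~\ref{lemma:single-iteration-of-main} is a deterministic per-iteration bound that must hold for every outcome of $X$ (including the low-probability draws where the perturbation fails to regularize anything), whereas the perturbation is used only in Lemma~\ref{lemma:a-single-iteration} to argue that the loop terminates after one iteration with high probability. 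Moreover, even at the true minimizer there may exist no optimal support of cardinality $k$ at all (the paper's own example $c=(4,7,5,0,0,5,8,5)$, $\Delta=2$, $k=3$), so a stopping rule based on hitting cardinality exactly $k$ is not available.

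The paper sidesteps all of this by searching on the \emph{objective value} rather than on a subgradient: by Lemma~\ref{lemma:dual-convex-in-lambda}, Corollary~\ref{corollary:primal-nok-projlagr}, and strong duality, the map $\lambda \mapsto \val{\LPdeltaprimalnok(\lambda)} = \val{\LPdeltadual_{\lambda}}$ is convex in $\lambda$, and this value is well defined independently of which optimal support the DP happens to return; ternary search over the integer range $\left[-(k-1)\max_i \left|\tc_i\right|,\ \max_i \left|\tc_i\right|\right]$ then locates $\hlambda$ (and the largest minimizer, for the tie-breaking in Line~\ref{line:main-lambda}) in $O(\gamma+\log d)$ evaluations with no tie regularization needed. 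If you insist on the subgradient binary search, you would additionally have to prove that the sweep can be made to return, say, the maximum-cardinality optimal support and that this cardinality is non-increasing in $\lambda$ --- which is essentially the active-constraints machinery of Section~\ref{sec:active-constraints} --- so the value-based ternary search is the cheaper and cleaner route to the stated bound.
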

The second part is Section~\ref{sec:active-constraints}, where we further study the duality view on separated sparsity.
We show that after perturbing the input instance in Line~\ref{line:main-perturb} of $\mainalgorithm$, the support obtained with a shifted $\hlambda$ in Line~\ref{line:optimal-hc} has cardinality $k$ with high probability.
\begin{restatable}{lemma}{lemmasingleiteration}
\label{lemma:a-single-iteration}
    Algorithm $\mainalgorithm$ runs only a single iteration with probability at least $1 - 1/d$.
\end{restatable}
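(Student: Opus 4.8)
The plan is to fix the randomness $X$ of a single iteration and to show that the support $\hS$ produced in Line~\ref{line:optimal-hc} has size exactly $k$ with probability at least $1-1/d$; since the stopping test of Line~\ref{line:main-while} is precisely $|\hS|=k$ and each iteration draws fresh $X$, this yields the lemma. First I would reduce the event $|\hS|=k$ to a clean statement about the perturbed value curve $V_j := \max_{S\in\Msupports_{j,\Delta}}\sum_{i\in S}\tc_i$. The objective inside $\projlagr(\lambda,\tc)$ equals $\lambda k+\max_j\,(V_j-\lambda j)$, so $\projlagr$ selects a size $j$ on the upper concave envelope of $\{(j,V_j)\}$ whose supporting slope is $\lambda$; here I would invoke the LP-duality view of Section~\ref{sec:duality-and-back} to establish that $V_j$ is concave in $j$. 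Then $g(\lambda):=\lambda k+\max_j(V_j-\lambda j)$ is convex and piecewise linear, and its minimizer set is the envelope slope interval $[\alpha,\beta]$ at $j=k$, with $\alpha=\max_{j>k}\frac{V_j-V_k}{j-k}$ and $\beta=\min_{j<k}\frac{V_k-V_j}{k-j}$. The shift by $-\tfrac{1}{d+1}$ is the decisive device: since $\tc\in\bbZ^d$ and $\hlambda\in\bbZ$ the compared objectives of $\projlagr(\hlambda,\tc)$ are integers, while the extra term $|S|/(d+1)<1$ cannot overturn a strict integer gap, so $\projlagr(\hlambda-\tfrac{1}{d+1},\tc)$ returns a \emph{maximum-size} optimizer of $\projlagr(\hlambda,\tc)$. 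Hence $|\hS|=\max\{j:\ j\text{ optimal at }\hlambda\}$, and a short case analysis on the convex $g$ gives $|\hS|=k$ iff $\hlambda\in(\alpha,\beta]$. As $\hlambda$ is the \emph{largest} integer minimizer of $g$, this is equivalent to $(\alpha,\beta]\cap\bbZ\neq\emptyset$, reducing the lemma to showing that, with probability $1-1/d$, the envelope has an integer multiplier separating size $k$ from its neighbours.

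Next I would localize where the randomness actually matters, reusing the domination phenomenon behind Lemma~\ref{lemma:correctness}. Writing $C^\star_j=\max_{|S|=j}\sum_{i\in S}c_i$, the bound $\sum_{i\in S}X_i<|S|\,d^3\le d^4$ forces every maximizer defining $V_j$ to attain $C^\star_j$, so $V_j=d^4 C^\star_j+\tilde Y_j$, where $\tilde Y_j$ is the best value of $\sum_{i\in S}X_i$ among the size-$j$ optima of $c$. The integer shifts $d^4 C^\star_j$ move the envelope slopes by integers, so whether $(\alpha,\beta]$ contains an integer depends \emph{only} on the noise curve $\tilde Y_j$. The whole question thus becomes an anti-concentration statement about the random sums $\sum_{i\in S}X_i$: I would aim to show that with probability $1-1/d$ the curve $\tilde Y_j$ is strictly concave at $k$ with an integer-separated vertex, which forces $(\alpha,\beta]\cap\bbZ\neq\emptyset$.

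The probabilistic core is where I expect the real difficulty, and I would attack it with an isolation-lemma / deferred-decision argument in the spirit of Mulmuley--Vazirani--Vazirani. The obstacle is that the relevant optimal supports $S_{k-1},S_k,S_{k+1}$ are themselves functions of $X$, so one cannot union bound the bad relation over the exponentially many candidate supports. Instead I would condition on all coordinates $X_{j}$ with $j\neq i$ and view each of $\tilde Y_{k-1},\tilde Y_k,\tilde Y_{k+1}$ as a function of the single free coordinate $X_i$: each is a pointwise maximum of affine functions of slope $0$ or $1$ (according to whether an optimal support uses $i$), hence monotone and piecewise linear. I would then argue that the degenerate configuration destroying the integer multiplier holds for only $O(d)$ values of $X_i$; since $X_i$ is uniform on a set of size $d^3$, coordinate $i$ is ``bad'' with probability $O(1/d^2)$, and a union bound over the $d$ coordinates gives total failure probability $O(1/d)$, matching the claimed $1-1/d$. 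The delicate step, on which I would spend the most care, is that the quantity $2\tilde Y_k-\tilde Y_{k-1}-\tilde Y_{k+1}$ is a difference of convex piecewise-linear functions of $X_i$ and is therefore \emph{not} monotone; to bound the number of values of $X_i$ at which it degenerates I would lean again on the explicit envelope description from Section~\ref{sec:duality-and-back} to control its sign changes.
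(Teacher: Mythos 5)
Your first two paragraphs are a correct and genuinely different reduction from the one in the paper: you work entirely on the primal side, showing that the algorithm terminates in one iteration iff the slope interval $(\alpha,\beta]=(V_{k+1}-V_k,\;V_k-V_{k-1}]$ of the (concave) perturbed value curve $V_j$ contains an integer, which by integrality of $\tc$ is equivalent to strict concavity $2V_k>V_{k-1}+V_{k+1}$. The paper instead passes to the dual and phrases the same condition via ``active constraints'' of $\LPdeltadual_\lambda$ (its Lemmas~\ref{lemma:active-sets-optimality-simplified} and~\ref{lemma:from-dual-to-greedy-on-primal}); your handling of the $-\tfrac{1}{d+1}$ shift and of the tie-breaking rule for $\hlambda$ is sound and arguably cleaner. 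The decomposition $V_j=d^4C^\star_j+\tilde Y_j$ is also correct and mirrors the role the scaling by $d^4$ plays in Lemma~\ref{lemma:correctness}.

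The gap is in the probabilistic core, and you have correctly located it yourself without closing it. The event you must rule out is $2\tilde Y_k=\tilde Y_{k-1}+\tilde Y_{k+1}-d^4D$, where each $\tilde Y_j$ is a maximum over the (exponentially many) size-$j$ optima of $c$. Your plan --- condition on $X_{-i}$ and bound the number of bad values of $X_i$ --- does not go through as stated, for two reasons. First, with $X_{-i}$ fixed, each $\tilde Y_j(X_i)$ has the form $\max(a_j,\,X_i+b_j)$, so $2\tilde Y_k-\tilde Y_{k-1}-\tilde Y_{k+1}$ has pieces of slope~$0$; on such a piece the degeneracy either holds for \emph{every} value of $X_i$ in the piece or for none, so there is no single coordinate $i$ for which the bad set is guaranteed to be small, and the union bound over $i$ does not cover the bad event. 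Second, the MVV-style fix (pick $i$ in the symmetric difference of the competing optima, so that the degeneracy pins $X_i$ to one value) is circular here because the realizing supports $S_{k-1},S_k,S_{k+1}$ are themselves functions of $X_i$, and unlike the two-set uniqueness statement of the isolation lemma, your statement compares optima of three \emph{different} cardinalities; you acknowledge this (``the delicate step\dots'') but give no argument that controls the sign changes or the flat pieces. The paper sidesteps exactly this difficulty by running the randomness through the dual greedy algorithm $\dualgreedy$: there, as $w_0$ decreases, the \emph{first} value of $w_0$ at which constraint $j$ becomes tight is an affine function of $X_j$ with slope $1$, because $w_j=0$ (hence $X_j$ influences nothing else) until that moment. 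This makes ``two constraints become tight at the same $w_0$'' a clean collision event with at most $d-1$ bad values of $X_j$ out of $d^3$ per coordinate, and the union bound gives $1/d$. To complete your primal route you would need an analogous structural handle on $\tilde Y_{k-1},\tilde Y_k,\tilde Y_{k+1}$ jointly; as written, the proof is not complete.
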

Together with Lemma~\ref{lemma:correctness}, these results yield Theorem~\ref{thm:introrandomized}.
\vspace{-5pt}
\section{Part I -- To duality and further}
\label{sec:duality-and-back}
\vspace{-5pt}
We now analyze the running time of a single iteration of $\mainalgorithm$.
We provide a series of equivalences, as illustrated in Figure~\ref{fig:reduction-diagram}, in order to exploit structure in the separated sparsity problem.
More precisely, we start with a linear programming (LP) view on separated sparsity.
It has already been shown that this viewpoint yields a totally unimodular LP \cite{hegdeLP}, which implies that the LP has an integral solution.
Hence solving the LP solves the separated sparsity projection in Problem \ref{prob:proj}.
However, prior work did not utilize this connection to reason about the power of the Lagrangian relaxation approach to the problem.

We begin our detailed analysis of the LP with the dual program $\LPdeltadual$.
By strong duality, the value of $\LPdeltadual$ equals the value of the primal LP.
Then we cast $\LPdeltadual$ as minimization of LP $\LPdeltadual_{\lambda}$ over $\lambda$.
This reduction will play the central role in our analysis and connect $\LPdeltadual_{\lambda}$ to Line~\ref{line:main-lambda} of $\mainalgorithm$.
\vspace{-5pt}
\subsection{The LP Perspective}
\label{sec:dual}
\vspace{-5pt}
We start with a linear programming view on problem~\eqref{eq:proj-massaged} by considering its LP relaxation denoted by $\LPdeltaprimal$:
\begin{alignat*}{4}
	\text{maximize}   &\qquad & c^T u \\
	\text{subject to} &\qquad & \sum_{i = 1}^d{u_i} & = k & \qquad & \\
					&\qquad & \sum_{j = i}^{\min\{i + \Delta - 1, n\}}{u_j} & \le 1 & \qquad & \forall i = 1 \ldots d \\
					&\qquad & u_i  & \ge 0 &\qquad & \forall i = 1 \ldots d
\end{alignat*}
Given an LP $\mathcal{A}$ we use $\val{\mathcal{A}}$ to denote its optimal objective value. As already noted, $\LPdeltaprimal$ is totally unimodular and thus there always exists an optimal solution to it that is integral \cite{nemhauser1988integer}. For completeness, we provide the proof of total unimodularity in Appendix~\ref{app:dual}.
\\
\remark This implies that a solution to $\LPdeltaprimal$ can be used to obtain a solution to the separated sparsity model projection: if $\ustar$ is an optimal solution to $\LPdeltaprimal$, we can derive the optimal support of \eqref{eq:proj-massaged} from the non-zero entries among the LP variables $\ustar_1, \ldots, \ustar_d$.
It is unclear, however, if there is a way to directly solve this LP fast, e.g. it is not known how to solve $\LPdeltaprimal$ directly in time matching the running time of our algorithm $\mainalgorithm$.
\begin{wrapfigure}[14]{R}{0.5\textwidth}
    \begin{center}
        \centerline{\includegraphics[width=0.45\textwidth]{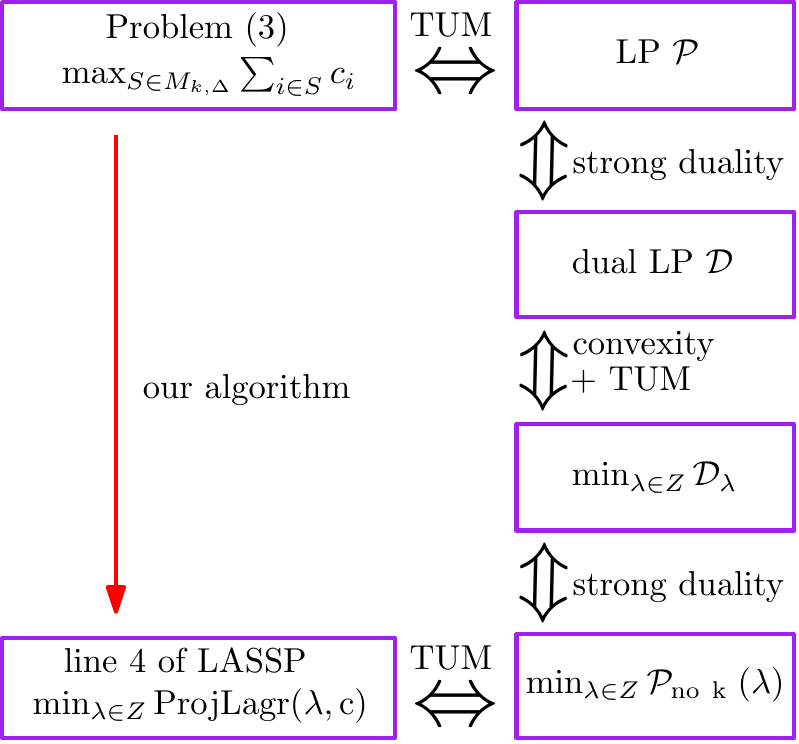}}
        \caption{The values of the problems in the diagram are equal. Every equivalence relation carries structural information that we utilize in our analysis.
        }
        \label{fig:reduction-diagram}
    \end{center}
\end{wrapfigure}

A key step in our approach is understanding the separated sparsity structure from the dual point of view.
The dual LP to $\LPdeltaprimal$, denoted by $\LPdeltadual$, is given as follows
\begin{alignat*}{4}
	\text{minimize}   &\qquad & w_0 k + \sum_{i = 1}^{d} w_i \\
	\text{subject to} &\qquad & w_0 + \sum_{\substack{j\ :\ j \ge 1 \text{ and } \\ j \le i \le j + \Delta - 1}}{w_j} & \ge c_i & \qquad & \forall i = 1 \ldots d \\
					&\qquad & w_i  & \ge 0 &\qquad & \forall i = 1 \ldots d \\
					&\qquad & w_0  & \in \mathbb{R} &\qquad &
\end{alignat*}
Then, as $\LPdeltaprimal$ is integral, so is $\LPdeltadual$.
\begin{restatable}{corollary}{integraloptimal}
\label{corollary:y0-integral-optimal}
    For an integer $k$ and a vector of integers $c$, there exists $\hw$ such that $\hw$ is an optimum of $\LPdeltadual$ and $\hw_0 \in \bbZ$.
\end{restatable}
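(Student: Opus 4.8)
The plan is to deduce integrality of an optimal dual solution---and of $\hw_0$ in particular---from the total unimodularity of the constraint matrix of $\LPdeltaprimal$, which is established in Appendix~\ref{app:dual}. First I would recall that total unimodularity is preserved under transposition, so the constraint matrix $B$ of $\LPdeltadual$ (with columns indexed by $w_0, w_1, \ldots, w_d$) is again totally unimodular. Note that the column corresponding to the equality multiplier $w_0$ is exactly the all-ones vector $\allones$, since $w_0$ appears with coefficient $1$ in every dual constraint.

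The one genuine subtlety is that $w_0$ is a \emph{free} variable, so the textbook Hoffman--Kruskal statement (integral vertices for $\{w \ge 0 : Bw \ge c\}$ with $B$ totally unimodular and $c$ integral) does not apply verbatim. I would resolve this in the standard way by splitting $w_0 = a - b$ with $a, b \ge 0$. Substituting into the dual replaces the $w_0$-column $\allones$ by the two columns $\allones$ (for $a$) and $-\allones$ (for $b$); since duplicating a column and negating it both preserve total unimodularity, the augmented matrix is still totally unimodular, all of its variables are nonnegative, and its right-hand side $c$ is integral (as $c \in \bbZ^d$). By Hoffman--Kruskal the augmented polyhedron is therefore integral. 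Alternatively, one may invoke the general characterization that a totally unimodular matrix yields an integral polyhedron under arbitrary integral (possibly infinite) lower and upper variable bounds, which handles the free $w_0$ directly.

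Next I would check that the dual optimum is attained at a vertex. By strong duality $\val{\LPdeltadual} = \val{\LPdeltaprimal}$, and since $\LPdeltaprimal$ is integral with integral data this common value is finite. Moreover the dual polyhedron is pointed: a lineality direction $z$ would need $z_i = 0$ for $i \ge 1$ (by the one-sided bounds $w_i \ge 0$), and then $Bz = z_0 \allones = 0$ forces $z_0 = 0$, so the lineality space is trivial. A pointed polyhedron whose objective attains a finite optimum attains it at a vertex.

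Combining the two facts, the augmented polyhedron is both integral and pointed, so an optimal vertex has $a, b \in \bbZ$, whence $\hw_0 = a - b \in \bbZ$ with the objective value unchanged by the splitting. This produces an optimum $\hw$ of $\LPdeltadual$ with $\hw_0 \in \bbZ$, as claimed. I expect the free-variable handling---performing the split and verifying that total unimodularity survives it---to be the only real obstacle; once total unimodularity of $B$ is in hand, the remaining steps are routine bookkeeping.
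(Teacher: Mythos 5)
Your proposal is correct and follows essentially the same route as the paper: total unimodularity of the primal constraint matrix (Lemma~\ref{lemma:primal-is-TUM}) transfers to the dual under transposition, which yields integrality of an optimal dual solution and hence $\hw_0 \in \bbZ$. The only difference is that you carefully handle the free variable $w_0$ (via the split $w_0 = a - b$ and the pointedness/vertex argument), a detail the paper's one-line proof simply delegates to the citation of Schrijver.
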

We also define $\LPdeltadual_{\lambda}$ as the LP $\LPdeltadual$ in which the variable $w_0$ is set to $\lambda$. Now, it is not hard to show the following lemma, whose proof is deferred to Lemma~\ref{lemma:LPdeltadual-is-convex} in Appendix~\ref{app:dual}.
\begin{lemma}\label{lemma:dual-convex-in-lambda}
    $\LPdeltadual_{\lambda}$ is convex with respect to $\lambda$.
\end{lemma}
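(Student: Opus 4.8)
The plan is to view $\val(\LPdeltadual_{\lambda})$ as the optimal-value function of a linear program in which $\lambda$ enters both the objective and the right-hand side in an affine way, and then to invoke the standard fact that the optimal value of a minimization LP is convex in its right-hand side, together with closure of convexity under affine substitution.

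First I would rewrite $\LPdeltadual_{\lambda}$ explicitly. Fixing $w_0 = \lambda$ turns the $i$-th dual constraint into $\sum_{j\,:\,j \le i \le j + \Delta - 1} w_j \ge c_i - \lambda$ and turns the objective into $\lambda k + \sum_{i=1}^d w_i$. Collecting the window sums into a $0/1$ matrix $A$ with $A_{ij} = 1$ iff $j \le i \le j + \Delta - 1$, the program becomes
\[
\val(\LPdeltadual_{\lambda}) \;=\; \lambda k \,+\, \min_{w \ge 0,\ A w \ge c - \lambda \allones} \allones^T w,
\]
where $\allones$ denotes the all-ones vector. Writing $g(\lambda)$ for the minimization term, it suffices to prove that $g$ is convex, since adding the linear term $\lambda k$ preserves convexity. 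Before arguing convexity I would check that $g(\lambda)$ is finite for every $\lambda$, so the value function is well defined: feasibility holds because every row of $A$ contains the diagonal entry $A_{ii} = 1$, so taking each $w_i$ large enough satisfies all constraints, and boundedness from below by $0$ is immediate from $w \ge 0$ and the nonnegative objective. Hence $g(\lambda) \in [0,\infty)$ for all $\lambda$.

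For the convexity itself I would use the classical averaging argument, first for a generic right-hand side and then compose with the affine map $\lambda \mapsto c - \lambda \allones$. Define $h(b) := \min_{w \ge 0,\ A w \ge b} \allones^T w$. Given $b_1, b_2$ and $\theta \in [0,1]$, let $w_1, w_2$ be optimal solutions for $b_1, b_2$ respectively. Then $w := \theta w_1 + (1-\theta) w_2 \ge 0$ and $A w = \theta A w_1 + (1-\theta) A w_2 \ge \theta b_1 + (1-\theta) b_2$, so $w$ is feasible for the right-hand side $\theta b_1 + (1-\theta) b_2$; evaluating the objective gives
\[
h\big(\theta b_1 + (1-\theta) b_2\big) \;\le\; \theta\, h(b_1) + (1-\theta)\, h(b_2).
\]
Thus $h$ is convex. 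Since $g(\lambda) = h(c - \lambda \allones)$ is the composition of the convex function $h$ with the affine map $\lambda \mapsto c - \lambda \allones$, it is convex, and therefore $\val(\LPdeltadual_{\lambda}) = \lambda k + g(\lambda)$ is convex in $\lambda$.

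I do not expect a serious obstacle here: the argument is the standard parametric-LP convexity fact. The only points requiring care are bookkeeping ones — confirming that once $w_0$ is fixed $\lambda$ appears affinely in the right-hand side (and not in the constraint matrix $A$), and verifying finiteness of the value function so that the averaging inequality is not vacuous. The same proof applies verbatim if one only wants convexity on the range of $\lambda$ actually explored by $\mainalgorithm$.
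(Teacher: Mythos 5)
Your proof is correct and is essentially the same argument as the paper's: the paper (Lemma~\ref{lemma:LPdeltadual-is-convex} in Appendix~\ref{app:dual}) averages two optimal dual solutions for two values of $w_0$, checks feasibility of the average constraint by constraint, and bounds the (linear) objective, which is exactly your averaging step for $h$. Your repackaging via convexity of the value function in the right-hand side composed with the affine map $\lambda \mapsto c - \lambda\allones$ (plus the explicit finiteness check and general $\theta$ rather than the midpoint) is a slightly cleaner bookkeeping of the identical idea, not a different route.
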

From the definition of $\LPdeltadual$, the following equality holds $\val{\LPdeltadual} = \min_{\lambda \in \bbR} \val{\LPdeltadual_{\lambda}}$. Furthermore, Corollary~\ref{corollary:y0-integral-optimal} implies that it is sufficient to consider $\lambda$ in $\bbZ$ only, i.e.
\begin{equation}\label{eq:dual-from-dual-lambda}
    \val{\LPdeltadual} = \min_{\lambda \in \bbZ} \val{\LPdeltadual_{\lambda}}.
\end{equation}
Now, Lemma~\ref{lemma:dual-convex-in-lambda} implies that we can obtain $\val{\LPdeltadual}$ by applying ternary search over $\lambda$ on function $\val{\LPdeltadual_{\lambda}}$.

\subsection{Implementing one iteration of $\mainalgorithm$ efficiently}
\label{sec:one-iteration-of-main}
We now derive the final connection between $\LPdeltadual$ and $\mainalgorithm$ which will enable us to obtain $\hlambda$ at line~\ref{line:main-lambda} in nearly-linear time. To that end, consider $\LPdeltaprimalnok(\lambda)$ defined as
\begin{alignat*}{4}
	\text{maximize}   &\qquad & c^T u + \lambda (k - \allones^T u) \\
	\text{subject to} &\qquad & \sum_{j = i}^{\min\{i + \Delta - 1, d\}}{u_j} & \le 1 & \qquad & \forall i = 1 \ldots d \\
					&\qquad & u_i  & \ge 0 &\qquad & \forall i = 1 \ldots d
\end{alignat*}
Observe that compared to $\LPdeltaprimal$, $\LPdeltaprimalnok(\lambda)$ does not contain the sparsity constraint. Furthermore, the LP $\LPdeltaprimalnok(\lambda)$ is a relaxed version of $\projlagr\left(\lambda, c \right)$.
Also, as $\LPdeltaprimal$ is, then $\LPdeltaprimalnok(\lambda)$ is totally unimodular. Now this sequence of conclusions results in the following.
\begin{corollary}\label{corollary:primal-nok-projlagr}
    Problems $\LPdeltaprimalnok(\lambda)$ and $\projlagr\left(\lambda, c \right)$ are equivalent.
\end{corollary}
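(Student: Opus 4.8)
The plan is to show that $\LPdeltaprimalnok(\lambda)$ and $\projlagr(\lambda, c)$ have the same optimal value and that an optimal solution to one yields an optimal solution to the other, via the correspondence between $0/1$ vectors and $\Delta$-separated supports. The essential point is that, although $\projlagr(\lambda, c)$ optimizes over the discrete set $\Msupports_{\Delta}$ while $\LPdeltaprimalnok(\lambda)$ optimizes over a continuous polytope, total unimodularity forces the LP to attain its optimum at an integral vertex, which collapses the apparent gap between the two problems.

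First I would argue that the feasible region of $\LPdeltaprimalnok(\lambda)$ has only integral vertices. Its constraint matrix is obtained from that of $\LPdeltaprimal$ by deleting the single row encoding $\sum_i u_i = k$; since a submatrix of a totally unimodular matrix is again totally unimodular, and the right-hand sides ($1$ for the sliding-window constraints and $0$ for the nonnegativity constraints) are integral, every vertex of the polytope lies in $\{0,1\}^d$. Note also that each $u_i$ appears in some window constraint, so $0 \le u_i \le 1$ and the polytope is bounded. As the objective $c^T u + \lambda(k - \allones^T u)$ is linear over a bounded polytope, its maximum is attained at a vertex, hence at an integral point $u \in \{0,1\}^d$.

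Next I would identify integral feasible points with $\Delta$-separated supports. For $u \in \{0,1\}^d$, set $S = \supp(u)$. The window constraint $\sum_{j = i}^{\min\{i + \Delta - 1, d\}} u_j \le 1$ states that every block of $\Delta$ consecutive coordinates contains at most one chosen index, which is exactly the condition $\sep(S) \ge \Delta$, i.e. $S \in \Msupports_{\Delta}$; conversely any $S \in \Msupports_{\Delta}$ gives a feasible indicator vector in $\{0,1\}^d$. Under this correspondence the objectives agree term by term, since $c^T u + \lambda(k - \allones^T u) = \sum_{i \in S} c_i + \lambda(k - |S|)$. Maximizing the LP over its integral vertices is therefore identical to maximizing the $\projlagr$ objective over $\Msupports_{\Delta}$, and combining this with the previous paragraph shows that the two optimal values coincide and that optimal supports and optimal integral LP solutions are in bijection.

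The only genuinely substantive step is verifying that the sliding-window constraints encode $\Delta$-separation exactly: two indices $i < j$ with $j - i \le \Delta - 1$ both lie in the window $[i, i + \Delta - 1]$ and hence violate the corresponding constraint, whereas $\sep(S) \ge \Delta$ precisely rules this out; the truncated windows near the right boundary (through $\min\{i + \Delta - 1, d\}$) cause no difficulty. Everything else is bookkeeping, so I expect no real obstacle beyond being careful that the LP optimum is actually attained (which boundedness guarantees) and that deleting the cardinality row preserves total unimodularity.
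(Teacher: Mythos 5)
Your proof is correct and follows essentially the same route as the paper, which derives the corollary from the observation that $\LPdeltaprimalnok(\lambda)$ is the LP relaxation of $\projlagr(\lambda,c)$ and inherits total unimodularity from $\LPdeltaprimal$, hence has an integral optimum. You simply spell out the details (boundedness, vertex optimality, and the bijection between integral feasible points and $\Msupports_{\Delta}$) that the paper leaves implicit.
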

To obtain the final connection, we consider $\LagLPdeltaprimal$ given by
\[
    \LagLPdeltaprimal := \min_{\lambda \in \bbR} \LPdeltaprimalnok(\lambda).
\]
Next, note that the dual of $\LPdeltaprimalnok(\lambda)$ is $\LPdeltadual_{\lambda}$. Hence, the strong duality implies $\val{\LPdeltadual_{\lambda}} = \val{\LPdeltaprimalnok(\lambda)}$. That together with $\val{\LPdeltadual} = \min_{\lambda \in \bbR} \val{\LPdeltadual_{\lambda}}$ yields that $\LPdeltadual$ and $\LagLPdeltaprimal$ coincide as functions in $\lambda$. Furthermore, since we can solve $\LPdeltadual$ by applying ternary search over integral values of $\lambda$ and $\LPdeltadual_{\lambda}$, we can solve $\LagLPdeltaprimal$ by applying ternary search over integral values of $\lambda$ and function $\LPdeltaprimalnok(\lambda)$. But since $\LPdeltaprimalnok(\lambda)$ and $\projlagr\left(\lambda, c \right)$ are equivalent, we can also obtain $\hlambda$ at line~\ref{line:main-lambda} of $\mainalgorithm$ by applying ternary search over $\lambda$.

Now, it is very easy to see that for an optimal solution $\wstar$ of $\LPdeltadual$ we have $\wstar_0 \le \max_i \left|c_i\right|$. It is also not hard to show that there is an optimal solution such that $\wstar_0 \ge -(k - 1) \max_i \left|c_i\right|$ (see Lemma~\ref{lemma:ystar-lower-bound}). Therefore, in order to find optimal $\hlambda$ it suffices to execute $O\left(\log{\max_i \left|c_i\right|} + \log{k}\right) = O(\gamma + \log{d})$ iterations of ternary search.

Every iteration of the ternary search invokes $\projlagr$, which can be implemented to run in linear time.
\begin{lemma}\label{lemma:solve-projlagr}
    Given $\lambda$ and $\hc \in \bbR^d$, there is an algorithm that finds support $\hS \in \projlagr\left(\lambda, \hc \right)$ in time $O(d)$.
\end{lemma}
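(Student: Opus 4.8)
The plan is to strip away the Lagrangian term, reduce the problem to an unconstrained maximum-weight $\Delta$-separated selection, and solve that with a one-dimensional dynamic program. First I would observe that in the objective $\sum_{i \in S} \hc_i + \lambda(k - |S|)$ the summand $\lambda k$ is a constant independent of $S$, so maximizing over $S \in \Msupports_{\Delta}$ is equivalent to maximizing $\sum_{i \in S}(\hc_i - \lambda)$. The key point to emphasize is that $\Msupports_{\Delta}$ imposes only the separation constraint $\sep(S) \ge \Delta$ and \emph{no} cardinality constraint; this is precisely what the Lagrangian relaxation buys us, and it is the reason $\projlagr$ is tractable in linear time whereas the cardinality-constrained problem in~\eqref{eq:proj-massaged} is harder.

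Setting $c'_i := \hc_i - \lambda$, the task becomes: select a $\Delta$-separated subset $S \subseteq [d]$ maximizing $\sum_{i \in S} c'_i$. I would solve this by the recurrence
\[
    \text{best}(i) \;=\; \max\bigl\{\, \text{best}(i - 1),\ c'_i + \text{best}(i - \Delta) \,\bigr\},
\]
with the boundary convention $\text{best}(j) = 0$ for $j \le 0$. Here $\text{best}(i)$ denotes the optimal value over $\Delta$-separated subsets of $\{1, \dots, i\}$: either index $i$ is unused (first branch), or $i$ is selected, in which case the previous selected index must lie at or below $i - \Delta$ to respect separation (second branch). Since $c'_i$ may be negative, the first branch automatically guarantees that no element decreasing the objective is ever included, so no explicit nonnegativity filtering is needed.

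Each $\text{best}(i)$ is computed in $O(1)$ from stored values, so the entire table is filled in $O(d)$ time. To recover the support $\hS$ rather than just its value, I would store at each $i$ a backpointer recording which branch achieved the maximum and then trace back from $i = d$: whenever the ``use $i$'' branch was taken I add $i$ to $\hS$ and jump to $i - \Delta$, otherwise I move to $i - 1$. This traversal is also $O(d)$, giving the claimed linear running time; by construction the returned set is $\Delta$-separated and attains $\max_{S \in \Msupports_{\Delta}} \sum_{i \in S} c'_i$, hence $\hS \in \projlagr(\lambda, \hc)$ once the constant $\lambda k$ is added back. I do not expect a genuine obstacle here — the lemma follows directly from the absence of a cardinality constraint — so the only points requiring care are the boundary convention $\text{best}(j) = 0$ for $j \le 0$ (which correctly handles indices $i$ with $i - \Delta \le 0$ near the left end) and the backpointer reconstruction of an actual optimal support.
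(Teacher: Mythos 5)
Your proposal is correct and follows essentially the same route as the paper's proof: reduce to maximizing $\sum_{i \in S}(\hc_i - \lambda)$ over $\Delta$-separated supports with no cardinality constraint, solve it with the recurrence $s_{i} = \max\{s_{i-1},\, (\hc_i - \lambda) + s_{i-\Delta}\}$, and reconstruct the support via backpointers in $O(d)$ time. Your explicit handling of the boundary convention and of negative entries matches the detailed version given in the paper's appendix.
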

\begin{proof}
    Observe that for a fixed $\lambda$, solving $\projlagr\left(\lambda, \hc\right)$ is equivalent to finding support $S' \in \Msupports_{\Delta}$ that maximize $\sum_{i \in S'} \left(\hc_i - \lambda\right)$. Hence, we can reinterpret $\projlagr\left(\lambda, \hc\right)$ as follows: given a vector $\tc := \hc - \lambda \allones$, select a subset of $[d]$ of indices (not necessarily $k$ of them)  so that (i) every two indices are at least $\Delta$ apart, and (ii) the sum of the values of $\tc$ at the selected indices is maximized.
    
    This task can be solved by standard dynamic programming in the following way. For every $i$, we define $s_i$ to be the maximum value of the described task restricted to the first $i$ indices of $\tc$. Then, it is easy to see that $s_{i + 1} = \max\{s_i, s_{i + 1 - \Delta} + \tc_{i + 1}\}$. Namely, we can either decide not to select index $i + 1$, in which case the best is already contained in $s_i$; or, we can decide to select index $i + 1$ which has value $\tc_{i + 1}$ and for the rest we consider $s_{i + 1 - \Delta}$. Therefore, $s_d$ can be obtained in time $O(d)$. Now it is easy to reconstruct the corresponding support in linear time. For completeness, we provide a full algorithm and a detailed proof in Lemma~\ref{lemma:full-proof-proflagr}, Appendix~\ref{app:proj-lagr}.
\end{proof}
Putting all together proves Lemma~\ref{lemma:single-iteration-of-main}.
\vspace{-5pt}
\section{Part II -- Active constraints}
\label{sec:active-constraints}

Note that the chain of equivalences present in Figure~\ref{fig:reduction-diagram} shows that $\min_{\lambda} \projlagr(\lambda, c)$ outputs the sum of coordinates of an optimal solution of problem~\ref{eq:proj}. Prior to our work, it was not even known how to obtain this value in time faster than $O(d k)$, while our result shows we can compute it in nearly-linear time. So, it is natural to ask whether the same relaxation also outputs a support of size $k$? The answer is, unfortunately, no. To see that, consider the example: $c = (4, 7, 5, 0, 0, 5, 8, 5)$, $\Delta = 2$, and $k = 3$. For $\lambda > 2$ every solution $\ustar_{\lambda}$ to $\projlagr(\lambda, c)$ is such that $\ustar_{\lambda}$ has less than $k$ non-zeros. On the other hand, for every $\lambda \le 2$ the exists a solution $\ustar_{\lambda}$ such that $\ustar_{\lambda}$ contains more than $k$ non-zeros. Therefore, there is no $\lambda$, neither $\lambda - 1/(d + 1)$, for which $\projlagr(\lambda, c)$ provably outputs a support of cardinality $k$. This also suggests that the perturbation we apply in lines~\ref{line:main-X}-\ref{line:main-perturb} is essential!

Instead of studying lines~\ref{line:main-X}-\ref{line:optimal-hc} of $\mainalgorithm$ directly, we shift our focus to $\LPdeltadual_{\lambda}$. In particular, we exhibit very close connection between its structure and the sparsity of the primal solution, which we present via the notion of "active constraints".
Then we use these findings in our analysis to show that slight perturbation of the input instance, while not affecting the value of the solution, makes it possible to obtain a solution to problem~\ref{eq:proj-massaged} by applying Lagrangian relaxation.

\subsection{Solving $\LPdeltadual_{\lambda}$}
Observe that once we fixed the value of $w_0$, all remaining constraints in $\LPdeltadual_{\lambda}$ are ``local'' since they only affect a known interval of length $\Delta$. They are also ordered in a natural way. 
As a result, we can solve $\LPdeltadual_{\lambda}$ by making a single pass over these variables.
Starting with $w_1$ and all variables set to $0$, we consider each constraint from left to right and increase the variables to satisfy these constraints in a lazy manner. That is, if in our pass we reach a constraint with index $i$ that is still not satisfied, we increase the value of $w_i$ until that constraint becomes satisfied and then move to the next constraint. Given $c$ and $\lambda$, algorithm $\dualgreedy$, i.e. Algorithm~\ref{alg:dual}, formalizes this approach whose analysis appears in Appendix~\ref{app:dual}.
\begin{algorithm}
	\caption{\dualgreedy:}
	\label{alg:dual}
	{\bfseries Input:} $c \in \bbZ^d$, $\lambda \in \bbR$ \\
	{\bfseries Output:} an optimal solution $w$ to $\LPdeltadual$ such that $w_0 = \lambda$
	
	\begin{algorithmic}[1]
		\STATE $w \gets \allzeros$; $\quad w_0 \gets \lambda$ \COMMENT{initialize the output}
		\STATE $\mathit{sum}_{\Delta} \gets 0$
		\COMMENT{store the sum of the most recent $\Delta$ variables of $w$}
		\FOR{$i := 1 \ldots d$} 
			\IIF{$i - \Delta \ge 1$}{$\mathit{sum}_{\Delta} \gets \mathit{sum}_{\Delta} - w_{i - \Delta}$}
			\STATE $\mathit{diff} \gets c_i - (w_0 + \mathit{sum}_{\Delta})$
			\COMMENT{compute "how far" constraint $i$ is from being tight}
			\IIF{$\mathit{diff} > 0$}{$w_i \gets \mathit{diff}$} \COMMENT{if constraint $i$ is not satisfied, make it tight}
			\STATE $\mathit{sum}_{\Delta} \gets \mathit{sum}_{\Delta} + w_i$
			\COMMENT{update the sum of the $\Delta$ most recent variables of $w$}
		\ENDFOR	
		\RETURN $w$ \COMMENT{solution to $\LPdeltadual$ s.t.  $w_0 = \lambda$}
	\end{algorithmic}
\end{algorithm}

\subsection{Tracking the change of $\LPdeltadual_{\lambda}$}
\label{subsec:active-constraints}
\begin{figure*}[t!]
    \begin{center}
        \begin{subfigure}[b]{0.47 \textwidth}
            \centerline{\includegraphics[width=\columnwidth]{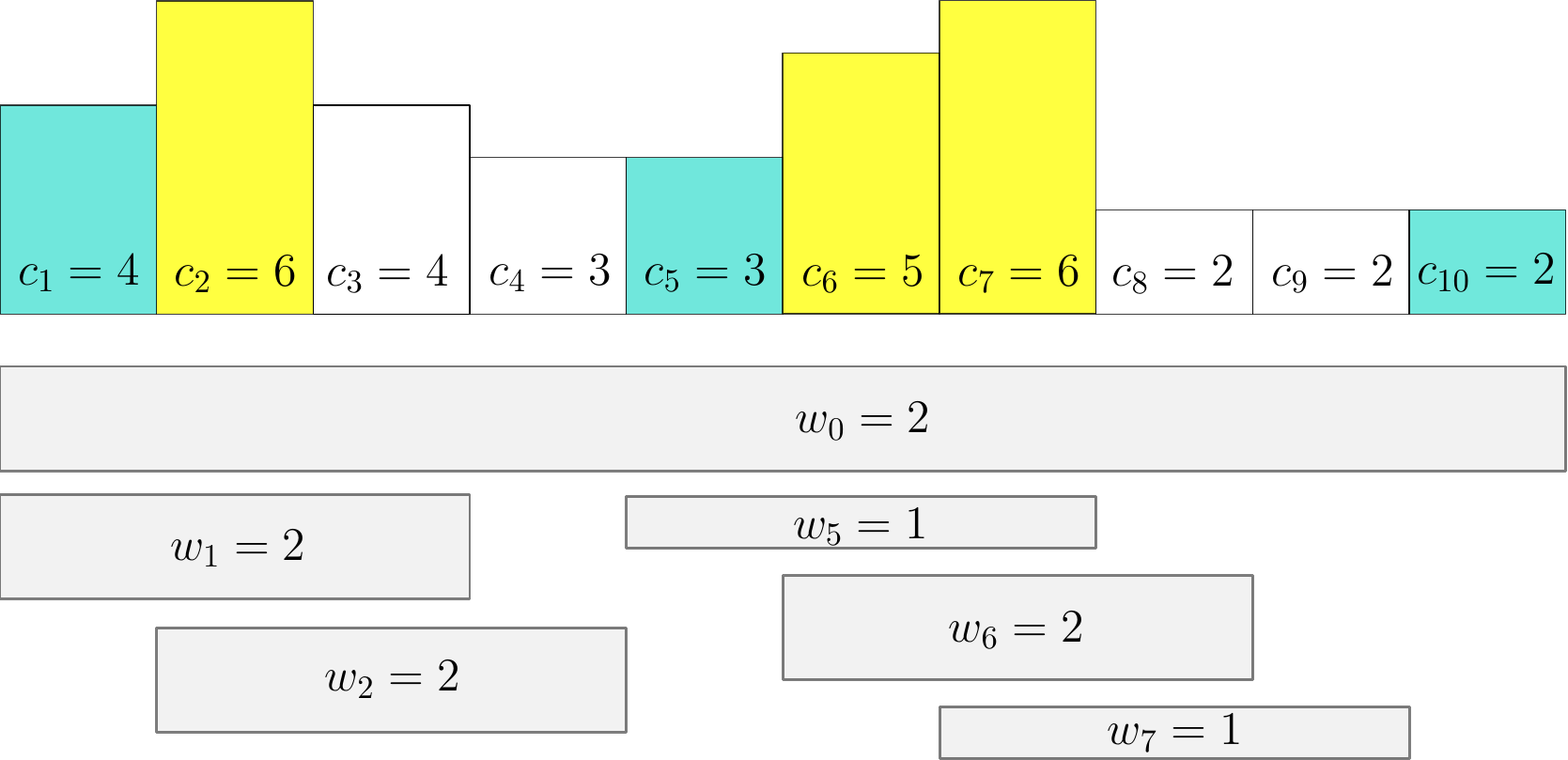}}
        \end{subfigure}
        \hfill 
        \begin{subfigure}[b]{0.47 \textwidth}
            \centerline{\includegraphics[width=\columnwidth]{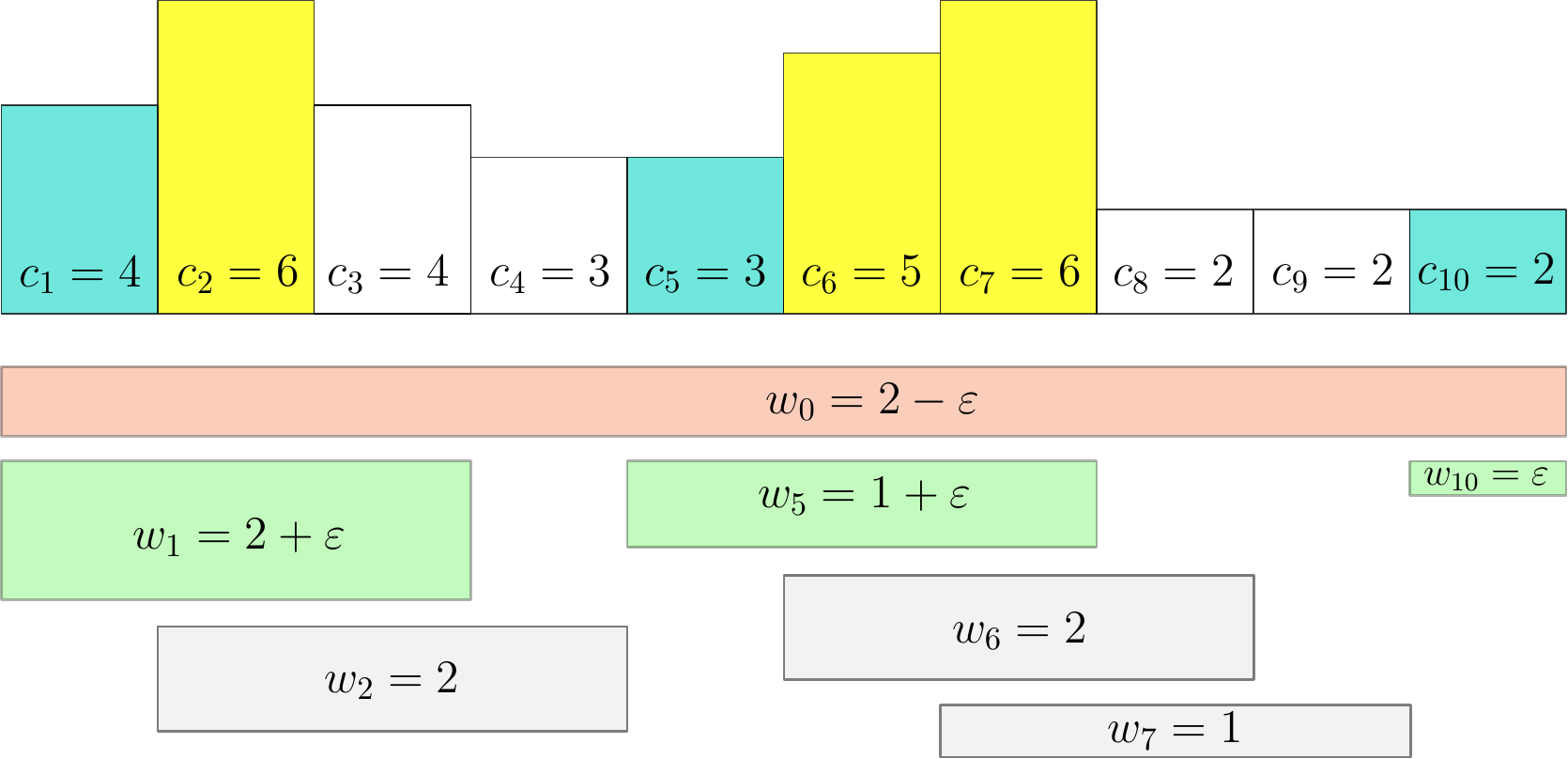}}
        \end{subfigure}
        \caption{A sketch of an instance of $\LPdeltadual$: $c = (4, 6, 4, 3, 3, 5, 6, 2, 2, 2)$ and $\Delta = 3$. The left and right figure depicts $w$ obtained by $\dualgreedy(c, 2)$ and $\dualgreedy(c, 2 - \vareps)$ for $0 < \vareps < 1$, respectively. Constraints $1$, $5$, and $10$ are active, i.e., if $w_0$ on the left is decreased by $\vareps$ then only $w_1$, $w_5$, and $w_{10}$ increase by $\vareps$, as shown on the right. Every $w_i$, for $i \ge 1$, covers $\Delta$ $c$-poles to its right.
        }
        \label{fig:active-constraints}
    \end{center}
\end{figure*}
Next we introduce the key concept that we need for relating the solution of dual to the sparsity of Lagrangian relaxation of the primal: the notion of \emph{active constraints}.
Let $w$ be a vector obtained by $\dualgreedy(c, \lambda)$ and let $w'$ be a vector obtained by $\dualgreedy(c, \lambda - \vareps)$, for some fixed $\lambda \in \bbZ$ and small $\vareps \in (0, 1)$. Then, the set of coordinates that are for $\vareps$ larger in $w'$ than in $w$ are called active constraints. Figure~\ref{fig:active-constraints} provides an illustration of this concept. Intuitively, the active constraints correspond to those variables of $\LPdeltadual$ that increase when $w_0$ decreases by some small value. Hence, one can interpret active constraints as gradients of $\LPdeltadual_{\lambda}$ with respect to the variable $\lambda$. 
This concept appears to be very useful in characterizing the optimal solution of $\LPdeltadual$ in an alternative way. In particular, the following lemma holds.
\begin{lemma}\label{lemma:active-sets-optimality-simplified}
    Let $c \in \bbZ^d$, $\lambda \in \bbZ$, and $w \gets \dualgreedy(c, \lambda)$. Then, if $w$ has exactly $k$ active constraints the vector $w$ is an optimal solution to $\LPdeltadual$.
\end{lemma}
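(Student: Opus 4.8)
The plan is to prove that the given integer $\lambda$ is a global minimizer of the function $g(\lambda) := \val{\LPdeltadual_{\lambda}}$ over all real $\lambda$; this immediately yields the lemma. Indeed, since $\dualgreedy(c, \lambda)$ returns an optimal solution of $\LPdeltadual_{\lambda}$ (the LP $\LPdeltadual$ with $w_0$ fixed to $\lambda$), the vector $w$ attains the objective value $g(\lambda) = k\lambda + \sum_{i=1}^d w_i$ and is feasible for $\LPdeltadual$. Because $\val{\LPdeltadual} = \min_{\lambda \in \bbR} g(\lambda)$ by the definition of $\LPdeltadual$, showing that $\lambda$ minimizes $g$ certifies that $w$ is optimal for $\LPdeltadual$.

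To locate the minimizer I would first express the left derivative of $g$ at $\lambda$ through the active constraints. Let $a$ be the number of active constraints of $w$, and fix $\vareps \in (0,1)$ small enough that the active constraints are exactly the coordinates on which $w = \dualgreedy(c, \lambda)$ and $w' := \dualgreedy(c, \lambda - \vareps)$ differ. By the defining property of active constraints, $w'_i = w_i + \vareps$ for each of the $a$ active coordinates $i \ge 1$, $w'_i = w_i$ for every non-active $i \ge 1$, and $w'_0 = \lambda - \vareps$. Using optimality of $w$ and $w'$ for their respective LPs, I would compute
\[
    g(\lambda - \vareps) - g(\lambda) = \left( k(\lambda - \vareps) + \sum_{i=1}^d w'_i \right) - \left( k\lambda + \sum_{i=1}^d w_i \right) = (a - k)\,\vareps .
\]
Hence the left derivative is $g'_-(\lambda) = \lim_{\vareps \to 0^+} \frac{g(\lambda) - g(\lambda - \vareps)}{\vareps} = k - a$, which equals $0$ precisely under the hypothesis $a = k$.

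I would then conclude using convexity. By Lemma~\ref{lemma:dual-convex-in-lambda}, $g$ is convex, so its left difference quotients are monotone: for $\mu < \lambda$ we have $\frac{g(\lambda) - g(\mu)}{\lambda - \mu} \le g'_-(\lambda) = 0$, which gives $g(\lambda) \le g(\mu)$; and the standard three-point inequality gives $\frac{g(\nu) - g(\lambda)}{\nu - \lambda} \ge g'_-(\lambda) = 0$ for $\nu > \lambda$, so $g(\lambda) \le g(\nu)$. Thus $\lambda$ is a global minimizer of $g$, and by the first paragraph $w$ is optimal for $\LPdeltadual$.

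The hard part will be justifying the structural claim in the second paragraph: that for all sufficiently small $\vareps$, decreasing $w_0$ from $\lambda$ to $\lambda - \vareps$ raises exactly the active coordinates of $\dualgreedy$'s output, each by precisely $\vareps$, and leaves all other coordinates fixed. Making this rigorous requires an induction along the left-to-right greedy pass, tracking how the decrease of $w_0$ propagates through the overlapping length-$\Delta$ windows and showing that the set of constraints the greedy makes tight is unchanged for small $\vareps$ (so that every coordinate's change lies in $\{0, \vareps\}$). Given this structural fact, the objective-difference computation and the convexity step are routine.
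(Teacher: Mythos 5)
Your proposal is correct and takes essentially the same route as the paper: the paper's proof likewise computes that the dual objective of $\dualgreedy(c,\lambda-\vareps)$ differs from that of $\dualgreedy(c,\lambda)$ by $(a-k)\vareps=0$ and concludes from the convexity of $\val{\LPdeltadual_{\lambda}}$ in $\lambda$ (Lemma~\ref{lemma:dual-convex-in-lambda}) that $\lambda$ is a global minimizer, which is the same argument you phrase via the left derivative and monotone difference quotients. The structural fact you correctly flag as the hard part --- that for $\vareps\in(0,1)$ only the active coordinates change, each by exactly $\vareps$ --- is precisely what the paper isolates and establishes in Appendix~\ref{app:active-constraints} (Lemmas~\ref{lemma:active-for-delta} and~\ref{lemma:y-delta-change}) on the way to the stronger two-sided statement of Lemma~\ref{lemma:active-sets-optimality}.
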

A full proof of a statement stronger than Lemma~\ref{lemma:active-sets-optimality-simplified} along with its proof appears in Lemma~\ref{lemma:active-sets-optimality}, Appendix~\ref{app:active-constraints}, while in this section we provide a proof sketch. Let $w(\vareps) = \dualgreedy(c, \lambda - \vareps)$, for some small $\vareps \in (0, 1)$. By the definition of active constraints and the fact that $w$ has $k$ many, there are exactly $k$ coordinates that are larger by $\vareps$ in $w(\vareps)$ than in $w$. In addition, $w(\vareps)_0 = \lambda - \vareps$ and $w_0 = \lambda$. It is not hard to show that all the other coordinates of $w(\vareps)$ and $w$ are the same, which we can express as $\sum_{i = 1}^d w(\vareps)_i = k \vareps + \sum_{i = 1}^d w_i$. Now, recall that the objective function of dual $\LPdeltadual$ with respect to vector $w$ equals $w_0 k + \sum_{i = 1}^d w_i$. Then we have
\[
    w_0 k + \sum_{i = 1}^d w_i = (w_0 - \vareps) k + k \vareps + \sum_{i = 1}^d w_i = w(\vareps)_0 k + \sum_{i = 1}^d w(\vareps)_i.
\]
Hence, the objective values of dual $\LPdeltadual$ for vectors $w(\vareps)$ and $w$ are equal, for \textbf{all} the values $\vareps \in (0, 1)$. As $\LPdeltadual$ is convex in the value of variable $w_0$ and $\dualgreedy(c, \lambda)$ provides an optimal solution to $\LPdeltadual$ such that $w_0 = \lambda$, then $w$ is an optimal solution to $\LPdeltadual$.

\subsection{Wrapping up -- perturbation and optimal sparsity}
\label{sec:perturbation-and-sparsity}

Now we use Lemma~\ref{lemma:active-sets-optimality-simplified} to prove the following, which essentially justifies line~\ref{line:optimal-hc} of $\mainalgorithm$.
\begin{lemma}\label{lemma:from-dual-to-greedy-on-primal}
    Let $c \in \bbZ^d$, $\lambda \in \bbZ$, and $\tw \gets \dualgreedy(c, \lambda)$. Assume that $\tw$ has exactly $k$ active constraints. Then, any optimal support $\Sstar$ of $\projlagr(\lambda - \vareps, c)$, for $0 < \vareps < 1 / d$, has cardinality exactly $k$.
\end{lemma}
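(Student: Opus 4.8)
The plan is to study the optimal value of the Lagrangian relaxation as a function of the multiplier and to pin down its behavior just below $\lambda$. Define $F(\mu) := \max_{S \in \Msupports_{\Delta}}\left(\sum_{i \in S} c_i + \mu(k - |S|)\right)$, i.e.\ the optimal value of $\projlagr(\mu, c)$. As a pointwise maximum of the affine functions $\ell_S(\mu) := \sum_{i\in S} c_i + \mu(k-|S|)$, the function $F$ is convex and piecewise linear in $\mu$, and the slope of each linear piece is $k - |S|$ for a support $S$ attaining the maximum there. By Corollary~\ref{corollary:primal-nok-projlagr} and strong duality we have $F(\mu) = \val{\LPdeltadual_{\mu}}$, so I may freely pass between the primal (support) description and the dual description of this same convex function. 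The goal is to show that at $\mu = \lambda - \vareps$ the only active slope is $0$, which forces $k - |S| = 0$ for every optimal $S$.

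First I would show that $F$ has left-derivative $0$ at $\lambda$. Since $\tw = \dualgreedy(c,\lambda)$ has exactly $k$ active constraints, the computation in the proof sketch of Lemma~\ref{lemma:active-sets-optimality-simplified} gives, for all sufficiently small $\vareps > 0$, $\sum_{i=1}^d \left[\dualgreedy(c, \lambda - \vareps)\right]_i = k\vareps + \sum_{i=1}^d \tw_i$ (exactly the $k$ active coordinates increase by $\vareps$, the rest are unchanged). As $\dualgreedy$ returns an optimum of $\LPdeltadual_{\mu}$, this yields $F(\lambda - \vareps) = (\lambda - \vareps)k + \left(k\vareps + \sum_i \tw_i\right) = \lambda k + \sum_i \tw_i = F(\lambda)$. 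Hence $F$ is constant immediately to the left of $\lambda$, so its left-derivative there is $0$.

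Next I would bound the spacing of the breakpoints of $F$. A breakpoint can occur only where two supports $S_1, S_2$ with $|S_1| \neq |S_2|$ tie, i.e.\ at $\mu = (\sum_{i\in S_1} c_i - \sum_{i\in S_2} c_i)/(|S_1| - |S_2|)$. Since $c \in \bbZ^d$ and every $\Delta$-separated support has at most $\lceil d/\Delta\rceil \le d$ elements, each such breakpoint is a rational with integer numerator and nonzero denominator of absolute value at most $d$; consequently any breakpoint other than the integer $\lambda$ lies at distance at least $1/d$ from $\lambda$. Therefore $F$ has no breakpoint in the open interval $(\lambda - 1/d, \lambda)$, and the slope-$0$ piece abutting $\lambda$ from the previous step extends across this entire interval, so $F$ is affine with slope $0$ throughout $(\lambda - 1/d, \lambda)$.

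Finally, fix $\mu = \lambda - \vareps$ with $0 < \vareps < 1/d$, which lies strictly inside $(\lambda - 1/d, \lambda)$. There $F$ is differentiable with $F'(\mu) = 0$, so its subdifferential is $\{0\}$. For any optimal support $\Sstar$ of $\projlagr(\mu, c)$, the affine function $\ell_{\Sstar}$ touches $F$ at $\mu$ and lies weakly below $F$ everywhere (being one of the functions in the maximum defining $F$), so its slope $k - |\Sstar|$ is a subgradient of $F$ at $\mu$; hence $k - |\Sstar| = 0$, i.e.\ $|\Sstar| = k$. I expect the main obstacle to be making the breakpoint-spacing argument airtight, and in particular justifying that the slope-$0$ piece really spans the full width-$1/d$ interval without interruption: this is precisely where integrality of $c$ together with the $O(d)$ bound on support sizes (hence on breakpoint denominators) is essential. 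The second delicate point is the translation from ``$k$ active constraints at $\lambda$'' to ``left-derivative $0$'' via Lemma~\ref{lemma:active-sets-optimality-simplified}, which is what links the dual greedy solution to the primal slope.
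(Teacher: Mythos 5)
Your proof is correct, and while it shares the paper's entry point, it finishes by a genuinely different route. Both arguments begin by using the active-constraints computation (Lemma~\ref{lemma:active-sets-optimality-simplified}, via Lemma~\ref{lemma:y-delta-change}) to show that the optimal value is unchanged when $\lambda$ is decreased slightly; the paper states this as $\val{\LPdeltadual_{\lambda-\vareps}}=\val{\LPdeltadual_{\lambda}}$, you state it as the left-derivative of $F$ at $\lambda$ being $0$. From there the paper finishes algebraically: it invokes integrality of the dual optimum (Corollary~\ref{corollary:y0-integral-optimal}) to get $\val{\LPdeltaprimalnok(\lambda-\vareps)}\in\bbZ$, writes this value for an integral optimal $\tu$ as an integer plus $\vareps(\allones^T\tu - k)$, and concludes that the latter term is an integer of absolute value less than one, hence zero. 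You instead finish geometrically: the breakpoints of the piecewise-linear convex function $F$ are rationals whose denominators are bounded by the maximal support size, which is at most $d$, so no breakpoint lies in the open interval $(\lambda-1/d,\lambda)$; the slope-zero piece therefore spans the whole interval, the subdifferential of $F$ at $\lambda-\vareps$ is $\{0\}$, and every optimal support must have slope $k-\abs{\Sstar}=0$. The two arguments rest on the identical numerical fact---$\vareps<1/d$ against cardinality differences of at most $d$---but yours makes the role of the $1/d$ threshold more transparent (it is exactly the breakpoint spacing) and dispenses with integrality of the LP optimal value, at the cost of a little more convex-analysis machinery. The one step you flagged as delicate, namely that the flatness near $\lambda$ holds on an actual interval rather than only infinitesimally, is exactly what Lemmas~\ref{lemma:active-for-delta} and~\ref{lemma:y-delta-change} in Appendix~\ref{app:active-constraints} supply (the active set is unchanged for all $\vareps\in[0,1)$), so there is no gap there.
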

So, if we produce $\lambda$ as in Lemma~\ref{lemma:from-dual-to-greedy-on-primal}, we will solve problem~\eqref{eq:proj-massaged}. These steps are implemented by lines~\ref{line:main-lambda}-\ref{line:main-while} of $\mainalgorithm$. However, as illustrated in the beginning of the section, $\lambda$ as in Lemma~\ref{lemma:from-dual-to-greedy-on-primal} might not exist. Intuitively, this situation happens when the number of active constraints jumps from a value smaller than $k$ to a value larger than $k$ for a very small change of $\lambda$. In such a case, we are unable to obtain $\lambda$ as in Lemma~\ref{lemma:from-dual-to-greedy-on-primal}.
A key component of our analysis is showing that there is an efficient way of randomly altering $c$, and obtaining $\tc$, so that with high probability $\tc$ is such that: $\hlambda$ is obtained as at line~\ref{line:main-lambda}; and, $\tw \gets \dualgreedy(\tc, \hlambda)$ has the same property as in Lemma~\ref{lemma:from-dual-to-greedy-on-primal}. Lines~\ref{line:main-X} and~\ref{line:main-perturb} of $\mainalgorithm$ implement this random perturbation. Intuitively, the perturbation achieved by $X$ variables adds noise to our input instance so that the number of active constraints changes by at most one as $\lambda$ slides over the integer domain.
Following this intuition we obtain a proof of Lemma~\ref{lemma:a-single-iteration}. However, due to the space limitation, we present its proof in Appendix~\ref{app:perturbation-and-sparsity}.

We conclude the section by giving a proof of Lemma~\ref{lemma:from-dual-to-greedy-on-primal}.
\begin{proofof}{Lemma~\ref{lemma:from-dual-to-greedy-on-primal}}
Recall that by our assumption there are exactly $k$ active constraints defined by $\tw$. Then from Lemma~\ref{lemma:active-sets-optimality-simplified} it follows that $\tw$ is a minimizer of $\LPdeltadual$, i.e. $\val{\LPdeltadual_{\lambda}} = \val{\LPdeltadual}$. By the integrality of $\LPdeltadual$ we have that $\val{\LPdeltadual} \in \bbZ$. Let $\lambda' = \lambda - \vareps$, for some $0 < \vareps < 1/d$. Then, it holds $\val{\LPdeltadual_{\lambda'}} = \val{\LPdeltadual}$ as only the $k$ variables correponding to active constraints increased by $\vareps$ while $w_0 = \lambda'$ decreased by $\vareps$. From the strong duality we also have $\val{\LPdeltaprimalnok(\lambda')} = \val{\LPdeltadual_{\lambda'}} =  \val{\LPdeltadual}$.

Let $\tu$ be an integral optimal solution of $\LPdeltaprimalnok(\lambda')$. Following the definition we have
\[
    \val{\LPdeltaprimalnok(\lambda')} = (c^T - \lambda \allones^T) \tu + \lambda k + \vareps (\allones^T \tu - k).
\]
Now we have the following properties: $\tu$ is integral; $0 < \vareps |\allones^T \tu - k| < 1$ whenever $1 \le |\allones^T \tu - k| \le d$; $c \in \bbZ^d$; $v \in \bbZ$; and $\val{\LPdeltaprimalnok(\lambda')} \in \bbZ$. Therefore, we have $\allones^T \tu - k = 0$, and hence $\allones^T \tu = k$. Since we showed the equivalence between $\LPdeltaprimalnok(\lambda')$ and $\projlagr(\lambda', c)$ the lemma follows.
\end{proofof}

\section{Experiments}
\label{sec:experiments}
\pgfplotsset{ignore legend/.style={every axis legend/.code={\renewcommand\addlegendentry[2][]{}}}}

\newlength{\plotwidth}
\setlength{\plotwidth}{3.5cm}
\newlength{\plotheight}
\setlength{\plotheight}{2.5cm}
\newlength{\plotxspacing}
\setlength{\plotxspacing}{0.9cm}

\pgfplotsset{resultplot/.style={%
  scale only axis,
  enlarge x limits=false,
  grid=major,
  grid style={dotted,gray,thin},
  legend cell align=left,
  legend transposed=true,
  legend columns=-1,
  legend style={anchor=north,at={(1.2,-0.3)}},
  every axis plot/.append style={line width=1pt,mark size=2pt},
  cycle list={{blue,mark=square},{red,mark=asterisk},{olive,mark=+},{blue,mark=o}}}}
\pgfplotsset{resultplot1/.style={%
  resultplot,
  height=\plotheight,
  width=\plotwidth}}
  
 \pgfplotsset{resultprob/.style={%
  resultplot1,
  ylabel=Prob. of recovery,
  xlabel=Measurements $n$}}

\pgfplotsset{timeplot/.style={%
  resultplot1,
  error bars/x dir=none,
  error bars/y dir=both,
  error bars/y explicit,
  xlabel=Problem size $n$,
  ylabel=Running time (s)}}

\pgfplotsset{timeratioplot/.style={%
  resultplot1,
  xlabel=Problem size $n$,
  ylabel=Speed-up)}}
  
\pgfplotsset{timeratioplotk/.style={%
  resultplot1,
  xlabel=Sparsity $k$,
  ylabel=Speed-up (DP\, /\, $\mainalgorithm$)}}
  
\pgfplotsset{timeratioplotd/.style={%
  resultplot1,
  xlabel=Size $d$ (in $10^4$),
  ylabel=Speed-up of $\mainalgorithm$}}

\pgfplotsset{timeratioplotdk/.style={%
  resultplot1,
  xlabel=Size $d$ (in $10^2$),
  ylabel=Speed-up of $\mainalgorithm$}}

\pgfplotsset{timeplotcosamp/.style={%
  resultplot1,
  xlabel=Size $d$ (in $10^2$),
  ylabel=Running time (s)}}

\pgfplotsset{timeplotfixeds/.style={%
  resultplot1,
  xlabel=Size $d$,
  ylabel=Running time (s $\cdot 10^2$)}}

 \pgfplotsset{spikeplot/.style={%
  resultplot1}}
  
\begin{figure*}[t!]
    \begin{center}
        \begin{subfigure}[b]{0.31 \textwidth}
            \begin{tikzpicture}
            \begin{axis}[title=(a), timeratioplotdk,name=timeplot1,ymax=150,ymin=1,anchor=north west,xshift=\plotxspacing,ignore legend]
                \addplot + [mark=none] table[x=d, y=time_ratio] {synthetic_DP_vs_Lag_grow_ratio.txt};
                \legend{}
            \end{axis}
            \end{tikzpicture}
        \end{subfigure}
        \hfill
        \begin{subfigure}[b]{0.31 \textwidth}
            \begin{tikzpicture}
            \begin{axis}[title=(b),timeplotcosamp,name=timeplot1,ymax=160.0,ymin=0.0,legend pos=north west, legend columns=3]
            \addplot table[x=d, y=time_mean] {synthetic_CoSaMP_grow_time_lagrange.txt};
            \addplot table[x=d, y=time_mean] {synthetic_CoSaMP_grow_time_DP.txt};
            \addplot table[x=d, y=time_mean] {synthetic_CoSaMP_grow_time_no-model.txt};
            \legend{$\mainalgorithm$, DP, no model}
            \end{axis}
            \end{tikzpicture}
        \end{subfigure}
        \hfill
        \begin{subfigure}[b]{0.31 \textwidth}
            \begin{tikzpicture}
            \begin{axis}[title=(c),timeplotfixeds,name=timeplot1,ymax=160.0,ymin=0.1,legend pos=north west, legend columns=2, xmode=log, ymode=log]
            \addplot + [mark=none] table[x=d, y=time_mean] {synthetic_DP_vs_Lag_fixed_s_Lag_time_lagrange.txt};
            \addplot + [mark=none] table[x=d, y=time_mean] {synthetic_DP_vs_Lag_fixed_s_DP_time_DP.txt};
            \legend{$\mainalgorithm$, DP}
            \end{axis}
            \end{tikzpicture}
        \end{subfigure}

        \begin{subfigure}[c]{0.31 \textwidth}
            \begin{tikzpicture}
            \begin{axis}[xmajorticks=false,title=(d),spikeplot,name=timeplot1,ymax=0.25,ymin=-0.25]
            \addplot + [mark=none] table[x=x, y=y] {real_data.txt};
            \legend{}
            \end{axis}
            \end{tikzpicture}
        \end{subfigure}
        \hfill 
        \begin{subfigure}[d]{0.31 \textwidth}
            \begin{tikzpicture}
            \begin{axis}[xmajorticks=false,title=(e),spikeplot,name=timeplot1,ymax=0.25,ymin=-0.25]
            \addplot + [mark=none] table[x=x, y=y] {real_data_CoSaMP.txt};
            \legend{}
            \end{axis}
            \end{tikzpicture}
        \end{subfigure}
        \hfill 
        \begin{subfigure}[e]{0.31 \textwidth}
            \begin{tikzpicture}
            \begin{axis}[xmajorticks=false,title=(f),spikeplot,name=timeplot1,ymax=0.25,ymin=-0.25]
            \addplot + [mark=none] table[x=x, y=y] {real_data_CoSaMP_Lagrangian.txt};
            \legend{}
            \end{axis}
            \end{tikzpicture}
        \end{subfigure}
        \caption{Separated sparsity experiments. In the top row we plot running times of our algorithm $\mainalgorithm$ relative to the previous work. Plots (a) and (c) are obtained by projecting signals on the separated sparsity model. Plot (b) compares the running times of CoSaMP with three different projection operators. The variant "no-model" makes no structural assumptions and uses hard thresholding as projection operator. In the bottom row, in (e) we plot the recovery of the signal (d) obtained by the algorithm "no model". Plot (f) shows the recovery of $\mainalgorithm$.
        The both procedures use the same number of measurements.
        }
        \label{figure:train-spikes}
    \end{center}
\end{figure*}
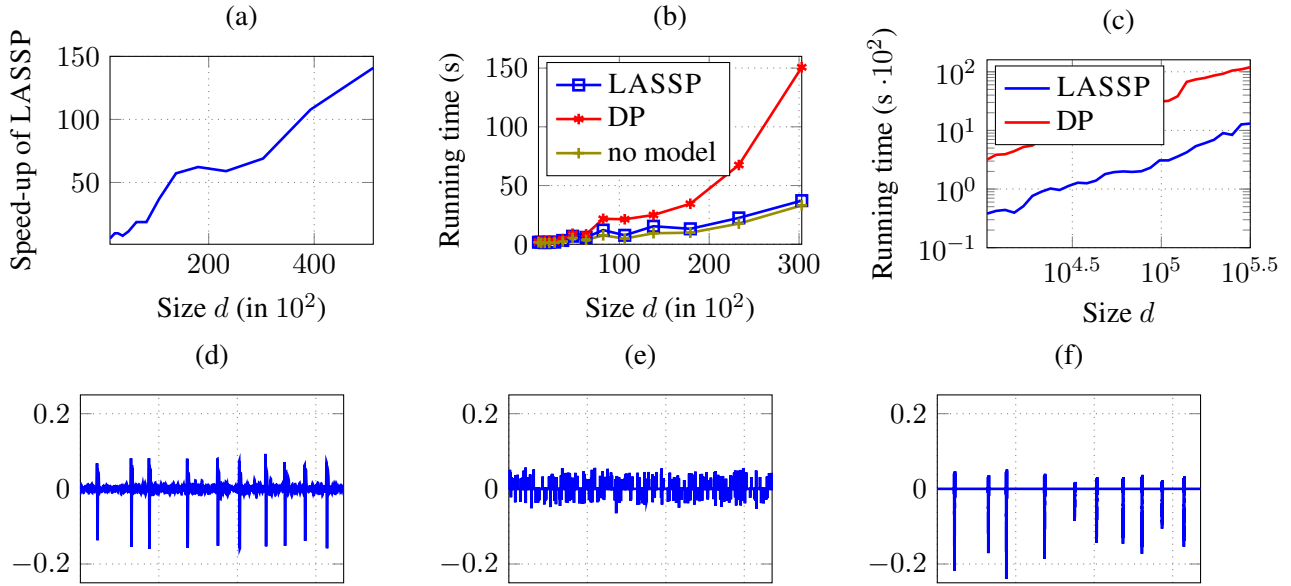

We empirically validate the claims outlined in the previous sections. To that end,
we compare $\mainalgorithm$ with the $O(d k)$-time dynamic program (DP) described in Section~\ref{sec:dp} as a baseline.
Note that this DP already has a better time complexity than the best previously published algorithm from \cite{FMN15}.
Both algorithms are implemented in the Julia programming language (version 0.5.0), which typically achieves performance close to C/C++ for combinatorial algorithms.

\subsection{Synthetic data}
We perform experiments with synthetic data in order to investigate how the algorithms scale as a function of the input size.
We study two different setups: (i) the running time of the projection algorithms on their own, and (ii) the overall running time of a sparse recovery algorithm using the projection algorithms as a subroutine.
For the latter, we use the structure-aware variant of the popular CoSaMP algorithm \cite{NT09, BCDH10}.

\paragraph{Figures~\ref{figure:train-spikes}(a)-(b).}
Given a problem size $d$, we set the sparsity to $k = d / 50$ and generate a random separated sparse vector with parameter $\Delta = (d - 5 (k + 1)) / k - 1$.
The non-zero coefficients are i.i.d.\ $\pm 1$.
For the projection-only benchmark, we add Gaussian noise with $\sigma = 1/10$ to all coordinates in order to make the problem non-trivial.
For each problem size, we run $10$ independent trials and report their mean.

Figure~\ref{figure:train-spikes}(a) shows the speed-up obtained by our nearly-linear time projection relative to the DP baseline.
We observe that $\mainalgorithm$ is up to $150 \times$ faster.
This confirms our expectation that $\mainalgorithm$ scales gracefully with the problem size, while the DP essentially becomes a quadratic-time algorithm.

Figure~\ref{figure:train-spikes}(b) compares the running times of CoSaMP with three different projection operators.
The first variant makes no structural assumptions and uses hard thresholding as projection operator.
The other two variants use a projection for the separated sparsity model, relying on the DP baseline and $\mainalgorithm$, respectively.  
The results show that the version of CoSaMP using $\mainalgorithm$ instead of the DP is significantly faster.
Moreover, CoSaMP with a simple sparse projection has similar running time to CoSaMP with our structured projection.
Finally, we note that CoSaMP with separated sparsity requires $1.5 \times$ fewer measurements to achieve the same recovery quality as CoSaMP with standard sparsity.

\paragraph{Figure~\ref{figure:train-spikes}(c).} We fix the sparsity $k = 100$ and vary the length $d$ of the signal. The signal is obtained in the same way as for plots Figure~\ref{figure:train-spikes}(a)-(b). We observe that our algorithm runs 10x faster than the baseline. Furthermore, the plot shows that the running times of both the baseline DP and $\mainalgorithm$ scale linearly with the signal length $d$.
This behavior is expected for the DP.
On the other hand, our theoretical findings predict that the running time of $\mainalgorithm$ scales as $d \log{d}$.
This suggests that the empirical performance of our algorithm is even (slightly) better than what our proofs state.
\\
In Appendix~\ref{sec:additional-experiments} we report results of additional experiments.

\subsection{Neuronal signals}
We also test our algorithm on neuron spike train data from \cite{HB11}.
See Figure~\ref{figure:train-spikes}(d) for this input data.
First, we run CoSaMP with a ``standard sparsity'' projection.
The recovered signal is depicted in Figure~\ref{figure:train-spikes}(e).
Next, we run the convolutional sparsity CoSaMP of \cite{HB11} and use our fast projection algorithm.
The recovered signal is given in Figure~\ref{figure:train-spikes}(f). For the both experiments we use $n = 250$ measurements.

We also run the convolutional sparsity CoSaMP on neuron spike train data of length $10^5$, comparing the running time of $\mainalgorithm$ and DP as projection operators. CoSaMP with our algorithm runs $2 \times$ faster in this context. We do not compare the running time relative to CoSaMP with a standard sparsity projection as it requires $10\times$ more measurements to achieve accurate recovery.
\section{Conclusions}
We have designed a nearly-linear time algorithm for projecting onto the set of separated sparse vectors.
The core technique in our algorithm is Lagrangian relaxation. One of the key insights here is that even though there are separated sparsity instances for which the Lagrangian relaxation does not provide an optimal solution, it is still possible to obtain an optimal solution if the original input instance is only slightly perturbed.Furthermore, this perturbation does not change the final output, but rather drives the algorithm to choose an optimal solution of interest even after the hard sparsity constraint is relaxed.
Our experiments show that our algorithm is not only of theoretical significance, but also outperforms the state of the art in practice.
Exploring the power of perturbed Lagrangian relaxations for other non-convex constraint sets is an important direction for future work.
We believe that our framework will enable simple and efficient algorithms for other problems as well.

\paragraph{Acknowledgments.} We thank Arturs Backurs for insightful discussions. L. Schmidt thanks Chinmay Hegde for providing a dataset for some of our experiments. A. Mądry was supported in part by an Alfred. P. Sloan Research Fellowship, Google Research Award and the NSF grant CCF-1553428. S. Mitrovi{\' c}  was supported by Swiss NSF (grant number P1ELP2\_161820). Part of this work was carried out while S. Mitrovi{\' c} was visiting MIT.
\newpage

\appendix
\bibliographystyle{IEEEtran}
\bibliography{references}

\onecolumn

\section{Additional experiments}
\label{sec:additional-experiments}
In this section we present experiments additional to those present in Section~\ref{sec:experiments}.

\pgfplotsset{timeplotgrowreal/.style={%
  resultplot1,
  xlabel=Size $d$ (in $10^3$),
  ylabel=Speed-up of $\mainalgorithm$}}
  
\pgfplotsset{timeplotfixedd/.style={%
  resultplot1,
  xlabel=Sparsity $k$,
  ylabel=Running time (s $\cdot 10^2$)}}

\begin{figure*}[t!]
    \begin{center}
        \begin{subfigure}[b]{0.31 \textwidth}
            \begin{tikzpicture}
            \begin{axis}[title=(a), timeplotgrowreal,name=timeplot1,ymax=200,ymin=1,anchor=north west,xshift=\plotxspacing,ignore legend]
                \addplot + [mark=none] table[x=d, y=time_ratio] {real_DP_vs_Lag_grow_ratio_lagrange.txt};
                \legend{}
            \end{axis}
            \end{tikzpicture}
        \end{subfigure}
        \hfill
        \begin{subfigure}[b]{0.31 \textwidth}
            \begin{tikzpicture}
            \begin{axis}[title=(b), timeplotgrowreal,name=timeplot1,ymax=15,ymin=1,anchor=north west,xshift=\plotxspacing,ignore legend]
                \addplot + [mark=none] table[x=d, y=time_ratio] {real_CoSaMP_DP_vs_Lag_grow_ratio_lagrange.txt};
                \legend{}
            \end{axis}
            \end{tikzpicture}
        \end{subfigure}
        \hfill
        \begin{subfigure}[b]{0.31 \textwidth}
            \begin{tikzpicture}
            \begin{axis}[title=(c),timeplotfixedd,name=timeplot1,ymax=500.0,ymin=0.0,legend pos=north west, legend columns=2, xmode=log, ymode=log]
            \addplot + [mark=none] table[x=d, y=time_mean] {synthetic_DP_vs_Lag_fixed_d_Lag_time_lagrange.txt};
            \addplot + [mark=none] table[x=d, y=time_mean] {synthetic_DP_vs_Lag_fixed_d_DP_time_DP.txt};
            \legend{$\mainalgorithm$, DP}
            \end{axis}
            \end{tikzpicture}
        \end{subfigure}
        \caption{The plots compare the running times of our algorithm $\mainalgorithm$ and the DP baseline. Plots (a) and (c) are obtained by projecting signals directly on the separated sparsity model. Plot (b) shows the speed-up of CoSaMP recovery obtained by using $\mainalgorithm$ as a projection operator relative to using the DP for projection.
        }
        \label{figure:appendix-experiments}
    \end{center}
\end{figure*}
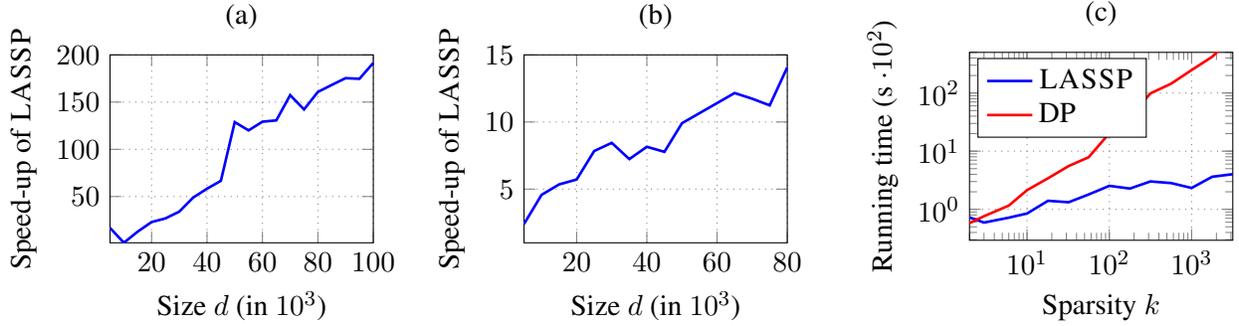

\begin{figure*}[t!]
    \begin{center}
        \begin{subfigure}[b]{0.31 \textwidth}
            \begin{tikzpicture}
            \begin{axis}[title=(a),timeplotcosamp,name=timeplot1,ymax=160.0,ymin=0.0,legend pos=north west, legend columns=3]
            \addplot table[x=d, y=time_mean] {synthetic_App1_CoSaMP_grow_Lag_time_lagrange.txt};
            \addplot table[x=d, y=time_mean] {synthetic_App1_CoSaMP_grow_DP_time_DP.txt};
            \addplot table[x=d, y=time_mean] {synthetic_App1_CoSaMP_grow_no_model_time_no-model.txt};
            \legend{$\mainalgorithm$, DP, no model}
            \end{axis}
            \end{tikzpicture}
        \end{subfigure}
        \hfill
        \begin{subfigure}[b]{0.31 \textwidth}
            \begin{tikzpicture}
            \begin{axis}[title=(b),timeplotcosamp,name=timeplot1,ymax=200.0,ymin=0.0,legend pos=north west, legend columns=3]
            \addplot table[x=d, y=time_mean] {synthetic_App2_CoSaMP_grow_Lag_time_lagrange.txt};
            \addplot table[x=d, y=time_mean] {synthetic_App2_CoSaMP_grow_DP_time_DP.txt};
            \addplot table[x=d, y=time_mean] {synthetic_App2_CoSaMP_grow_no_model_time_no-model.txt};
            \legend{$\mainalgorithm$, DP, no model}
            \end{axis}
            \end{tikzpicture}
        \end{subfigure}
        \hfill
        \begin{subfigure}[b]{0.31 \textwidth}
            \begin{tikzpicture}
            \begin{axis}[title=(c),timeplotcosamp,name=timeplot1,ymax=200.0,ymin=0.0,legend pos=north west, legend columns=3]
            \addplot table[x=d, y=time_mean] {synthetic_App3_CoSaMP_grow_Lag_time_lagrange.txt};
            \addplot table[x=d, y=time_mean] {synthetic_App3_CoSaMP_grow_DP_time_DP.txt};
            \addplot table[x=d, y=time_mean] {synthetic_App3_CoSaMP_grow_no_model_time_no-model.txt};
            \legend{$\mainalgorithm$, DP, no model}
            \end{axis}
            \end{tikzpicture}
        \end{subfigure}
        \caption{The plots compare the running times of CoSaMP with three different projection operators. The variant "no-model" makes no structural assumptions and uses hard thresholding as projection operator. The other two variants use a projection for the separated sparsity model, relying on the DP baseline and $\mainalgorithm$.}
        \label{figure:appendix-experiments-CoSaMP}
    \end{center}
\end{figure*}
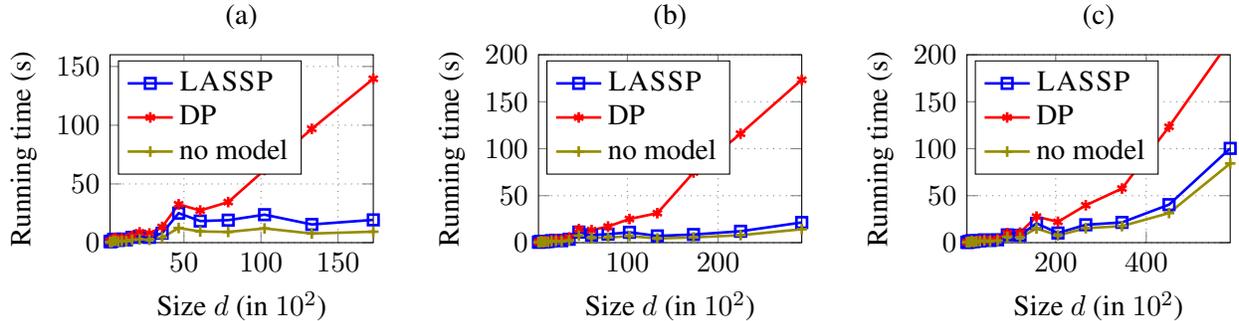
In plots Figure~\ref{figure:appendix-experiments}(a)-(b) we run experiments on real signals. The signals are obtained from signal in Figure~\ref{figure:train-spikes}(d) by concatenating its copies. Figure~\ref{figure:appendix-experiments}(a) is obtained by projecting the signal directly to the separated sparsity model. Then we plot the ratio of the running times of the baseline DP and our algorithm $\mainalgorithm$. Figure~\ref{figure:appendix-experiments}(b) represents the same type of speed-up ratio but this time for the CoSaMP recovery of signal measurements.

For the plots Figure~\ref{figure:appendix-experiments}(c) and Figure~\ref{figure:appendix-experiments-CoSaMP}(a)-(c) we make the setup similar to the one for Figures~\ref{figure:train-spikes}(a)-(b). Namely, given a problem size $d$, parameters $\alpha$ and $\beta$ we set the sparsity to $k = d / \alpha$ and generate a random separated sparse vector with parameter $\Delta = (d - \beta (k + 1)) / k - 1$.
The non-zero coefficients are i.i.d.\ $\pm 1$. We add Gaussian noise with $\sigma$ to all coordinates in order to make the problem non-trivial. For each problem size, we run $10$ independent trials and report their mean.

In Figure~\ref{figure:appendix-experiments}(c) we compare the running times of projections of the baseline DP and our algorithm $\mainalgorithm$ on synthetic data for the following parameters: $d = 5 \cdot 10^4$; varying the sparsity $k$; $\beta = 5$; and $\sigma = 0.5$. This plot confirms our theoretical findings that the running time of $\mainalgorithm$ scales more gracefully with the growth of the sparsity than the running time of DP does.

Similar to Figure~\ref{figure:train-spikes}(b), Figure~\ref{figure:appendix-experiments-CoSaMP}(a)-(c) compare the running times of CoSaMP with three different projection operators. The first variant makes no structural assumptions and uses hard thresholding as projection operator. The other two variants use a projection for the separated sparsity model, relying on the DP baseline and $\mainalgorithm$, respectively. We use the following parameters:
\begin{itemize}
    \item Figure~\ref{figure:appendix-experiments-CoSaMP}(a): $\alpha = 10$; $\beta = 2$; and $\sigma = 0.01$
    \item Figure~\ref{figure:appendix-experiments-CoSaMP}(b): $\alpha = 20$; $\beta = 8$; and $\sigma = 0.95$
    \item Figure~\ref{figure:appendix-experiments-CoSaMP}(c): $\alpha = 100$; $\beta = 30$; and $\sigma = 0.2$.
\end{itemize}
The results show that CoSaMP with a simple sparse projection has similar running time to CoSaMP with our structured projection. Moreover, CoSaMP using $\mainalgorithm$ as the structured projection is usually significantly faster than CoSaMP using the baseline DP. We note that even in the favorable case for the DP structured projection, like in Figure~\ref{figure:appendix-experiments-CoSaMP}(c) where the sparsity $k$ is $100\times$ smaller than the dimension $d$, our algorithm again shows better performance.

\section{An approximate projection counterexample}
\label{sec:noapprox}
Multiple recent algorithms for structured sparse projections build on the \emph{approximation-tolerant} framework of \cite{HIS15IT}.
In this framework, it suffices to design approximate projections instead of solving the model projection problem exactly.
For some sparsity structures, this approach has led to significantly faster algorithms \cite{HIS14icalp,HIS15icml}.
Hence it is interesting to see whether the approximation route is also helpful for separated sparsity.
In fact, it is easy to design the following $2$-approximation algorithm that runs in nearly-linear time.

Partition the vector $c$ into blocks of length $\Delta$.
Then, number those blocks 1 through $\lceil d/\Delta \rceil$ in the order they appear in $c$.
In each block, choose an index corresponding to the largest coordinate of $c$ within that block.
Split those indices into two groups $G$ and $H$ based on the parity of the corresponding block. Formally, define $G$ and $H$ as follows
\[
    G := \left\{\arg \max_{j = (i - 1) \Delta + 1 \ldots \min\{i \Delta, d\}} c_j\ |\ 1 \le i \le \lceil d/\Delta \rceil \text{ and $i$ is odd}\right\},
\]
and similarly
\[
    H := \left\{\arg \max_{j = (i - 1) \Delta + 1 \ldots \min\{i \Delta, d\}} c_j\ |\ 1 \le i \le \lceil d/\Delta \rceil \text{ and $i$ is even}\right\}.
\]
Let $g_1, \ldots, g_{|G|}$ be an ordering of the elements of $G$ so that $c_{g_i} \ge c_{g_j}$ whenever $i \le j$.
Similarly, let $h_1, \ldots, h_{|H|}$ be an ordering of the elements of $H$ such that $c_{g_i} \ge c_{g_j}$ whenever $i \le j$.

Next, define $G' := \{g_i\ |\ 1 \le i \le \min\{k, |G|\}\}$ and $H' := \{h_i\ |\ 1 \le i \le \min\{k, |H|\}\}$.
Now, it is not hard to see that the set attaining larger value among $G'$ and $H'$ has a solution value that is at least $OPT / 2$, where $OPT$ is the maximum sum attainable with a $\Delta$-separated and $k$-sparse vector (in the language of \cite{HIS15IT}, this is an approximate head projection).

While the above algorithm runs in nearly-linear time and achieves a constant-factor approximation, there is a catch.
In particular, the returned support pattern might contain fewer than $k$ indices (as a simple example, consider the case $k=2$, $\Delta = 2$, and $c = (1, 100, 1)$).
This raises the question of the sample complexity for this relaxed sparsity model.
Let $k'$ denote the number of output indices of the $2$-approximation algorithm.
As we have described in the preliminaries, the sample complexity of the separated sparsity model is $O(k \log{(n/k - \Delta)})$.
For concreteness, we now consider the case $k' = k/2$ and $d/k - \Delta \in O(1)$ (the latter is the important regime where the separated sparsity model achieves a sample complexity of $m = O(k)$).
Then we have $d/k' - \Delta \in O(1) + \Delta$. Therefore, if we would like to apply the $2$-approximation algorithm in the case $k' = k/2$, the sample complexity would be
\[
    O(k \log{(2 d/k - \Delta)}) \; = \; O(k \log{\Delta}) \; = \; O(k \log d/k)
\]
where we used the assumption that $d/k - \Delta \in O(1)$.
It is important to note that this sample complexity is worse than $O(k)$ and falls back to the $O(k \log d/k)$ sample complexity of ``standard'' $k$-sparse recovery.
So by using an approximate projection we have lost the sample complexity advantage of the separated sparsity model.
Since we already know how to recover $k$-sparse vectors in nearly-linear time without resorting to structured sparsity, the approximation algorithm does not provide a novel trade-off.

\section{Dynamic programming}
\label{sec:dp}
Before we introduce our dynamic programming (DP) algorithm for the separated sparsity problem, we briefly review a variant of the DP given in \cite{FMN15}.
We remark that the variant below has a time complexity of $O(d k)$, which is already an improvement over the $O(d^2)$ of \cite{FMN15} when $k = o(d)$. However, the authors are not aware of a work the describes the $O(d k)$ DP approach, and hence refer to it as folklore.
We improve the analysis of this DP further and give a faster variant that runs in time $O(k (d - (k - 1) \Delta))$.
In the regime where where the slack $d / k - \Delta$ is constant, the running time simplifies to $O(d + k^2)$.
So for the very sparse case $k \in \Otilde(\sqrt{d})$, our improved DP already achieves a nearly-linear running time.

\subsection{The basic dynamic program}
The folklore dynamic program fills the following table $\dpf[i][j]$.
Value $\dpf[i][j]$ can be interpreted as follows: Maximal value of choosing $j$ coordinates of $c$ with positions in $1 \ldots i$, such that any two chosen coordinates are at distance $\Delta$ at least.
\begin{alignat*}{4}
	\dpf[i][0]   & = 0 &  & \\
	\dpf[i][j]   & = -\infty, &  & \text{ if } i < 1 \\
	\dpf[i][j]   & = \max\left\{\dpf[i - 1][j], c_i + \dpf[i - \Delta][j - 1]\right\},  & & \text{ otherwise }
\end{alignat*}
The following claim follows easily:
\begin{theorem}[Folklore]\label{theorem:dp-folklore}
    $\dpf[n][k]$ is an optimal value to the $\Delta$-separated problem and can be computed in time $O(d k)$. The elements that constitute the optimal value can be output from $\dpf$ in time $O(d + k)$.
\end{theorem}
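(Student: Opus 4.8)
The plan is to prove the two assertions separately: first, that the entry $\dpf[n][k]$ equals the optimal objective of the $\Delta$-separated problem, and second, that the stated time bounds hold (I take $n = d$, so that $\dpf[n][k]$ indexes the full vector).

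For correctness I would establish, by induction on the first index $i$, the invariant that for every $i$ and every $j \ge 0$,
\[
\dpf[i][j] \;=\; \max\Big\{ \textstyle\sum_{l \in S} c_l \ :\ S \subseteq \{1,\ldots,i\},\ |S| = j,\ \sep(S) \ge \Delta \Big\},
\]
with the convention that the maximum is $-\infty$ when no feasible $S$ exists. The base cases are immediate: $\dpf[i][0] = 0$ is witnessed by the empty set, and for $i < 1$ with $j \ge 1$ there is no feasible $S$, matching the $-\infty$ entry. Note that the recursion only ever refers to $\dpf[i-1][j]$ and $\dpf[i-\Delta][j-1]$, both of which have strictly smaller first index since $\Delta \ge 1$, so an induction on $i$ alone (treating all $j$ simultaneously) is legitimate.

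For the inductive step I would fix $i \ge 1$, $j \ge 1$ and split on whether an optimal feasible set $\Sstar$ for the pair $(i,j)$ contains the index $i$. If $i \notin \Sstar$, then $\Sstar \subseteq \{1,\ldots,i-1\}$, so its value equals $\dpf[i-1][j]$ by the inductive hypothesis. If $i \in \Sstar$, then every other element of $\Sstar$ is at distance at least $\Delta$ from $i$ and hence lies in $\{1,\ldots,i-\Delta\}$; deleting $i$ leaves a feasible $(j-1)$-element, $\Delta$-separated set inside $\{1,\ldots,i-\Delta\}$, so $\Sstar$ has value $c_i + \dpf[i-\Delta][j-1]$. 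Conversely, each of the two expressions is attained by an actual feasible set (or is $-\infty$), so taking their maximum reproduces exactly the stated recursion, closing the induction. Evaluating the invariant at $(n,k)$ identifies $\dpf[n][k]$ with the optimum over $\Msupports_{k,\Delta}$, as claimed.

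For the running time, the table has $O(nk) = O(dk)$ entries, and each is computed in constant time from two previously filled entries, giving the $O(dk)$ bound. To recover the support in $O(d+k)$ time I would backtrack from cell $(n,k)$: at a cell $(i,j)$ I compare $\dpf[i][j]$ with $\dpf[i-1][j]$; if they agree I move to $(i-1,j)$ without selecting $i$, and otherwise I record $i$ and move to $(i-\Delta,\,j-1)$. Each step costs $O(1)$; the ``select'' steps each decrease $j$ and so number exactly $k$, while the ``skip'' steps each strictly decrease $i$ with $j$ fixed and so number at most $d$, bounding the path length and thus the reconstruction cost by $O(d+k)$. As this is a standard dynamic program I do not expect a real obstacle; the only points needing care are the boundary conventions (the infeasibility/$-\infty$ cases and indices $i-\Delta < 1$), verifying that the ``$i \in \Sstar$'' case genuinely forces the remaining support into $\{1,\ldots,i-\Delta\}$, and presenting both directions (achievability and optimality) that justify the $\max$ in the recursion.
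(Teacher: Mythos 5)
Your proof is correct and complete: the invariant, the case split on whether $i$ belongs to an optimal set, and the $O(d+k)$ backtracking bound (at most $d$ skip steps decreasing $i$ and exactly $k$ select steps decreasing $j$) all hold. The paper itself gives no proof of this theorem — it is stated as folklore with the remark that ``the following claim follows easily'' — and your induction on the first index is precisely the standard argument being implicitly invoked, including the boundary convention that $\dpf[i][0]=0$ takes precedence over the $-\infty$ rule when $i-\Delta<1$ and $j=1$.
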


\subsection{An improved dynamic program}
The dynamic program outlined above already achieves an improvement over the algorithms in \cite{hegdeLP,FMN15}.
However, the dynamic program is still wasteful with the state space it considers.
Consider case in which there is only one possible configuration, i.e. $d = (k - 1) \Delta + 1$ for some $k, \Delta \ge 1$.
Regardless of the cost vector $c$, an input with such parameters has only one valid solution.
Nevertheless, the natural implementation of the dynamic program above runs in $\Theta(d k)$ time.
Note that the definition of $\dpf$ does not take into account the \emph{mandatory distance} required by the separated sparsity model.  
By mandatory we refer to $\Delta$ distance between any two chosen coordinates of $c$.
This observation gives raise to new dynamic programming definition, that we denote by $\dpfinal$, which directly implements mandatory distances. Let
\[
    s := d - ((k - 1) \Delta + 1),
\]
which can be seen as the \emph{slack} distance that is not mandatory. For instance, if we choose two coordinates of $c$ with indices $i$ and $i + \Delta + 5$, and no other coordinate between them, then we say that $\Delta$ distance between them is mandatory, and the remaining distance of 5 is slack. Now, $\dpfinal$ is defined as
\begin{alignat*}{4}
	\dpfinal[i][0]   & = 0 &  & \\
	\dpfinal[i][j]   & = -\infty, &  & \text{ if } i < 0 \\
	\dpfinal[i][j]   & = \max\left\{\dpfinal[i - 1][j], c_{pos(i, j)} + \dpfinal[i][j - 1]\right\},  & & \text{ otherwise }
\end{alignat*}
where $pos(i, j) = d - (s - i) - (k - j) \Delta$. Value $pos(i, j)$ gives us the smallest index of $c$ that we can chose if to the left of that index there are $j$ other chosen indices, and slack distance of $i$ is used. With the definition of $\dpfinal$ in hands one can easily show the following theorem.
\begin{theorem}\label{theorem:dp-final}
    Value $\dpfinal[s][k]$ is an optimal value to the $\Delta$-separated problem and can be computed in time $O(k s)$. The elements that constitute the optimal value can be output from $\dpfinal$ in time $O(d + k)$.
\end{theorem}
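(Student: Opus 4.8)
The plan is to pin down the exact combinatorial meaning of each table entry $\dpfinal[i][j]$ and then verify the recurrence against that meaning by induction. The key reparametrization is to describe the position of a selected coordinate by the \emph{slack} it consumes rather than by its absolute index. Concretely, for any $\Delta$-separated selection $p_1 < \cdots < p_j$ of $j$ coordinates the rightmost one must satisfy $1 + (j-1)\Delta \le p_j \le d - (k-j)\Delta$, and a short computation shows $pos(i,j) = d - (s-i) - (k-j)\Delta = 1 + (j-1)\Delta + i$. Hence, as $i$ ranges over $\{0,\ldots,s\}$, the value $pos(i,j)$ sweeps out exactly the admissible window for $p_j$, with $i$ equal to the total slack used at or before position $p_j$. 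I would therefore claim the invariant
\[
  \dpfinal[i][j] = \max\Bigl\{ \textstyle\sum_{l=1}^{j} c_{p_l} \;:\; 1 \le p_1 < \cdots < p_j,\ p_{l+1} - p_l \ge \Delta,\ p_j \le pos(i,j) \Bigr\},
\]
with the convention that the maximum over an empty feasible set is $-\infty$.

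Next I would prove this invariant by strong induction on $i + j$. The base cases are immediate: $\dpfinal[i][0] = 0$ records the empty selection, and for $j \ge 1$ and $i < 0$ we have $pos(i,j) < 1 + (j-1)\Delta$, so no $\Delta$-separated $j$-selection fits and the value is $-\infty$. For the inductive step I would split the selections counted by the right-hand side according to whether $p_j < pos(i,j)$ or $p_j = pos(i,j)$. The first class is exactly the set counted by $\dpfinal[i-1][j]$, since $pos(i-1,j) = pos(i,j) - 1$. In the second class we fix $p_j = pos(i,j)$ and are left with a $\Delta$-separated $(j-1)$-selection whose last element obeys $p_{j-1} \le pos(i,j) - \Delta = pos(i,j-1)$; by the induction hypothesis its best value is $\dpfinal[i][j-1]$, contributing $c_{pos(i,j)} + \dpfinal[i][j-1]$. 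Taking the maximum of the two classes reproduces the stated recurrence, closing the induction. Evaluating at $i = s$, $j = k$ gives $pos(s,k) = d$, so $\dpfinal[s][k]$ is the best value of a full $\Delta$-separated $k$-selection subject only to $p_k \le d$, i.e.\ the optimum of the $\Delta$-separated problem.

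For the running time, the table has $(s+1)(k+1) = O(ks)$ entries, each computed in $O(1)$ time once $pos(i,j)$ is evaluated (itself $O(1)$), giving the claimed $O(ks)$ bound. To recover the optimal support I would backtrack from cell $(s,k)$: at $(i,j)$ compare $\dpfinal[i][j]$ with $\dpfinal[i-1][j]$; if they agree, move to $(i-1,j)$, otherwise output the index $pos(i,j)$ and move to $(i,j-1)$. Each step decrements $i$ or $j$, so the walk has length at most $s + k$ and emits exactly the $k$ chosen coordinates; since $s \le d$ this is $O(d + k)$.

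I expect the only real subtlety — the \emph{main obstacle} — to be getting the semantics and the boundary handling exactly right, in particular verifying the identity $pos(i,j) - \Delta = pos(i,j-1)$ that makes the slack bookkeeping consistent when a new element is appended, and checking that decreasing $i$ by one drops precisely the configurations with $p_j = pos(i,j)$ and nothing more. Once the reparametrization $pos(i,j) = 1 + (j-1)\Delta + i$ is in hand, everything else is a routine induction.
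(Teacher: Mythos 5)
Your proof is correct: the reparametrization $pos(i,j) = 1 + (j-1)\Delta + i$, the invariant that $\dpfinal[i][j]$ is the best $\Delta$-separated $j$-selection with last index at most $pos(i,j)$, and the induction splitting on whether $p_j = pos(i,j)$ are exactly the intended semantics of the table (the paper asserts the theorem without writing out a proof, so yours supplies the missing argument). The boundary checks ($pos(s,k)=d$, $pos(i,j)-\Delta = pos(i,j-1)$, and the $-\infty$ convention for $i<0$) and the backtracking analysis are all sound.
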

Comparing Theorem~\ref{theorem:dp-folklore} and Theorem~\ref{theorem:dp-final}, we see that for certain input parameters, e.g. when $s \in O(d)$, we did not make much progress by devising $\dpfinal$. However, as we have already mentioned, the setting we care the most is when $d/k - \Delta$ is a constant, as with such a choice of parameters the sample complexity is $O(k)$. But in that case we do achieve a significant improvement over Theorem~\ref{theorem:dp-folklore}.
\begin{corollary}
    If $d / k - \Delta \in O(1)$, then there is an algorithm that solves the $\Delta$-separated problem in $O(k^2 + d)$ time.
\end{corollary}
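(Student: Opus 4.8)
The plan is to derive the corollary directly from Theorem~\ref{theorem:dp-final} by bounding the slack quantity $s$ in the assumed parameter regime. Recall that the improved dynamic program computes the optimal value $\dpfinal[s][k]$ in time $O(ks)$ and reconstructs the optimal support in an additional $O(d + k)$ time, where $s := d - ((k-1)\Delta + 1)$ is the slack distance that is not mandated by the separation constraint. Since $O(d+k)$ is always dominated by the target bound $O(d + k^2)$, the entire argument reduces to showing that $ks = O(d + k^2)$ under the hypothesis $d/k - \Delta \in O(1)$.

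First I would make the hypothesis explicit by writing $\Delta = d/k - c_0$ for a constant $c_0 \ge 0$; feasibility of the instance guarantees $(k-1)\Delta + 1 \le d$, so $s \ge 0$ and the substitution is well-defined. Plugging this into the definition of $s$ gives
\[
    s = d - (k-1)\left(\frac{d}{k} - c_0\right) - 1 = \frac{d}{k} + (k-1)c_0 - 1 ,
\]
so that $s = O\!\left(\frac{d}{k} + k\right)$ because $c_0$ is a constant. The \emph{key step} is then the elementary multiplication $ks = O\!\left(k\left(\frac{d}{k} + k\right)\right) = O(d + k^2)$. Combining this with the $O(d+k)$ reconstruction time from Theorem~\ref{theorem:dp-final} yields a total running time of $O(d + k^2)$, as claimed.

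I do not anticipate a genuine obstacle here: all the substance lies in Theorem~\ref{theorem:dp-final}, whose state space already strips away the mandatory $\Delta$-gaps, and the corollary merely records what that theorem delivers in the statistically most relevant regime. The one point worth stating carefully is exactly why the $O(1)$ slack-per-element assumption forces $s = O(d/k + k)$ rather than the trivial $s = O(d)$; this is precisely the gain over the folklore bound $O(dk)$ of Theorem~\ref{theorem:dp-folklore}, which ignores the mandatory separation and therefore cannot profit from a small value of $d/k - \Delta$.
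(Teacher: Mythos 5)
Your proposal is correct and matches the paper's intended derivation: the corollary is stated as an immediate consequence of Theorem~\ref{theorem:dp-final}, with the surrounding text noting exactly the computation you carry out, namely that $s = d - ((k-1)\Delta + 1) = O(d/k + k)$ when $d/k - \Delta \in O(1)$, so $O(ks) = O(d + k^2)$. The only nitpick is that the hypothesis gives an inequality $\Delta \ge d/k - c_0$ rather than an equality, but since $s$ is decreasing in $\Delta$ your upper bound on $s$ goes the right way regardless.
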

\section{Two dimensional $\Delta$-separated problem is NP-Hard}
\label{app:twod}

In this section we consider a natural extension of the $\Delta$-separated problem in which vector $c$ is two dimensional, i.e. $c \in \bbR^{d \times m}$. Formally, given $c$, sparsity parameter $k$, and integer $\Delta$, the goal is to output $k$ pairs of integers $(i^t, j^t)$ such that:
\begin{itemize}
    \item $1 \le i^t \le d$, and $1 \le j^t \le m$, for all $t = 1 \ldots k$;
    \item for every $t \neq s$ we have $\min\left\{\left|i^t - i^s\right|, \left|j^t - j^s\right|\right\} \ge \Delta$; and
    \item $\sum_{t = 1}^{k} c_{i^t, j^t}$ is maximized.
\end{itemize}
We refer by $\twoddelta(c, k, \Delta)$ to that problem, and show it is NP-hard by reducing it to problem $\boxpack$ studied in~\cite{2d-delta-separation}. Let us start by recalling the definition of $\boxpack$.

Let $k$ be a number of identical $3 \times 3$ squares.\footnote{We choose sides to be of length 3 as already that setting is sufficient to prove NP-hardness of $\boxpack$, see the proof of Theorem~2 in~\cite{2d-delta-separation}.} Our goal is to pack the $k$ squares into a region of plane defined by set $R$. Set $R$ consists of pairs of integers, where every pair represents the point at which a square can be placed, e.g. the upper-left corner of a square. Every square can only be placed so that its sides are parallel to the axis. By $\boxpack(k, R)$ we refer to the problem of answering whether for given $k$ and $R$ one can place all the $k$ squares in the described way, such that no two squares overlap. In~\cite{2d-delta-separation} is proved the following result.
\begin{theorem}[Theorem~2 in~\cite{2d-delta-separation}]\label{theorem:boxpack-NP-hard}
    $\boxpack$ is $NP$-complete.
\end{theorem}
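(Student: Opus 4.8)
The statement is imported verbatim as Theorem~2 of \cite{2d-delta-separation}, so strictly speaking the plan is simply to invoke that reference. Nonetheless, here is how I would argue it from scratch. Membership in $NP$ is immediate: a candidate solution is an assignment of the $k$ squares to points of $R$, and checking that all chosen anchor points lie in $R$ and that no two of the corresponding $3 \times 3$ squares overlap takes polynomial time (there are only $O(k^2)$ pairs to test, and two identical axis-parallel squares overlap exactly when their anchor points are close in \emph{both} coordinates). Thus the entire difficulty lies in the hardness direction.

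For $NP$-hardness I would reduce from a known $NP$-complete problem. The key observation is that, since all squares are identical and may only be anchored at the prescribed points of $R$, a feasible packing of $k$ squares corresponds exactly to an independent set of size $k$ in the \emph{conflict graph} on $R$, whose vertices are the candidate points and whose edges join pairs whose squares would overlap. Hence it suffices to realize hard instances of an independent-set-type problem as conflict graphs of congruent axis-parallel squares on the integer grid. The natural source problems are a planar or grid-restricted variant of $3$-SAT (or of Independent Set), and the reduction would be carried out by laying out, inside $R$, \emph{variable gadgets} whose two locally exclusive packings encode a truth value, \emph{wire gadgets} that propagate a consistent value along chains of mutually overlapping candidate positions, and \emph{clause gadgets} that admit a full packing precisely when at least one incident literal is set to true; one then argues that $k$ squares can be placed without overlap if and only if the source instance is a yes-instance.

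The step I expect to be the main obstacle is exactly this gadget construction under geometric rigidity: because every square is a fixed $3 \times 3$ block and the only design freedom is the choice of the candidate set $R$, one cannot encode an arbitrary conflict graph, and must instead engineer the relative offsets of the points so that the induced overlap pattern enforces the intended logical constraints while keeping all coordinates polynomially bounded. The footnote accompanying the statement records the (mildly surprising) fact that side length $3$ already leaves enough room for these gadgets, so no larger squares are needed. Since \cite{2d-delta-separation} carries out this construction in full, in the paper I would cite $\boxpack$'s hardness directly rather than reproduce the gadgets, and devote the remaining effort to the reduction from $\boxpack$ to $\twoddelta$.
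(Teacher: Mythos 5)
Your proposal matches the paper exactly: Theorem~\ref{theorem:boxpack-NP-hard} is stated as an imported result (Theorem~2 of \cite{2d-delta-separation}) and the paper offers no proof of its own, relying entirely on the citation, which is precisely what you conclude should be done. Your additional sketch of membership in $NP$ and of a gadget-based hardness reduction is plausible but cannot be checked against the paper, since the paper never reproduces the argument for $\boxpack$.
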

Now we utilize Theorem~\ref{theorem:boxpack-NP-hard} to show that $\twoddelta(c, k, \Delta)$ is NP-hard as well.
\begin{theorem}
    Problem $\twoddelta(c, k, \Delta)$ is NP-hard.
\end{theorem}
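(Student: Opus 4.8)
The plan is to prove NP-hardness by a polynomial-time reduction \emph{from} $\boxpack$, which is NP-complete by Theorem~\ref{theorem:boxpack-NP-hard}. Given an instance $\boxpack(k, R)$ with $3 \times 3$ squares and allowed corner-positions $R \subseteq [D] \times [M]$, I would build an instance $\twoddelta(c, k, \Delta)$ on a suitable grid, together with a target objective value, and argue that the optimum of the separated problem reaches the target if and only if all $k$ squares can be packed. The natural first attempt is to take $\Delta = 3$, set $c_{a,b} = 1$ for $(a,b) \in R$ and $c_{a,b} = 0$ otherwise, and ask whether the optimal objective equals $k$: a selection of $k$ unit-cost, pairwise-separated entries would then be exactly a choice of $k$ positions of $R$ that are pairwise separated in the sense required by the model.

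The key structural fact I would use is that the constraint $\min\{|i^t - i^s|, |j^t - j^s|\} \ge \Delta$ over \emph{all} pairs is equivalent to requiring that the selected row-indices are pairwise at distance $\ge \Delta$ and, simultaneously, that the selected column-indices are pairwise at distance $\ge \Delta$. In other words, two chosen points are compatible exactly when they are far apart in \emph{both} coordinates, whereas two $3 \times 3$ squares fail to overlap exactly when they are far apart in \emph{at least one} coordinate (their corners satisfy $|a-a'| \ge 3$ or $|b-b'|\ge 3$).

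This discrepancy — ``far in both'' for separation versus ``far in at least one'' for non-overlap — is precisely where I expect the main difficulty to lie, and it already shows that the naive embedding above is too strong: for instance, with $k=2$ and $R = \{(0,0),(0,5)\}$ the two squares can be packed, yet their images share a row-band and hence violate $\Delta$-separation. To repair this, I would not embed $R$ directly but instead spread its positions out by a gadget: replace each position by a point (or a small cluster with a single high-value representative that is forced to be chosen) placed on a sufficiently sparse sub-grid, so that the only pairs that can end up sharing a row-slab or a column-slab are exactly the pairs of overlapping squares. Concretely, I would design the layout and choose $\Delta$ so that the ``shared-slab'' conflict graph of the constructed points coincides with the overlap graph of the squares, exploiting the freedom to place points anywhere and to insert inert zero-cost filler that pushes unrelated positions apart in both coordinates.

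With such a layout in hand, correctness follows in two directions: any packing of the $k$ squares selects $k$ mutually non-overlapping positions, whose representatives are then pairwise $\Delta$-separated and contribute total value $k$; conversely, any $\Delta$-separated selection achieving value $k$ must pick $k$ representatives that are pairwise compatible, i.e.\ $k$ pairwise non-overlapping squares. Since the grid, the cost vector $c$, and $\Delta$ all have size polynomial in $|R|$ and $k$, the reduction runs in polynomial time, establishing that $\twoddelta(c,k,\Delta)$ is NP-hard. The crux of the argument is the gadget/layout step that reconciles the two different notions of separation; verifying that it creates exactly the overlap conflicts and introduces no spurious long-range slab conflicts is the part I would expect to require the most care.
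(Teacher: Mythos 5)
Your reduction goes in the right direction and from the right source problem ($\boxpack$, via Theorem~\ref{theorem:boxpack-NP-hard}), and you have correctly put your finger on the one delicate point: with the separation constraint read literally as $\min\{|i^t-i^s|,|j^t-j^s|\}\ge\Delta$, two selected points must be far apart in \emph{both} coordinates, whereas two $3\times 3$ squares are disjoint as soon as their corners are far apart in \emph{at least one} coordinate, i.e.\ $\max\{|i-i'|,|j-j'|\}\ge 3$. Your $(0,0),(0,5)$ example showing that the direct embedding then fails is correct. The problem is that your repair -- a ``gadget/layout step'' that re-embeds the positions of $R$ so that the $\min$-separation conflict graph coincides with the square-overlap graph -- is never constructed, and it is the entire technical content of the proof under that reading. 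It is not even clear such a gadget can exist in general: the conflict graph of a $\min$-separated instance is by definition the union of two unit-interval (indifference) graphs, one on the row projection and one on the column projection, which is a rather restricted class (for instance, a vertex with at least three neighbours in one of the two coordinate graphs forces an edge among those neighbours). You would have to show that every overlap graph arising from a $\boxpack$ instance admits such a representation, computable in polynomial time, with no spurious edges; as written, the proposal asserts this can be arranged but gives no construction and no argument, so it does not constitute a proof.

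For comparison, the paper's proof is exactly the ``naive'' embedding you discard: after decomposing the extended region $R^E$ into connected components, it sets $c^F$ to be the $0/1$ indicator of admissible corner positions in a component $F$, takes $\Delta=3$, and claims that $3$-separated selections of $1$-entries correspond bijectively (in value) to packings of non-overlapping squares. That correspondence is valid precisely when separation is interpreted in the $\ell_\infty$ sense, i.e.\ with $\max$ in place of $\min$ (the natural reading if one thinks of the $2$D model as ``supports of $\Delta\times\Delta$ blocks do not overlap''); under the literal $\min$ definition the paper's claim $opt_F=val_F$ fails in one direction for the same reason your example exhibits. So the honest summary is: the paper's argument works for the $\max$/$\ell_\infty$ version of the definition (and the $\min$ in the problem statement is best read as a typo), while your proposal targets the literal $\min$ version and leaves the decisive step -- the gadget reconciling the two notions of separation -- as an unproven claim. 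Either commit to the $\ell_\infty$ reading and run the direct embedding (adding the component decomposition and the iteration over target values $i$, which you also omit), or supply the gadget explicitly; without one of these the proof is incomplete.
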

\begin{proof}
We provide a polynomial time reduction of $\boxpack(k, R)$ to a sequence of $\twoddelta$ instances.

Let $R^E$, denoting "extended $R$", be the set of all integers points that can be occupied by a square, not necessarily its corners. Observe that $\left|R^E\right| \le 9 |R|$. We say that two points in $R^E$ are adjacent if they share the same $x$- or the same $y$-coordinate. Consider a connected component $F$ of $R^E$.

Assume that $\boxpack(i, F)$ can be reduced to an $\twoddelta$ instance. Then, let us show that $\boxpack(k, R)$ can be reduced to polynomially many $\twoddelta$ instances.

Let $\cF$ be the set of all connected components of $R^E$. Then, for every $C \in \cF$ define $val_F$ as follows
\[
    val_F := \arg\max_{i \in \{0, \ldots, \min\{|F|, k\}\}} \boxpack(i, F) \text{ equals true}.
\]
Now, clearly, if $\sum_{F \in \cF} val_F \ge k$, then $\boxpack(k, R)$ equals true. Observe that in order to compute all $val_F$, we need to compute only polynomially many instances of $\boxpack$, where all the input parameters are bounded by $k$ and $R$. So, it remains to reduce $\boxpack(i, F)$ to an instance of $\twoddelta$.

To that end, consider $F \in \cF$. Let $h$ and $w$ be the smallest values such that $F$ is enclosed by an axis-parallel sides rectangle $T$ of height $h$ and width $w$. With loss of generality, assume that $T$ and $F$ are translated so that the bottom-left corner of $T$ at $(1, 1)$. Define vector $c \in \{0, 1\}^{h \times w}$ to be an indicator vector of points of $C$ at which can be placed square, i.e.
\[
    c_{i, j}^F =
    \begin{cases}
        1 & \text{if } 3 \times 3 \text{ square with upper-left corner at } (i, j) \text{ is in } F \\
        0 & \text{otherwise}
    \end{cases}
\]
Note that $h w \le |F|^2$. Next, define $opt_F$ as follows
\[
    opt_F := \max_{i \in \{1, \ldots, \min\{|F|, k\}\}} \twoddelta(c^F, i, 3).
\]
Then, we claim $opt_F$ equals $val_F$. Now, it is easy to see that the claim is true. Value $val_F$ represents the maximal number of squares that can be packed within $F$. Each such square corresponds to an entry of $c^F$ which has value $1$. So, we have $\twoddelta(c^F, val_F, 3) = val_F$, and hence $opt_F \ge val_F$. On the other hand, by the construction of $c^F$, $opt_F$ represents a number of squares (not necessarily maximum, though) that can be packed in $F$. So, we have $val_F \ge opt_F$, and therefore $opt_F = val_F$ as claimed.

This concludes our proof.
\end{proof}
\section{Uniform size neuronal spike trains}
\label{app:blocks}
We propose a generalization of the $\Delta$-separated model. In this model, we assume that neuronal spike trains correspond to blocks of uniform size. We use $b$ to denote the size of blocks. Formally, we define model $\Mmodel_{k,\Delta, b}$ to be the set of all vector in $\bbR^d$ such that non-zero entries are grouped in blocks of size $b$, there are exactly $k$ blocks, and every two blocks are separated by at least $\Delta$.

In order to apply the existing framework for recovering measured vectors that belong to $\Mmodel_{k,\Delta, b}$, we develop a method that solves the following projection problem. Given vector $c \in \bbR^d$, the goal is to output $k$ indices $p_1, \ldots, p_k$ such that
\begin{itemize}
    \item $1 \le p_i \le d - b + 1$, for every $i = 1 \ldots k$;
    \item $\left| p_i - p_j \right| \ge \Delta + b - 1$, for all $i \neq j$; and,
    \item the sum
        \[
            \sum_{i = 1}^k \sum_{j = 0}^{b - 1} c_{p_i + j}
        \]
        is maximized.
\end{itemize}
We refer to that problem by $(\Delta, b)$-separated. Now, it is easy to show the following claim.
\begin{theorem}\label{theorem:delta-b}
    Given vector $c \in \bbN^d$, sparsity $k$, and block size $b$, define vector $c^b \in \bbN^{d - b + 1}$ as
    \[
        c^b_i := \sum_{j = i}^{i + b - 1} c_j.
    \]
    Let $S$ be a set of indices corresponding to an optimal solution for $(\Delta + b - 1)$-separated problem for cost vector $c^b$ and sparsity $k$. Then, $S$ is an optimal solution to $(\Delta, b)$-separated problem for cost vector $c$ and sparsity $k$.
\end{theorem}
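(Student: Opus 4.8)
The plan is to establish a bijection between feasible solutions of the $(\Delta, b)$-separated problem on the original cost vector $c$ and feasible solutions of the $(\Delta + b - 1)$-separated problem on the aggregated cost vector $c^b$, and show that this bijection preserves the objective value. Since the two problems then have identical feasible regions (under the natural index-identity map) and identical objective values, their optimal solutions coincide, which is exactly the claim.

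First I would verify that the objective values agree term-by-term. By the definition $c^b_i = \sum_{j=i}^{i+b-1} c_j$, for any index set $S = \{p_1, \ldots, p_k\}$ we have
\[
    \sum_{i \in S} c^b_i \; = \; \sum_{i=1}^k c^b_{p_i} \; = \; \sum_{i=1}^k \sum_{j=0}^{b-1} c_{p_i + j},
\]
which is precisely the objective of the $(\Delta, b)$-separated problem evaluated at $S$. Hence the two objectives are literally the same function of $S$, so maximizing one is the same as maximizing the other \emph{provided the feasible sets coincide}.

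Next I would check that the feasibility constraints match up. The $(\Delta, b)$-separated problem requires $1 \le p_i \le d - b + 1$ and $|p_i - p_j| \ge \Delta + b - 1$ for $i \neq j$. The $(\Delta+b-1)$-separated problem on $c^b \in \bbN^{d-b+1}$ (i.e.\ the problem in \eqref{eq:proj-massaged} with separation parameter $\Delta + b - 1$ and dimension $d - b + 1$) requires each index to lie in $\{1, \ldots, d - b + 1\}$ and any two chosen indices to differ by at least $\Delta + b - 1$. These two index-range and separation conditions are identical. Therefore $S$ is feasible for one problem if and only if it is feasible for the other, and the two problems are \emph{exactly} the same optimization problem written over the same ground set. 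The theorem then follows immediately: an optimal $S$ for $(\Delta + b - 1)$-separated on $c^b$ is, by equality of feasible set and objective, optimal for $(\Delta, b)$-separated on $c$.

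**The only genuine subtlety** — and the step I would be most careful about — is making sure the intended interpretation of ``blocks $\ge \Delta$ apart'' truly translates to the separation $\Delta + b - 1$ between \emph{starting} indices $p_i$, rather than some off-by-one variant. A block starting at $p_i$ occupies coordinates $p_i, \ldots, p_i + b - 1$; requiring a gap of at least $\Delta$ (that is, $\Delta - 1$ empty coordinates) between the \emph{end} of one block and the \emph{start} of the next means the next start satisfies $p_j \ge (p_i + b - 1) + \Delta = p_i + \Delta + b - 1$, giving exactly $|p_i - p_j| \ge \Delta + b - 1$. I would state this translation explicitly to confirm consistency with the definition of $\Mmodel_{k,\Delta,b}$. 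Aside from this bookkeeping, the argument is a direct reduction with no real obstacle, so I would keep the proof short, noting only that the reduction preserves both feasibility and objective value.
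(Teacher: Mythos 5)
Your proof is correct and takes essentially the same approach as the paper: both arguments rest on the observation that the feasible sets and the objective values of the two problems coincide, so optimality transfers directly (the paper merely phrases this as a short contradiction argument, while you spell out the term-by-term objective identity and the index-range/separation match more explicitly).
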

\begin{proof}
    Clearly, $S$ is feasible choice of indices for $(\Delta, b)$-separated problem. Towards a contradiction, assume that there exists another set of indices $S'$ that achieves larger value than $S$.
    
    But then, $S'$ is a feasible choice of indices for $(\Delta + b - 1)$-separated problem. Next, notice that $S$, and also $S'$, achieve the same value for both $(\Delta, b)$- and $(\Delta + b - 1)$-separated problem. However, as $S'$ achieves higher value than $S$, this contradicts our assumption that $S$ is an optimal choice of indices for $(\Delta + b - 1)$-separated problem.
\end{proof}
Now we can solve $(\Delta, b)$-separated problem in nearly-linear time.
\begin{corollary}
    A solution to $(\Delta, b)$-separated problem can be computed in nearly-linear time.
\end{corollary}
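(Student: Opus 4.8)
The plan is to combine the reduction of Theorem~\ref{theorem:delta-b} with the nearly-linear time projection algorithm of Theorem~\ref{thm:introrandomized} (or its deterministic counterpart, Theorem~\ref{thm:introexact}). Theorem~\ref{theorem:delta-b} already does the conceptual heavy lifting: it asserts that an optimal solution to the $(\Delta, b)$-separated problem on the cost vector $c$ is exactly an optimal solution to the ordinary $(\Delta + b - 1)$-separated sparsity problem on the aggregated vector $c^b$, whose entries are $c^b_i = \sum_{j = i}^{i + b - 1} c_j$. So it suffices to (i) form $c^b$, (ii) invoke our projection algorithm with separation parameter $\Delta' := \Delta + b - 1$ on the instance $(c^b, k)$, and (iii) report the support that it returns.

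First I would compute $c^b \in \bbN^{d - b + 1}$. Evaluating each entry from scratch would cost $O(b)$ time, but a single left-to-right sweep maintaining a sliding-window sum produces all entries in $O(d)$ total time via the update $c^b_{i + 1} = c^b_i - c_i + c_{i + b}$. Each coordinate of $c^b$ is a sum of at most $b \le d$ integers of at most $\gamma$ bits, hence representable in $\gamma + \lceil \log b \rceil \le \gamma + \lceil \log d \rceil$ bits; the bit precision therefore grows only by an additive $O(\log d)$ term and stays within the regime analyzed for $\mainalgorithm$.

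Next I would run $\mainalgorithm$ on $(c^b, k)$ with separation parameter $\Delta'$. By Theorem~\ref{thm:introrandomized}, with probability $1 - 1/(d - b + 1)$ this returns an optimal $\Delta'$-separated support for $c^b$ in time $O\big((d - b + 1)(\gamma + \log d)\big)$, which is nearly-linear in $d$; by Theorem~\ref{theorem:delta-b} the same support is optimal for the $(\Delta, b)$-separated problem on $c$. Adding the $O(d)$ cost of forming $c^b$ and the $O(\log d)$ precision overhead, the overall running time remains nearly-linear, as claimed. (If a deterministic guarantee is preferred, one invokes Theorem~\ref{thm:introexact} instead at an analogous nearly-linear cost.) There is no genuine obstacle here, since the structural correctness comes for free from Theorem~\ref{theorem:delta-b}; the only point requiring care is precisely the bookkeeping above, namely verifying that the aggregation step can be carried out in linear time and that it inflates the bit-precision only by an additive $O(\log d)$, so that the nearly-linear guarantee of $\mainalgorithm$ is preserved.
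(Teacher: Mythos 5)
Your proposal is correct and follows the same route as the paper: apply the reduction of Theorem~\ref{theorem:delta-b} (computable in linear time via a sliding-window sum) and then invoke the nearly-linear time separated-sparsity algorithm on the aggregated vector $c^b$ with separation $\Delta + b - 1$; the paper cites its deterministic algorithm (Theorem~\ref{theorem:recover-primal}) where you cite $\mainalgorithm$, but this is an immaterial difference. Your additional bookkeeping about the $O(\log d)$ bit-precision overhead is a welcome, if minor, refinement over the paper's one-line argument.
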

\begin{proof}
    We use the reduction from Theorem~\ref{theorem:delta-b} to reduce $(\Delta, b)$-separated problem to $(\Delta + b - 1)$-separated one. The reduction can be applied in linear time. Then, the claim follows by Theorem~\ref{theorem:recover-primal}.
\end{proof}

\subsection{Sample complexity}
Next we analyze the sample complexity of $\Mmodel_{k,\Delta, b}$ model. To that end, we count the number of support in that model. Now, all the supports are captured by $(\alpha_0, \alpha_1, \ldots, \alpha_k)$, where $\alpha_i \ge 0$ and $\sum_{i = 0}^k \alpha_i = d - k b - (k - 1) \Delta$. The idea is that $k$ blocks split the vector in $k + 1$ regions. All but the first and last region correspond to coordinates between two neighboring blocks. Value $\alpha_0$ and $\alpha_k$ correspond to coordinates before the first block and after the last block, respectively. Value $\alpha_i$, for $1 \le i < k$ correspond to the slack distance between block $i$ and $i + 1$.

Let $\Msupports_{k,\Delta, b}$ be the family of support patterns of $\Mmodel_{k,\Delta, b}$. Then, we have:
\[
    \left|\Msupports_{k,\Delta, b}\right| = \binom{d - k b - (k - 1) \Delta + k}{k}.
\]
Now, from the result that there are RIP matrices with $O\left(k + \log{\left| \Msupports_{k,\Delta, b} \right|}\right)$ rows, \cite{BCDH10}, the following claim follows.
\begin{theorem}
    The sample complexity of $\Mmodel_{k,\Delta, b}$ is $O(k \log{(d/k - (b + \Delta) + (\Delta - 1)/k)})$. In particular, if $(\Delta - 1)/k \in O(1)$, then the sample complexity is $O(k \log{(d/k - (b + \Delta)})$.
\end{theorem}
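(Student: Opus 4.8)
The plan is to combine the exact count of support patterns established immediately above with the RIP-matrix existence result of \cite{BCDH10}, and then estimate the resulting binomial coefficient. Recall that $\Mmodel_{k,\Delta,b}$ admits
\[
    \left|\Msupports_{k,\Delta, b}\right| = \binom{d - k b - (k - 1) \Delta + k}{k}
\]
support patterns, and that there exist RIP matrices for this model with $O\!\left(k + \log{\left| \Msupports_{k,\Delta, b} \right|}\right)$ rows. Hence the whole task reduces to upper-bounding $\log \binom{N + k}{k}$, where $N := d - kb - (k-1)\Delta$ is the total ``slack'' distance, which is non-negative exactly when the model is non-empty.

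First I would apply the standard estimate $\binom{n}{k} \le (en/k)^k$ with $n = N + k$, which yields
\[
    \log \binom{N + k}{k} \;\le\; k \log\!\left(\frac{e(N+k)}{k}\right) \;=\; k\left(\log\!\left(\frac{N}{k} + 1\right) + O(1)\right).
\]
Then I would simplify $\tfrac{N}{k} = \tfrac{d}{k} - b - \tfrac{(k-1)\Delta}{k} = \tfrac{d}{k} - (b + \Delta) + \tfrac{\Delta}{k}$, so that $\tfrac{N}{k} + 1 = \tfrac{d}{k} - (b+\Delta) + \tfrac{\Delta}{k} + 1$. Folding the additive constant and the $e$ factor into the argument of the logarithm reproduces the stated bound $O\big(k \log(d/k - (b+\Delta) + (\Delta-1)/k)\big)$: the ``$+1$'' coming from the $+k$ in the stars-and-bars count and the difference between $\Delta/k$ and $(\Delta-1)/k$ together amount to only an additive constant in $(1,2]$, which the big-$O$ absorbs. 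Combining with the $O(k)$ term from the RIP result leaves the overall order unchanged.

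For the ``in particular'' clause I would substitute the hypothesis $(\Delta-1)/k \in O(1)$, so that the argument of the logarithm becomes $d/k - (b+\Delta) + O(1)$; in the non-trivial regime where $d/k - (b+\Delta) \ge 1$ one has $\log(d/k - (b+\Delta) + O(1)) = O(\log(d/k - (b+\Delta)))$, which gives the simplified form $O(k \log(d/k - (b+\Delta)))$.

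The main obstacle here is not conceptual but a matter of bookkeeping: one must check that all the additive and multiplicative constants introduced by the binomial estimate and by the stars-and-bars offset can be pushed inside the big-$O$ without distorting the stated argument of the logarithm. This requires invoking non-triviality of the model, namely $N \ge 0$, which guarantees $\tfrac{N}{k}+1 \ge 1$ so that the logarithm is well defined and the constant absorption is legitimate; everything else is routine algebra.
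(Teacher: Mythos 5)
Your proposal is correct and follows essentially the same route as the paper, which simply states the support count $\binom{d - kb - (k-1)\Delta + k}{k}$ and invokes the $O(k + \log|\Msupports_{k,\Delta,b}|)$ RIP-matrix result of \cite{BCDH10}, leaving the binomial estimate implicit. You have merely filled in the routine $\binom{n}{k} \le (en/k)^k$ calculation and the constant-absorption bookkeeping that the paper omits.
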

\section{Omitted proofs from Section~\ref{sec:dual} and study of $\LPdeltadual$}
\label{app:dual}
Our running-time results are given with respect to $\gamma$, where $\gamma$ is the maximal number of bits needed to store any $c_i$. As $\LPdeltaprimal$ and $\LPdeltadual$ are invariant under shifting and multiplication of $c$, for the sake of clarity of our exposition and without loss of generality we assume $c$ is an integral vector with non-negative entries.

\begin{restatable}{lemma}{primaltum}
\label{lemma:primal-is-TUM}
    The constraint matrix of problem $\LPdeltaprimal$ is \emph{totally unimodular} (TUM).
\end{restatable}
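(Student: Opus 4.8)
The plan is to exploit the very special structure of the rows of the constraint matrix $A$ of $\LPdeltaprimal$: every row is (up to sign) the indicator vector of a contiguous block of columns. Indeed, the equality row $\sum_{i=1}^d u_i = k$ is the indicator of the interval $[1,d]$, and the $i$-th packing row $\sum_{j=i}^{\min\{i+\Delta-1,d\}} u_j \le 1$ is the indicator of the interval $[i,\min\{i+\Delta-1,d\}]$; the nonnegativity constraints only contribute rows with a single nonzero entry, which are harmless for the argument below. Thus the essential part of $A$ has the \emph{consecutive-ones property} (each row's $1$'s occupy consecutive columns), and it suffices to prove the general fact that any $0/1$ matrix with this property is totally unimodular.

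First I would reduce the claim to a determinant computation on a single square matrix. By definition $A$ is TUM iff every square submatrix has determinant in $\{-1,0,1\}$. A square submatrix is obtained by selecting a set of rows together with a set of columns $c_1 < \cdots < c_t$. The key observation is that restricting to a column subset \emph{preserves} the consecutive-ones property: if a row has $1$'s exactly on an interval $[l,r]$, then in the submatrix its $1$'s occupy the positions $\{p : l \le c_p \le r\}$, which is again a contiguous range of indices because $c_1 < \cdots < c_t$. Hence it is enough to show that any $t \times t$ matrix $M$ in which each row is the indicator of an interval of columns satisfies $\det M \in \{-1,0,1\}$.

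Next I would apply a determinant-preserving column operation to turn $M$ into an incidence-type matrix. Processing $j = 1, 2, \ldots, t-1$ from left to right, replace column $j$ by column $j$ minus column $j+1$ (the last column is left unchanged). Since at step $j$ the column $j+1$ is still the original one, a row that was the indicator of $[l,r]$ becomes a row with a single $+1$ in column $r$ and a single $-1$ in column $l-1$ (the $-1$ simply disappears when $l=1$), and zeros elsewhere. Each such operation adds a multiple of one column to another and therefore does not change the determinant, so $\det M$ equals $\det M'$ for a matrix $M'$ in which every row has at most one $+1$ and at most one $-1$.

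Finally I would close the argument by induction on $t$. If some row of $M'$ is identically zero, then $\det M' = 0$. If some row has a single nonzero entry $\pm 1$, expand the determinant along that row to reduce to a $(t-1)\times(t-1)$ matrix of the same type and invoke the inductive hypothesis. Otherwise every row has exactly one $+1$ and one $-1$, hence every row sums to zero, so $M' \allones = 0$; the columns are then linearly dependent and $\det M' = 0$. In all cases $\det M' \in \{-1,0,1\}$, so $\det M \in \{-1,0,1\}$, which establishes total unimodularity. I expect the only delicate points to be bookkeeping rather than conceptual: verifying that the consecutive-ones property survives the passage to a column submatrix, and ordering the column operations so that each uses an as-yet-unmodified neighbouring column. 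The boundary case $l=1$ is harmless, since it merely deletes the $-1$ and leaves a single-nonzero row, which the induction already handles.
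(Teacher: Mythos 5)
Your proof is correct, and it overlaps with the paper's proof only in its first step: both arguments begin by observing that, after splitting the equality constraint into two inequalities and absorbing signs (the paper does this via $D'_{i,j} := \lvert D_{i,j}\rvert$, you via the phrase ``up to sign,'' which is the same row-scaling by $-1$ and is harmless for determinants), every square submatrix is a binary \emph{interval matrix}. From there the two proofs diverge: the paper simply cites the known fact that binary interval matrices are totally unimodular (\cite{nemhauser1988integer}) and stops, whereas you prove that fact from scratch --- checking that the consecutive-ones property survives passage to a column submatrix, applying the determinant-preserving column differences $\mathrm{col}_j \leftarrow \mathrm{col}_j - \mathrm{col}_{j+1}$ to reduce each row to at most one $+1$ and one $-1$, and closing by induction (cofactor expansion on a row with a single nonzero, or singularity when every row sums to zero). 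All of these steps check out: the column operations indeed use only as-yet-unmodified columns, the minor after cofactor expansion retains the ``at most one $+1$ and one $-1$ per row'' form, and $M'\mathbb{1}=0$ does force $\det M'=0$. What your route buys is a self-contained, elementary proof with no black box; what the paper's route buys is brevity. Neither version has a gap.
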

\begin{proof}
    We first rewrite the constraints of $\LPdeltaprimal$ to be in the form $B u \le b, u \ge 0$, where $B u \le b$ is defined as
    \begin{alignat*}{4}
    	& \qquad & \sum_{i = 1}^d{u_i} & \le k \\
    	& \qquad & \sum_{i = 1}^d{-u_i} & \le -k \\
        & \qquad & \sum_{j = i}^{\min\{i + \Delta - 1, d\}}{u_j} & \le 1 & \qquad & \forall i = 1 \ldots d
    \end{alignat*}
    Let $D$ be a square submatrix of $B$. If we show that $\det{D} \in \{-1, 0, 1\}$, then by the definition of TUM the lemma follows.
    
    Now, $D$ is either a binary \emph{interval matrix}, or it has one row where all the non-zero entries are consecutive and equal $-1$ and the other rows constitute a binary interval matrix. Let $D'$ be a matrix defined as $D_{i, j}' := \left|D_{i, j}\right|$. Then, we have $\left|\det{D}\right| = \left|\det{D'}\right|$. Also, we have that $D'$ is a binary interval matrix. However, it is well known that binary interval matrices are TUM, see~\cite{nemhauser1988integer}. Hence, $\det{D} \in \{-1, 0, 1\}$
\end{proof}

\integraloptimal*
\begin{proof}
    By Lemma~\ref{lemma:primal-is-TUM}, problem $\LPdeltaprimal$ is totally unimodular.
    By~\cite{schrijver2002combinatorial} we have that $\LPdeltadual$ is totally unimodular as well.
    This implies that $\LPdeltadual$ has an integral optimal solution.
\end{proof}

The crucial property of the LP $\LPdeltadual$ is the following: for a fixed value of the variable $w_0$, we can solve $\LPdeltadual$ in \emph{linear} time. Specifically, let us define $\LPdeltadual_v$ to be the LP $\LPdeltadual$ in which the variable $w_0$ is set to $v$. (Note that by Corollary~\ref{corollary:y0-integral-optimal}, we can restrict our attention to integer values of $v$.) 
Observe that once we fixed the value of $w_0$, all remaining constraints in $\LPdeltadual_v$ are ``local'' since they only affect a known interval of length $\Delta$. They are also ordered in a natural way. 
As a result, we can solve $\LPdeltadual_v$ by making a single pass over these variables.
Starting with $w_1$ and all variables set to $0$, we consider each constraint from left to right and increase the variables to satisfy these constraints in a lazy manner. That is, if reach in our pass a constraint with index $i$ that is still not satisfied, we increase the value of $w_i$ until that constraint becomes satisfied and then move to the next constraint and index. Algorithm~\ref{alg:dual}, called $\dualgreedy$, present in Section~\ref{sec:duality-and-back} formalizes this approach.
\begin{restatable}{lemma}{greedyoptimal}
\label{lemma:greedy-is-optimal}
    For any $\mc$, and $\mk$, the algorithm $\dualgreedy(\mc, v)$ computes an optimal solution $\LPdeltadual_v(\mc, \mk)$ in linear time.
\end{restatable}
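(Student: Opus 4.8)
The plan is to establish three properties of the vector $w$ returned by $\dualgreedy(c,v)$: that it is feasible for $\LPdeltadual_v$, that it is optimal, and that it is computed in linear time. The running time is immediate: the algorithm makes a single pass over the indices $1,\ldots,d$, and because it maintains the running quantity $\mathit{sum}_\Delta$ incrementally (subtracting $w_{i-\Delta}$ as it leaves the length-$\Delta$ window and adding the freshly set $w_i$), each iteration performs $O(1)$ work, for a total of $O(d)$.

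For the structural claims I would first record a window invariant. Rewriting the $i$-th dual constraint, the index condition $j \le i \le j+\Delta-1$ is equivalent to $\max(1,i-\Delta+1)\le j \le i$; writing $\ell_i := \max(1,i-\Delta+1)$ and $b_i := c_i - v$, constraint $i$ reads $\sum_{j=\ell_i}^{i} w_j \ge b_i$ together with $w_j \ge 0$. A short induction on $i$ shows that at the line computing $\mathit{diff}$ the quantity $\mathit{sum}_\Delta$ equals $\sum_{j=\ell_i}^{i-1} w_j$, the partial window sum over the already-fixed variables. Hence $\dualgreedy$ sets $w_i = \max\{0,\ b_i - \sum_{j=\ell_i}^{i-1} w_j\}$, i.e.\ the smallest nonnegative value making constraint $i$ hold. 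Feasibility then follows: every $w_i \ge 0$ by construction; constraint $i$ is satisfied the moment $w_i$ is fixed; and since no index exceeding $i$ lies in the window of constraint $i$, later assignments can never break an earlier constraint.

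The heart of the argument is optimality, which I would obtain via a \emph{prefix-sum domination} claim: for the greedy output $w$ and \emph{any} feasible $w'$ of $\LPdeltadual_v$, one has $\sum_{j=1}^i w_j \le \sum_{j=1}^i w'_j$ for every $i$. I prove this by induction on $i$. If at step $i$ the greedy sets $w_i = 0$, the inequality is inherited from $i-1$ since $w'_i \ge 0$. Otherwise $w_i>0$, in which case the greedy makes its window tight, $\sum_{j=\ell_i}^{i} w_j = b_i$, whereas feasibility of $w'$ gives $\sum_{j=\ell_i}^{i} w'_j \ge b_i$; adding the induction hypothesis at index $\ell_i - 1$ (the part of the prefix lying strictly before the window) to this window comparison yields $\sum_{j=1}^i w_j \le \sum_{j=1}^i w'_j$. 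Taking $i=d$ shows $\sum_j w_j$ is minimized, and since the objective of $\LPdeltadual_v$ is $v k + \sum_j w_j$ with $vk$ a constant, the feasible $w$ is optimal.

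The main obstacle is precisely this optimality step: a naive coordinatewise exchange fails, because lowering some $w'_i$ to match the greedy can violate a later constraint and force a compensating increase elsewhere, so that the objective need not decrease. The idea that circumvents this is to reason about prefix sums rather than individual coordinates, and to cut each prefix exactly at the left boundary $\ell_i$ of the window the greedy makes tight, so that the induction hypothesis controls the portion before the window while the single tight-window inequality controls the rest. I would also verify the boundary regime $i \le \Delta$, where $\ell_i = 1$ so the ``prefix before the window'' is empty and the induction bottoms out at the base case $\sum_{j=1}^{0}(\cdot)=0$.
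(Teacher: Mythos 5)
Your proposal is correct, and its key step differs from the paper's. For optimality, the paper runs an exchange argument by contradiction: it takes a feasible $w'$ with $\norm{w'}_1 < \norm{\tw}_1$, chosen to maximize the first index $j$ of disagreement with the greedy output, and in the case $w'_j > \tw_j$ it shifts the excess mass from coordinate $j$ to coordinate $j+1$ to produce an equally good feasible solution agreeing with the greedy on a longer prefix, contradicting extremality (the case $w'_j < \tw_j$ contradicts feasibility, exactly as in your argument). You instead prove a stronger, monotone invariant directly: the greedy output dominates \emph{every} feasible solution in every prefix sum, $\sum_{j\le i} w_j \le \sum_{j \le i} w'_j$, by strong induction, cutting each prefix at the left endpoint $\ell_i=\max(1,i-\Delta+1)$ of the window the greedy makes tight. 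Your ``greedy stays ahead'' route avoids the extremal-counterexample setup and the mass-shifting construction (whose feasibility one must verify against constraints whose windows contain $j$ but not $j+1$), at the cost of needing strong induction since $\ell_i - 1$ can be well below $i-1$; you handle that and the boundary regime $\ell_i = 1$ explicitly. Both proofs rely on the same structural facts -- the greedy sets each $w_i$ to the minimal nonnegative value satisfying constraint $i$, and no later variable appears in an earlier constraint -- so the two arguments are interchangeable here; yours additionally yields the per-prefix guarantee, which is mildly stronger than what the lemma asks for.
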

\begin{proof}
    Let $\mc \in \bbN^\mn$.
    We show that the algorithm $\dualgreedy(\mc, v)$ outputs vector $\tw$ in $O(\mn)$ time such that $\LPdeltadual_v(\mc, \mk)$ is minimized and $\tw_0 = v$.
    To show the running time, observe that every iteration in the for loop takes $O(1)$ time, so the total algorithm runs in $O(\mn)$ time.

    Next, it is easy to see that $\tw$ is a feasible solution to $\LPdeltadual_v(\mc, \mk)$. We show that it is a minimal as well.
    \\
    Towards a contradiction, assume there is a vector $w'$ such that $w'_0 = \tw_0 = v$, $w'$ is a feasible solution to $\LPdeltadual(\mc, \mk)$, and $\|w'\|_1 < \|\tw\|_1$. Then, there exists an index $j$ such that $w'_j \neq \tw_j$ and $w'_i = \tw_i$ for all $i < j$. In case there are multiple vectors $w'$, let $w'$ be one that maximizes the location of mismatch $j$. Consider the two possible cases: $w'_j < \tw_j$, and $w'_j > \tw_j$.
    
    \paragraph{Case $w'_j < \tw_j$.} Observe that $\tw_j$ is chosen as a function of $\tw_0, \ldots, \tw_{j - 1}$ as a minimal value so that $\tw$ is feasible. Therefore, if $w'_j < \tw_j$ and $w'_i = \tw_i$ for all $i < j$, then $w'$ could not be feasible.
    
    \paragraph{Case $w'_j > \tw_j$.} First, note that $j < \mn$, as otherwise $\|w'\|_1 > \|\tw\|_1$. Now, we construct $w''$ as follows. We set $w''_i = w'_i$ for all $i$ different than $j$ and $j + 1$. Set $w''_j = \tw_j$ and $w''_{j + 1} = w'_{j + 1} + \tw_j - w'_j$. Clearly, $w''$ is also a feasible solution to $\LPdeltadual(\mc, \mk)$. Furthermore, $\|w''\|_1 = \|w'\|_1$ and $\tw$ and $w''$ agrees on first $j$ coordinates, contradicting our choice of $w'$.
    
    This concludes the proof.
\end{proof}

Now, to obtain an algorithm that is also able to solve the original dual LP $\LPdeltadual$ (instead of only $\LPdeltadual_v$) it suffices to provide a procedure for choosing the optimal value of $v$. A priori, there can be many possible choices of $v$ and thus an exhaustive search would be prohibitive. Fortunately, $\LPdeltadual_v$ is actually convex in $v$, and so we can use a ternary search over $v$ to find such an optimal value.
\begin{lemma}\label{lemma:LPdeltadual-is-convex}
	$\LPdeltadual_v(\mc, \mk)$ is convex in $v$.
\end{lemma}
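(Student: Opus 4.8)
The plan is to establish convexity directly from the definition, by exhibiting, for any convex combination of the parameter, a feasible point of the combined program whose objective value is exactly the corresponding convex combination of the two optimal values. This avoids invoking duality and relies only on the observation that both the constraints and the objective of $\LPdeltadual$ are affine in the variables $(w_0, w_1, \ldots, w_d)$. (A shorter alternative would be to write $\val{\LPdeltadual_v}$ via LP duality as a pointwise maximum of functions affine in $v$ over a $v$-independent polytope, but the direct argument below is entirely self-contained.)

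First I would fix $v_1, v_2 \in \bbR$ and $\alpha \in [0,1]$, and set $v := \alpha v_1 + (1-\alpha) v_2$. By Lemma~\ref{lemma:greedy-is-optimal}, $\dualgreedy(\mc, v_1)$ and $\dualgreedy(\mc, v_2)$ return optimal solutions $w^{(1)}$ and $w^{(2)}$ of $\LPdeltadual_{v_1}(\mc, \mk)$ and $\LPdeltadual_{v_2}(\mc, \mk)$, with $w^{(1)}_0 = v_1$ and $w^{(2)}_0 = v_2$; in particular the optimal values are finite and well defined for every fixed value of $w_0$. I would then form the coordinate-wise convex combination $w := \alpha w^{(1)} + (1-\alpha) w^{(2)}$, noting that its zeroth coordinate is $w_0 = \alpha v_1 + (1-\alpha) v_2 = v$.

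The key step is to check that $w$ is feasible for $\LPdeltadual_v(\mc, \mk)$. Nonnegativity $w_i \ge 0$ is immediate, since it is preserved under convex combinations of nonnegative vectors. For each constraint index $i$, both solutions satisfy $w^{(t)}_0 + \sum_{j\,:\, 1 \le j \text{ and } j \le i \le j + \Delta - 1} w^{(t)}_j \ge c_i$ for $t \in \{1,2\}$; taking the same convex combination of these two inequalities and using that the left-hand side is linear in the variables yields $w_0 + \sum_{j\,:\, 1 \le j \text{ and } j \le i \le j + \Delta - 1} w_j \ge c_i$. Hence $w$ is a feasible (though not necessarily optimal) solution of $\LPdeltadual_v(\mc, \mk)$.

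Finally, since the objective $w_0 \mk + \sum_{i=1}^{\mn} w_i$ is linear in $(w_0, \ldots, w_{\mn})$, its value at $w$ equals $\alpha$ times its value at $w^{(1)}$ plus $(1-\alpha)$ times its value at $w^{(2)}$, namely $\alpha\,\val{\LPdeltadual_{v_1}(\mc, \mk)} + (1-\alpha)\,\val{\LPdeltadual_{v_2}(\mc, \mk)}$. Because $w$ is feasible for $\LPdeltadual_v(\mc, \mk)$, whose optimal value is a minimum, we obtain
\[
    \val{\LPdeltadual_v(\mc, \mk)} \;\le\; \alpha\,\val{\LPdeltadual_{v_1}(\mc, \mk)} + (1-\alpha)\,\val{\LPdeltadual_{v_2}(\mc, \mk)},
\]
which is precisely convexity of $\val{\LPdeltadual_v(\mc, \mk)}$ in $v$. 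There is no genuinely hard step; the only point needing care is confirming that the optimal values exist and are finite for every fixed $w_0$, so that the inequality is between finite quantities, and this is exactly what Lemma~\ref{lemma:greedy-is-optimal} guarantees.
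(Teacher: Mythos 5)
Your proof is correct and follows essentially the same route as the paper's: take a convex combination of optimal solutions for the two parameter values, verify feasibility constraint-by-constraint using linearity, and use linearity of the objective to bound the optimal value at the combined parameter. The only (minor, favorable) difference is that you handle a general coefficient $\alpha \in [0,1]$, whereas the paper only verifies the midpoint case $\alpha = \tfrac{1}{2}$ and implicitly relies on continuity to upgrade midpoint convexity to convexity.
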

\begin{proof}
	Let $\tw$ and $\hw$ be $w$-solution to $\LPdeltadual_{\hv}(\mc, \mk)$ and $\LPdeltadual_{\tv}(\mc, \mk)$, respectively. Note that $\hw_0 = \hv$ and $\tw_0 = \tv$. Now if we show that
	\[
		\LPdeltadual_{\frac{\hv + \tv}{2}}\left(\mc, \mk \right) \le \frac{\LPdeltadual_{\hv}(\mc, \mk) + \LPdeltadual_{\tv}(\mc, \mk)}{2},
	\]
	the convexity will follow. We start by showing that $\tfrac{\hw + \tw}{2}$ is a feasible $w$-vector for the dual, assuming that both $\hw$ and $\tw$ are feasible. We consider the feasibility of each of the constraints.
	\begin{enumerate}[(1)]
		\item From
			\[
				\hw_0 + \sum_{j\ :\ j \le i \le j + \Delta - 1 \text{ and } j \ge 1}{\hw_j} \ge \mc_i
			\]
			and
			\[
				\tw_0 + \sum_{j\ :\ j \le i \le j + \Delta - 1 \text{ and } j \ge 1}{\tw_j} \ge \mc_i
			\]
			we have
			\[
				(\hw_0 + \tw_0) + \sum_{j\ :\ j \le i \le j + \Delta - 1 \text{ and } j \ge 1}{(\hw_j + \tw_j)} \ge 2 \mc_i,
			\]
			which implies
			\[
				\frac{\hw_0 + \tw_0}{2} + \sum_{j\ :\ j \le i \le j + \Delta - 1 \text{ and } j \ge 1}{\frac{\hw_j + \tw_j}{2}} \ge \mc_i.
			\]
		\item Also, from $\hw_j \ge 0$ and $\tw_j \ge 0$ we have $\tfrac{\hw_j + \tw_j}{2} \ge 0$.
		\item Trivially, $\tfrac{\hw_0 + \tw_0}{2} \in \mathbb{R}$.
	\end{enumerate}
	So, indeed $\tfrac{\hw + \tw}{2}$ is feasible. Hence, we have
	\begin{eqnarray*}
		\LPdeltadual_{\frac{\hv + \tv}{2}} & \le & \frac{\hw_0 + \tw_0}{2} \mk + \sum_{i = 1}^{\mn} \frac{\hw_j + \tw_j}{2} \\
				& = & \frac{\left(\hw_0 \mk + \sum_{i = 1}^{\mn} \hw_j\right) + \left(\tw_0 \mk + \sum_{i = 1}^{\mn} \tw_j\right)}{2} \\
				& = & \frac{\LPdeltadual_{\hv}(\mc, \mk) + \LPdeltadual_{\tv}(\mc, \mk)}{2}.
	\end{eqnarray*}
	This completes the proof.
\end{proof}

Putting these pieces together yields the main theorem of this section. As a reminder, we assume that $c$ is an integral vector with non-negative entries.
\begin{theorem}
\label{theorem:ternary-search}
    There exists an algorithm $\optimalvalue(\mc, \mk)$ that, for any $\mc$ and $\mk$, outputs an optimal solution $\wstar$ to $\LPdeltadual(\mc, \mk)$ in time $O\left(\rtoptimalvalue{\mn}{\mcmax}{\mk}\right)$.
\end{theorem}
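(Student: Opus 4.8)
The plan is to reduce solving the full dual $\LPdeltadual$ to minimizing the single-variable function $g(v) := \val{\LPdeltadual_v(\mc, \mk)}$ over $v$, and then to minimize $g$ by ternary search. Two already-established facts make this work. First, by the identity $\val{\LPdeltadual} = \min_{v} \val{\LPdeltadual_v}$ (see~\eqref{eq:dual-from-dual-lambda}) together with Corollary~\ref{corollary:y0-integral-optimal}, it suffices to search over \emph{integer} values of $v$, since some integer choice of $w_0$ attains the dual optimum. Second, by Lemma~\ref{lemma:LPdeltadual-is-convex} the function $g$ is convex in $v$, hence unimodal over the integers and amenable to ternary search. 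Each evaluation $g(v)$ is carried out by a single call to $\dualgreedy(\mc, v)$, which by Lemma~\ref{lemma:greedy-is-optimal} returns the optimal $w$-vector with $w_0 = v$, and hence its objective value, in $O(\mn)$ time.

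First I would pin down a search interval containing an optimal $w_0$. The upper bound is immediate: any optimal $\wstar$ satisfies $\wstar_0 \le \mcmax$. For the lower bound I would invoke Lemma~\ref{lemma:ystar-lower-bound} to guarantee an optimal solution with $\wstar_0 \ge -(\mk - 1)\mcmax$. Consequently the optimal integer value of $w_0$ lies in an interval of width $O(\mk \cdot \mcmax)$, containing $O(\mk \cdot \mcmax)$ integer candidates.

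Next I would run ternary search on $g$ restricted to this integer interval. At each step I evaluate $g$ at two interior points via $\dualgreedy$ and, using convexity, discard a constant fraction of the current interval, shrinking to the subinterval that must still contain a minimizer. After $O(\log(\mk \cdot \mcmax)) = O(\log \mk + \log \mcmax)$ steps the interval has constant size, and a final $\dualgreedy$ call both identifies the minimizing $v^\star$ and returns the corresponding optimal vector $\wstar$. Since each of the $O(\log \mk + \log \mcmax)$ iterations costs $O(\mn)$, the total running time is $O(\mn(\log \mcmax + \log \mk)) = O\left(\rtoptimalvalue{\mn}{\mcmax}{\mk}\right)$, as claimed.

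The step I expect to be most delicate is making the integer ternary search rigorous when $g$ is only \emph{non-strictly} convex: flat regions mean that comparing two sampled values $g(m_1)$ and $g(m_2)$ can yield equality, and one must argue that in every case, including ties, the retained subinterval still contains a global minimizer while its length drops by a constant factor, so the iteration count is genuinely logarithmic. The other non-trivial ingredient is the lower bound $\wstar_0 \ge -(\mk-1)\mcmax$; unlike the trivial upper bound, this requires a structural argument about optimal dual solutions, which is why it is isolated in Lemma~\ref{lemma:ystar-lower-bound}.
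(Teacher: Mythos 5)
Your proposal is correct and follows essentially the same route as the paper's proof: bound the optimal $w_0$ in $[-(\mk-1)\mcmax,\ \mcmax]$ via Lemma~\ref{lemma:ystar-upper-bound} and Lemma~\ref{lemma:ystar-lower-bound}, then run integer ternary search on the convex function $v \mapsto \val{\LPdeltadual_v}$ (Lemma~\ref{lemma:LPdeltadual-is-convex}, Corollary~\ref{corollary:y0-integral-optimal}), evaluating each point with the linear-time $\dualgreedy$ of Lemma~\ref{lemma:greedy-is-optimal}. Your closing remark about handling ties under non-strict convexity is a care point the paper's write-up passes over silently, but it does not change the approach.
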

\begin{proof}
\paragraph{Bounding $\wstar_0$.} Following Lemma~\ref{lemma:LPdeltadual-is-convex}, we can use ternary search to find $\wstar_0$. However, to be able to do that, we have to know the interval $[a, b]$ we are searching over for $\wstar_0$. By the following lemma we give an upper bound on $b$. However, in order to provide a lower bound on $a$, we need to develop some more machinery. So, we defer its proof to later sections, and in Lemma~\ref{lemma:ystar-lower-bound} we show that $a$ can be lower bounded by $ -(k - 1) \cmax$.
\begin{lemma}\label{lemma:ystar-upper-bound}
    Let $\wstar$ be a vector that minimizes $\LPdeltadual$. Then, $\wstar_0 \le \cmax$.
\end{lemma}
\begin{proof}
    Towards a contradiction, let $\wstar$ be an optimal vector such that $\wstar_0 > \cmax$. Now, as $w_i \ge 0$ for all $i \ge 1$, we have that the corresponding objective is at least $k \wstar_0 > k \cmax$. On the other hand, consider vector $\hw$ such that $\hw_0 = \cmax$ and $\hw_i = 0$ for all $i \ge 1$. Clearly, $\hw$ is a feasible solution to $\LPdeltadual$. However, the objective function corresponding to $\hw$ is $k \cmax < k \wstar_0$, which contradicts our assumption that $\wstar$ is a minimizer of $\LPdeltadual$.
\end{proof}

\paragraph{A nearly-linear time algorithm.}
Now we provide an algorithm that computes the optimal value of $\LPdeltadual(\mc, \mk)$ in nearly linear time. It employs ternary search over the interval provided by Lemma~\ref{lemma:ystar-lower-bound} and Lemma~\ref{lemma:ystar-upper-bound} in order to find $w_0$ that optimizes $\LPdeltadual(\mc, \mk)$. At every step of the search, it uses the result from Lemma~\ref{lemma:greedy-is-optimal} to find an optimal solution to $\LPdeltadual_v(\mc, \mk)$, for $v$ chosen at the current search step.
\begin{algorithm}[H]
	\caption{\optimalvalue:}
	\label{alg:optimal-value}
	Input: $\mc \in \bbN^{\mn}$, sparsity $\mk$, lower bound $lb$ (if not specified, the default value is $-(\mk - 1) \mcmax$), upper bound $ub$ (if not specified, the default value is $\mcmax$) \\
	Output: a minimizer $\wbest$ to $\LPdeltadual(\mc, \mk)$ constrained to $\wbest_0 \in [lb, ub]$; if $lb$ and $ub$ are not specified, $\wbest$ is an optimal solution to $\LPdeltadual(\mc, \mk)$
	\vspace{-0.1in}
	\begin{enumerate}[1.]
		\item $s \gets lb$, $e \gets ub$
		\vspace{-0.1in}
		\item While $s \le e$
			\vspace{-0.1in}
			\begin{enumerate}[1.]
				\item $l \gets s + \left \lfloor \tfrac{e - s}{3} \right \rfloor$, $r \gets e - \left \lfloor \tfrac{e - s}{3} \right \rfloor$
				\item $(active^l, w^l) \gets \dualgreedy(c, l)$
				\item $(active^r, w^r) \gets \dualgreedy(c, r)$
				\item If $k w^l_0 + \sum_{i = 1}^{\mn} w^l_i \le k w^r_0 + \sum_{i = 1}^{\mn} w^r_i$ then $e \gets r - 1$, $\wbest \gets w^l$
				\item Else $s \gets l + 1$, $\wbest \gets w^r$
			\end{enumerate}
		\item Return $\wbest$
	\end{enumerate}
\end{algorithm}
Lemma~\ref{lemma:greedy-is-optimal}, Lemma~\ref{lemma:LPdeltadual-is-convex}, Corollary~\ref{corollary:y0-integral-optimal} and bounds on $a$ and $b$, the theorem follows directly.
\end{proof}
\section{A detailed proof of Lemma~\ref{lemma:solve-projlagr}}
\label{app:proj-lagr}

Algorithm $\deltanosparsity$ solves $\LPdeltaprimalnok(\lambda)$ (and so $\projlagr$).
\begin{algorithm}[H]
	\caption{\deltanosparsity:}
	\label{alg:deltanosparsity}
	{\bfseries Input:} $c \in \bbR^{d}$, $\lambda \in \bbR$ \\
	{\bfseries Output:} an integral optimal solution of $\LPdeltaprimalnok(\lambda)$
	\begin{algorithmic}[1]
	    \STATE\label{line:init-no-sparsity} $\tc \gets c - \allones \lambda$
	    \IF{$\tc_1 \ge 0$}\label{line:if-1-no-sparsity}
	        \STATE $best_1 \gets \tc_1$, $pick_1 \gets 1$
	    \ELSE
	        \STATE $best_1 \gets 0$, $pick_1 \gets -1$
	    \ENDIF\label{line:endif-1-no-sparsity}
		\FOR{$i \gets 2 \ldots \Delta$}\label{line:for1-no-sparsity}
			\IF{$\tc_i > best_{i - 1}$}
			    \STATE $best_i \gets \tc_i$, $pick_i \gets i$
            \ELSE
			    \STATE $best_i \gets best_{i - 1}$, $pick_i \gets pick_{i - 1}$
            \ENDIF
        \ENDFOR
		\FOR{$i \gets \Delta + 1 \ldots d$}\label{line:for2-no-sparsity}
            \IF{$\tc_i + best_{i - \Delta} > best_{i - 1}$}
                \STATE $best_i \gets \tc_i + best_{i - \Delta}$, $pick_i \gets i$
			\ELSE
			    \STATE $best_i \gets best_{i - 1}$, $pick_i \gets pick_{i - 1}$
            \ENDIF
        \ENDFOR
		\STATE $r \gets \emptyset$, $i \gets d$
		\WHILE{$i \ge 1$}\label{line:while-no-sparsity}
			\IIF{$pick_i \ge 1$}{$r \gets r \cup \{pick_i\}$}
			\STATE $i \gets pick_i - \Delta$
		\ENDWHILE
        \RETURN $r$
	\end{algorithmic}
\end{algorithm}

\begin{lemma}\label{lemma:full-proof-proflagr}
    Algorithm $\deltanosparsity(c, \lambda)$ solves $\LPdeltaprimalnok(\lambda)$ (and so $\projlagr(\lambda, c)$ in $O(d)$ time.
\end{lemma}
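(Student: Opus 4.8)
The plan is to reduce the claim to a single prefix-type dynamic program and verify it line by line. By Corollary~\ref{corollary:primal-nok-projlagr}, solving $\LPdeltaprimalnok(\lambda)$ is equivalent to solving $\projlagr(\lambda, c)$, and by the reformulation used in the proof of Lemma~\ref{lemma:solve-projlagr} both are equivalent to the following combinatorial task on the vector $\tc := c - \lambda\allones$ computed on Line~\ref{line:init-no-sparsity}: choose a subset $S \subseteq [d]$ (of arbitrary size, possibly empty) with $\sep(S) \ge \Delta$ maximizing $\sum_{i \in S}\tc_i$. Indeed, $\sum_{i \in S} c_i + \lambda(k - |S|) = \lambda k + \sum_{i \in S}\tc_i$, so the constant $\lambda k$ is irrelevant to the choice of $S$. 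I would take this task as the target and show $\deltanosparsity(c,\lambda)$ returns an optimizer; the indicator of that $S$ is then an integral optimum of $\LPdeltaprimalnok(\lambda)$, which exists because the LP is totally unimodular.

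The heart of the argument is the loop invariant, proved by induction on $i$: once index $i$ has been processed, $best_i = \max\{\sum_{j\in S}\tc_j : S \subseteq [i],\ \sep(S)\ge\Delta\}$, where the empty set contributes $0$, and $pick_i$ is the largest index actually selected by some optimal such $S$ (or $-1$ if the empty set is optimal). For $1 \le i \le \Delta$ (Lines~\ref{line:if-1-no-sparsity}--\ref{line:endif-1-no-sparsity} and the loop on Line~\ref{line:for1-no-sparsity}) any two indices of $[i]$ lie at distance at most $\Delta-1$, so at most one may be chosen; hence the optimum is $\max\{0,\tc_1,\dots,\tc_i\}$, exactly what the test $\tc_i > best_{i-1}$ maintains, and negative values are never forced in since the empty set always offers $0$. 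For $i > \Delta$ (Line~\ref{line:for2-no-sparsity}) an optimal $S$ on $[i]$ either omits $i$, giving $best_{i-1}$, or includes $i$, in which case no other chosen index lies in $\{i-\Delta+1,\dots,i\}$ and the remainder is an optimal selection on $[i-\Delta]$ of value $best_{i-\Delta}$; the recurrence $best_i = \max\{best_{i-1},\ \tc_i + best_{i-\Delta}\}$ is precisely this case split, with $pick_i$ recording the winning branch. This yields $best_d = \val{\projlagr(\lambda,c)} - \lambda k$.

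Next I would show that the backtracking loop on Line~\ref{line:while-no-sparsity} recovers an optimal support. Starting from $i=d$, whenever $pick_i \ge 1$ it records the last selected index $pick_i$ and jumps to $pick_i - \Delta$; by the invariant $pick_i$ belongs to an optimal solution on $[i]$, and the indices yet to be recovered form an optimal $\Delta$-separated selection on $[pick_i - \Delta]$. Unrolling the recurrence therefore reconstructs exactly the set realizing $best_d$, and that set is $\Delta$-separated because every jump lowers the current index by at least $\Delta$. For the running time, the initialization and the two for loops touch each index once with $O(1)$ work, and in the while loop $i$ strictly decreases each iteration: since $pick_i \le i$ always holds (it is either $i$ or the inherited $pick_{i-1}$), we have $pick_i - \Delta < i$, so the loop runs at most $d$ times. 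Hence the total cost is $O(d)$.

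I expect the only delicate point to be the bookkeeping of the $pick$ pointers when an index is \emph{not} selected: one must confirm that carrying $pick_{i-1}$ forward keeps $pick_i$ equal to the \emph{last truly selected} index, so that the single jump $i \gets pick_i - \Delta$ in the reconstruction (rather than a step-by-step walk) still lands on the correct predecessor and produces a feasible separated support. Once that invariant on $pick$ is pinned down, the remaining pieces are a routine prefix-DP correctness proof and a direct operation count.
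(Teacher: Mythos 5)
Your proposal is correct and follows essentially the same route as the paper's proof: the same reduction to maximizing $\tc^T u$ with $\tc = c - \lambda\allones$ after discarding the constant $\lambda k$, the same inductive invariant on $best_i$ and $pick_i$ with the identical case split at $i \le \Delta$ versus $i > \Delta$, and the same backtracking and $O(d)$ operation count. The ``delicate point'' you flag about carrying $pick_{i-1}$ forward is handled implicitly in the paper's proof as well, so there is no substantive difference.
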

\begin{proof}
    Let us rewrite $\LPdeltaprimalnok(\lambda)$ as follows
    \begin{alignat*}{4}
    	\text{maximize}   &\qquad & (c^T - \lambda \allones^T) u + \lambda k \\
    	\text{subject to} &\qquad & \sum_{j = i}^{\min\{i + \Delta - 1, d\}}{u_j} & \le 1 & \qquad & \forall i = 1 \ldots d \\
    					&\qquad & u_i  & \ge 0 &\qquad & \forall i = 1 \ldots d
    \end{alignat*}
    Observe that for a given $\lambda$ and $k$, the term $\lambda k$ is constant, so our objective becomes $(c^T - \lambda \allones^T) u = \tc^T u$ subject to the given set of constraints. Now we can show that $\deltanosparsity(c, \lambda)$ solves this formulation, which is equivalent to the original one of $\LPdeltaprimalnok(\lambda)$. We proceed by induction. We show that $best_i$ stores the maximum sum of the elements of $\{\tc_1, \ldots, \tc_i\}$ so that any two chosen elements $\tc_i$ and $\tc_j$ are such that $|i - j| \ge \Delta$, and $pick_i$ keeps what is the largest index of element that should be taken to achieve $best_i$ (if there is no such element, then $pick_i$ equals $-1$).
    
    \paragraph{Base of induction, $i = 1$.}
        For $i = 1$, the lines~\ref{line:if-1-no-sparsity}-\ref{line:endif-1-no-sparsity} set $best_1$ and $pick_1$ properly.
        
    \paragraph{Inductive step, $i > 1$.}
        For $i = 2 \ldots \Delta$, the loop at line~\ref{line:for1-no-sparsity} sets $best_i$ and $pick_i$ as required. That is, either $\tc_i$ is the largest among $\{\tc_1, \ldots, \tc_i\}$, or the largest element is among $\{\tc_1, \ldots, \tc_{i - 1}\}$ which is properly set in $best_{i - 1}$ and $pick_{i - 1}$. If there is no element with positive value, then as $best_i = 0$ and $pick_i = -1$ as desired.
        
        Similarly, for $i > \Delta$ either we choose $\tc_i$ and obtain the remaining of the output over $\{\tc_1, \ldots, \tc_{i - \Delta}\}$, or we take the best solution over $\{\tc_1, \ldots, \tc_{i - 1}\}$ without including $\tc_i$.
        
    \paragraph{Correctness of the output vector $r$.}
        Since $best_i$ and $pick_i$ are set as described above, the way $r$ is obtained trivially satisfies the constraints of $\LPdeltaprimalnok(\lambda)$. Furthermore, as $best_d$ maximizes the objective of $\LPdeltaprimalnok(\lambda)$, $r$ maximizes the objective of $\LPdeltaprimalnok(\lambda)$ as well.
    
    \paragraph{Running time.}
    Line~\ref{line:init-no-sparsity} takes $O(d)$ time. The total number of iterations of loop at line~\ref{line:for1-no-sparsity} and loop at line~\ref{line:for2-no-sparsity} is $O(d)$. Every iterations takes $O(1)$ time. The loop at line~\ref{line:while-no-sparsity} starts with $i$ being $d$, and decreases $i$ at every iteration (under natural assumption $\Delta \ge 1$). The fact that $i$ gets decreases at every iteration comes from the property that $pick_i \le i$, for every $i$. Therefore, the while loop takes $O(d)$ iterations as well, while every iteration taking $O(1)$ time. This completes the analysis.
\end{proof}

\section{Omitted proofs from Section~\ref{subsec:active-constraints}}
\label{app:active-constraints}
In this section we prove some properties of active constraints. We begin by introducing some notation. In what follows, we will be interested in cost vectors and sparsity parameters other than $c$ and $k$, respectively.
Hence, whenever this is the case, we will denote the corresponding dual LP by $\LPdeltadual(c', k')$.

We next define active constraints algorithmicaly.

\begin{algorithm}[H]
	\caption{\activeconstraints:}
	\label{alg:active-constraints}
	{\bfseries Input:} $\mc \in \bbN^{\mn}$, $w \in \bbR^{\mn + 1}$ a feasible vector of $\LPdeltadual(\mc, \mk)$ \\
	{\bfseries Output:} active constraints of $\LPdeltadual(\mc, \mk)$ for $w$
	
	\begin{algorithmic}[1]
		\STATE $\mathit{active} \gets \emptyset$; $\quad \mathit{sum}_{\Delta} \gets 0$; $\quad \mathit{last\_active} \gets -\infty$
		\FOR{$i := 1 \ldots \mn$}
			\IIF{$i - \Delta \ge 1$}{$\mathit{sum}_{\Delta} \gets \mathit{sum}_{\Delta} - w_{i - \Delta}$}
			\STATE $\mathit{sum}_{\Delta} \gets \mathit{sum}_{\Delta} + w_i$
			\IF{$\mathit{last\_active} \le i - \Delta$ and $\mc_i - \left( w_0 + \mathit{sum}_{\Delta} \right) = 0$}
				\STATE $\mathit{active} \gets \mathit{active} \cup \{i\}$
				\STATE $\mathit{last\_active} \gets i$
			\ENDIF
        \ENDFOR
		\RETURN $active$
	\end{algorithmic}
\end{algorithm}

Active constraints provide insight into the structure of $\LPdeltadual(\mc, \mk)$ that we can leverage to optimally distribute our sparsity budget over the recursive subproblems. The optimality condition of the dual program can be described in the language of active constraints as follows.
\begin{restatable}{lemma}{lemmaactiveoptimality}
\label{lemma:active-sets-optimality}
    Let $w^1 \gets \dualgreedy(\mc, v)$, $active^1 \gets \activeconstraints(\mc, w^1)$, $w^2 \gets \dualgreedy(\mc, v + 1)$, and $active^2 \gets \activeconstraints(\mc, w^2)$, for some integer $v$.
    Then, $w^1$ is an optimal solution of $\LPdeltadual(\mc, \mk)$ iff $|active^1| \ge \mk \ge |active^2|$.
\end{restatable}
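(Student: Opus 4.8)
The plan is to recast the claim as a statement about the minimizer of the convex function $f(\lambda) := \val{\LPdeltadual_\lambda}$. By Lemma~\ref{lemma:greedy-is-optimal}, $\dualgreedy(\mc, \lambda)$ attains $f(\lambda)$, so $w^1 = \dualgreedy(\mc, v)$ is an optimal solution of $\LPdeltadual$ exactly when $v$ is a global minimizer of $f$; and by Lemma~\ref{lemma:LPdeltadual-is-convex}, $f$ is convex. I would therefore express both the condition ``$v$ minimizes $f$'' and the quantities $|active^1|, |active^2|$ in terms of the slopes of $f$ at the integers $v$ and $v+1$.

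First I would establish the slope/active-constraint dictionary, extending the computation already sketched for Lemma~\ref{lemma:active-sets-optimality-simplified}. Fix an integer $m$ and let $w := \dualgreedy(\mc, m)$ and $active := \activeconstraints(\mc, w)$. The goal is to show that decreasing $w_0$ from $m$ to $m - \vareps$ (for small $\vareps$) increases each variable indexed by $active$ by exactly $\vareps$ and leaves every other coordinate unchanged. The key structural point is that consecutive active constraints are at least $\Delta$ apart, so every length-$\Delta$ window contains at most one of them: a tight-but-non-active constraint is ``covered'' by an active variable in its window, whose $+\vareps$ increase exactly compensates the $-\vareps$ change of $w_0$, so it stays tight with its own variable fixed at $0$. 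Hence $\sum_i w(\vareps)_i = |active|\,\vareps + \sum_i w_i$, which gives the left derivative $f'_-(m) = \mk - |active|$. Applying this at $m = v$ and $m = v+1$ yields $f'_-(v) = \mk - |active^1|$ and $f'_-(v+1) = \mk - |active^2|$.

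The hard part is the structural fact that $f$ has no ``dip'' strictly between consecutive integers, i.e.\ that $f$ is piecewise linear with breakpoints only at integer values of $\lambda$. I would prove this by tracking how $\dualgreedy(\mc, \lambda)$ evolves as $\lambda$ decreases: at an integer $\lambda$ the greedy produces an integral $w$ (since $\mc$ is integral), so every non-tight constraint has integer slack at least $1$. Differentiating the slack of constraint $i$ in $\lambda$ and again using that at most one active variable lies in its covering window, the slack decreases at rate $0$ or $1$; since the active set can only change when some slack reaches $0$, and a slack decreasing at integer rate from an integer value hits $0$ only at an integer, every breakpoint of $f$ occurs at an integer. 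Consequently $f$ is linear on each $[m, m+1]$, so $f$ equals the linear interpolation of its integer samples and $f'_+(m) = f'_-(m+1)$; in particular $f(v) - f(v-1) = \mk - |active^1|$ and $f(v+1) - f(v) = \mk - |active^2|$.

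Finally I would combine these. Because $f$ is convex and agrees with the linear interpolation of its integer samples, $v$ is a global minimizer iff it is a discrete local minimizer, i.e.\ $f(v) \le f(v-1)$ and $f(v) \le f(v+1)$, which by the previous paragraph is precisely $|active^1| \ge \mk$ and $|active^2| \le \mk$. It is worth noting the asymmetry, which I would make explicit in the writeup: the ``only if'' direction needs only convexity (a global minimizer satisfies $f'_-(v) \le 0 \le f'_+(v) \le f'_-(v+1)$, forcing $|active^1| \ge \mk \ge |active^2|$), whereas the ``if'' direction genuinely uses the no-dip lemma to upgrade the discrete local-minimality $|active^1| \ge \mk \ge |active^2|$ to global minimality---without it, $f$ could in principle dip below $f(v)$ inside the open interval $(v, v+1)$. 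This yields the stated equivalence and completes the proof.
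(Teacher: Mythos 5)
Your proposal is correct and follows essentially the same route as the paper: your slope/active-constraint dictionary and your integer-breakpoint (``no dip'') lemma are precisely the paper's Lemma~\ref{lemma:active-for-delta} and Lemma~\ref{lemma:y-delta-change} --- integrality of $\mc$ freezes the tight/active sets on each interval $(m-1,m]$, so $\val{\LPdeltadual_{\lambda}}$ is linear there with slope $\mk - |active|$ --- and both arguments then conclude via convexity of $\val{\LPdeltadual_{\lambda}}$ in $\lambda$. One cosmetic slip: a tight-but-non-active constraint can carry a strictly positive variable (the compensation argument shows it is \emph{unchanged}, not ``fixed at $0$''), but this does not affect the correctness of your approach.
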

\begin{proof}
We first prove two properties of actives constraints. First, observe that from the way algorithm $\activeconstraints(\mc, w)$ outputs $active$, it corresponds to tight constraints of $\LPdeltadual(\mc, \mk)$ for a given $w$. More precisely, every such tight constraint either is in $active$, or there is another tight constraint in $active$ which is at most $\Delta$ "far to the left". Furthermore, as $\mc$ is integral, it is easy to see that tight constraints for $w_0 = v$, for some integer $v$, and for $w_0 = v - \delta$, for $\delta \in [0, 1)$, are the same. Hence, active constraints for $w_0 = v$ and $w_0 = v - \delta$ are also the same. Putting these observations together, we get the following claim.
\begin{lemma}\label{lemma:active-for-delta}
    Let $v$ be an integer and $\delta \in [0, 1)$. Also, let $\hw \gets \dualgreedy(\mc, v)$ and $\hw' \gets \dualgreedy(\mc, v - \delta)$. Then,
    \[
        \activeconstraints(\mc, \hw) = \activeconstraints(\mc, \hw').
    \]
\end{lemma}
We point out that one can show even stronger statement about tight constraints, not necessarily active tough. Namely, it holds that if a constraint $i$ is tight for $\hw$ being $\dualgreedy(\mc, v)$, then it is tight for any $\dualgreedy(\mc, v')$ such that $v' \le v$. It follows from the property that for any value $v'$ there is at most one active constraint $j$ in $\activeconstraints(\mc, \dualgreedy(\mc, v'))$ such that $0 \le i - j \le \Delta$. Therefore, if $w_0 = v'$ gets decreased by "a very small" $\delta$, then the variable, e.g. $w_j$, corresponding to active constraint will decrease by $\delta$ as well. Which in turn results $i$ still being a tight constraint. Hence, the following lemma holds, which we utilize in the sequel.
\begin{lemma}\label{lemma:tight-constraints}
    Let $\hw \gets \dualgreedy(\mc, v)$ and $\tw \gets \dualgreedy(\mc, v')$, for $v' \le v$. Then, if a constraint $i$ is tight with respect to $\hw$, it is tight with respect to $\tw$ as well.
\end{lemma}

Using Lemma~\ref{lemma:active-for-delta} we can show how $\dualgreedy(\mc, v')$ changes for $v' \in (v - 1, v]$.
\begin{lemma}\label{lemma:y-delta-change}
    Let $v$ be an integer and $\delta \in [0, 1)$. By $\hw$ denote the output of $\dualgreedy(\mc, v)$, and by $\hw^\delta$ the output of $\dualgreedy(\mc, v - \delta)$. Let $active$ be returned by $\activeconstraints(\mc, \hw)$. Then
    \[
        \mk \hw^\delta_0 + \sum_{i \ge 1}\hw^\delta_i = \mk \hw_0 + \sum_{i \ge 1}\hw_i + \delta (|active| - \mk).
    \]
\end{lemma}
\begin{proof}
    From Lemma~\ref{lemma:active-for-delta} we have that $\activeconstraints(\mc, \dualgreedy(\mc, v - \delta)) = active$ for \emph{all} $\delta \in [0, 1)$. Now, by the construction of $active$, it holds
    \begin{eqnarray*}
        \mk \hw^\delta_0 + \sum_{i \ge 1}\hw^\delta_i & = & \mk (\hw_0 - \delta) + \sum_{i \in active} (\hw_i + \delta) + \sum_{i \ge 1 \text{ and } i \notin active} \hw_i \\
            & = & \mk \hw_0 + \sum_{i \ge 1} \hw_i + \delta (|active| - \mk),
    \end{eqnarray*}
    as desired.
\end{proof}

We are now ready to finalize the proof of the lemma. Let us break the equivalence stated in the lemma into two implications, and show they are true.

\paragraph{$(\impliedby)$} Let $|active^1| \ge \mk \ge |active^2|$ be true. By Lemma~\ref{lemma:y-delta-change} and the choice of $v$, we have that $\LPdeltadual_z(\mc, \mk)$ is non-decreasing for  $z \in \left[v, v + \tfrac{1}{2}\right]$ and non-increasing for $z \in [v - \tfrac{1}{2}, v]$. As $\LPdeltadual_z(\mc, \mk)$ is convex, we have $\LPdeltadual_z(\mc, \mk)$ is minimized for $z := v$.\footnote{We use the fact that from the convexity of $\LPdeltadual_z(\mc, \mk)$ we have that $\LPdeltadual_z(\mc, \mk)$ is continuous.}

\paragraph{$(\implies)$} Let $w^1$ is an optimal solution of $\LPdeltadual(\mc, \mk)$. Recall that $\LPdeltadual_z(\mc, \mk)$ is a convex function in $z$. Then, as $w^1$ is an optimum of $\LPdeltadual(\mc, \mk)$, $\LPdeltadual_x(\mc, \mk)$ is non-increasing in $z$ on interval $(-\infty, v]$ and non-decreasing on $[v, \infty)$. But then from Lemma~\ref{lemma:y-delta-change} we conclude that it can only happen if $|active^1| \ge \mk \ge |active^2|$.

\end{proof}
\section{Omitted proofs from Section~\ref{sec:perturbation-and-sparsity}}
\label{app:perturbation-and-sparsity}
In this section we finalize the proof of the correctness of our randomized algorithm. Before we delve into details, we introduce some notation. In what follows, we will be interested in cost vectors and sparsity parameters other than $c$ and $k$, respectively.
Hence, whenever this is the case, we will write $\LPdeltaprimal(c', k')$ to refer to the program $\LPdeltaprimal$ for cost vector $c'$ and sparsity $k'$. Similarly, whenever we consider some different cost vector $c'$ and sparsity parameter $k'$, we will denote the corresponding dual LP by $\LPdeltadual(c', k')$.
\\
Also, as pointed out in other sections, our running-time results are given with respect to $\gamma$, where $\gamma$ is the maximal number of bits needed to store any $c_i$. As $\LPdeltaprimal$ and $\LPdeltadual$ are invariant under shifting and multiplication of $c$, for the sake of clarity of our exposition and without loss of generality we assume $c$ is an integral vector with non-negative entries.

\lemmasingleiteration*

\subsection{A proof of Lemma~\ref{lemma:a-single-iteration}}
As pointed out already, without loss of generality in this proof we assume that $c$ is an integral non-negative vector. We also recall that we showed equivalence between problem~\eqref{eq:proj} and $\LPdeltaprimal$, and also between $\LPdeltaprimalnok$ and $\projlagr$, so in this proof we work with the LP formulations.

\paragraph{The choice of $\lambda$ is optimal.} Consider $\tw$ as in Lemma~\ref{lemma:from-dual-to-greedy-on-primal}. First we want to show that if such $\tw$ exists, then $\lambda$ obtained at line~\ref{line:main-lambda} of Algorithm~\ref{alg:main} is such that it also defines $k$ active constraints. This in turn would imply, by Lemma~\ref{lemma:from-dual-to-greedy-on-primal}, that support $\hS$ obtained at line~\ref{line:optimal-hc} has cardinality $k$, and hence is an optimal solution to $\LPdeltaprimal(\tc)$.

If $\tw_0 = \lambda$, then we are done. Otherwise, assume that $\tw_0 \neq \lambda$. By Lemma~\ref{lemma:active-sets-optimality}, $\tw$ is an optimal solution of $\LPdeltadual(\tc)$. On the other hand, as we discussed in Section~\ref{sec:duality-and-back}, $\lambda$ is such that $\val{\LPdeltadual_{\lambda}(\tc)} = \val{\LPdeltadual(\tc)}$, and $\tw_0 < \lambda$ by the choice of $\lambda$. Therefore, by Lemma~\ref{lemma:active-sets-optimality} it holds that $\dualgreedy(\tc, \lambda)$ defines at least $k$ active constraints. Furthermore, as $\LPdeltadual(\tc)$ is convex w.r.t. to the variable $w_0$, then for every $\lambda' \in [\tw_0, \lambda]$ we have $\val{\LPdeltadual_{\lambda'}(\tc)} = \val{\LPdeltadual(\tc)}$. In other words, $\val{\LPdeltadual_{\lambda'}(\tc)}$ remains constant over the given interval. Hence, from Lemma~\ref{lemma:y-delta-change} we conclude that $\dualgreedy(\tc, \lambda)$ defines exactly $k$ active constraints.

However, for $\tc$ given on the input there might not exists any $\lambda$ such that $\dualgreedy(\tc, \lambda)$ defines exactly $k$ active constraints. Our goal is to show that the randomization we apply assures that for the obtained $\tc$ it is always the case that there is some $\lambda$ so that $\dualgreedy(\tc, \lambda)$ has exactly $k$ active constraints.

\paragraph{The evolution of active constraints.} Now we want to show that the randomization we apply will result in an existence of $\tw$ as described in Lemma~\ref{lemma:from-dual-to-greedy-on-primal}. We start by studying the evolution of active constraints defined by the output of $\dualgreedy(\tc, w_0)$ as $w_0$ decreases.

First, recall that by Lemma~\ref{lemma:tight-constraints} we have that if a constraint becomes tight with respect to some $\hw \gets \dualgreedy(\tc, \lambda)$, it remains tight with respect to $\hw' \gets \dualgreedy(\tc, \lambda')$ for every $\lambda' \le \lambda$. Let $T$ denote the set of tight constraints with respect to $\hw$, and $T'$ with respect to $\hw'$. By our discussion $T \subseteq T'$.
\\
Let $A$ and $A'$ be the set of active constraints with respect to $\dualgreedy(\tc, \lambda)$ and $\dualgreedy(\tc, \lambda')$, respectively. Clearly $A \subseteq T$ and $A' \subseteq T'$. We claim that $|T \cap A'| \le |A|$. Observe that $A$ is a minimum set of constraints so that every tight constraint is covered (covered in the natural way). However, $A$ is also a maximum set of constraints of $T$ that can be chosen so that no two of them overlap, i.e. so that every two of them are at least $\Delta$ apart. That means if one would choose a subset of $T$ larger than $|A|$ then some two constraints would overlap. Hence, such a subset can not consist of only active constraints, and therefore $|T \cap A'| \le |A|$.

This implies that if the number of active constraints increases by $a > 0$ at certain point, then there are at least some $a$ constraints that became active, but also tight, for the first time. Now we want to study what is the probability that two or more non-tight constraints become tight with respect to $\dualgreedy(\tc, \lambda)$, for any $\lambda$.

Let us focus on a single constraint $j$. Fix randomness of all the $X_i$ for $i \neq j$, i.e. fix $\tc_i$ for all $i \neq j$. For $X_j = 0$, there are at most $d - 1$ different values of $w_0$ when any of those constraints becomes tight for the first time. Let $W$ denote the set of these $w_0$ values. So $|W| < d$. Now, construct set $W_j$ as follows. For each $\lambda \in W$:
\begin{itemize}
    \item If constraint $j$ is already tight with respect to $\dualgreedy(\tc, \lambda)$ for $X_j = 0$, do nothing.
    \item Define $\hc_i = \tc_i$ for $i \neq j$, and $\hc_j = \tc_j + x_{\lambda}$, where $x_{\lambda}$ is defined as the least value so that constraint $j$ becomes tight for the first time with respect to $\dualgreedy(\hc, \lambda)$. Observe that $x_{\lambda} \ge 0$ and $x_{\lambda}$ is integral. If $x_{\lambda} < d^3$, add $x_{\lambda}$ to $W_j$.
\end{itemize}
We have $|W_j| \le |W| < d$. Each of the value of $W_j$ correspond to some value of $X_j$. Also, observe that for given $\lambda$, a constraints can become tight for the first time for at most one value of $X_j$. In addition, as long as constraint $j$ is not tight it does not affect when the other constraints will become tight, as $w_j = 0$ so $w_j$ has no affect on other constraints. This all implies that there are at most $d-1$ distinct values of $X_j$, out of $d^3$ of them, when constraint $j$ and some other constraint become tight. Therefore,
\[
    \prob{\text{constraint $j$ and any other constraint become tight for the same value of $w_0 = \lambda$}} < \frac{d}{d^3} = \frac{1}{d^2}.
\]
Now we can apply union bound to conclude
\[
    \prob{\text{two constraints become tight for the same value of $w_0 = \lambda$}} < d \frac{1}{d^2} = \frac{1}{d}.
\]
Therefore, after applying randomness, for every value of $w_0 = \lambda$ at most one constraint becomes tight with probability at least $1 - 1/d$. Following our discussion above, this in turn implies that the number of active constraints increases by at most 1 after decreasing the value of $w_0$ by 1. Therefore, there is $\tw$ as in Lemma~\ref{lemma:from-dual-to-greedy-on-primal} with probability $1 - 1/d$ at least.

\paragraph{Required randomness.}
Every $c_i$ we perturb by one out of $d^3$ different values, for which $O(\log{d})$ random bits suffices. Therefore, in total we need $O(d \log{d})$ random bits.



This concludes the proof.
\section{Deterministic worst-case nearly-linear time algorithm}
\label{section:deterministic-algorithm}
In this section we describe our deterministic algorithm. For the sake of clarity, we repeat some of the content presented in earlier sections.

Our algorithm stems from a linear programming view on the separated sparsity recovery. It has already been shown that this LP is totally unimodular \cite{hegdeLP}, which implies that solving the LP provides an integral solution and hence solves separated sparsity recovery. However, the previous work resorts to a black-box approach for solving this LP leading to a prohibitive $O(d^{3.5})$ time complexity. We make a step forward, and study the dual LP, obtaining a method that takes nearly-linear time to find its optimal solution. By the strong duality, the value of the dual is the value of the primal LP as well. However, unfortunately, it is not hard to see that a generic relation between primal and dual LP solutions known as "complementary slackness" does not lead to recovery of the primal optimal solution itself.

To cope with that shortcoming, we further analyze the properties of dual solution. We exhibit very close connection between its structure and the sparsity of the primal solution, which we present via the notion of "active constraints". Intuitively, this structure allows us to characterize the cases in which the complementary slackness in fact provides the primal from a dual optimal solution. Then we use these findings in our algorithm to slightly perturb the input instance, while not affecting the value of the solution, so that it is possible to obtain a primal from a dual solution in the general case, and hence solve the separated sparsity recovery.

In Section~\ref{sec:dual} we defined LP $\LPdeltaprimal$, which corresponds to the problem~\eqref{eq:proj-massaged}, as follows:
\begin{alignat*}{4}
	\text{maximize}   &\qquad & c^T u \\
	\text{subject to} &\qquad & \sum_{i = 1}^d{u_i} & = k & \qquad & \\
					&\qquad & \sum_{j = i}^{\min\{i + \Delta - 1, n\}}{u_j} & \le 1 & \qquad & \forall i = 1 \ldots d \\
					&\qquad & u_i  & \ge 0 &\qquad & \forall i = 1 \ldots d
\end{alignat*}
We also defined the dual LP to $\LPdeltaprimal$, denoted by $\LPdeltadual$, as
\begin{alignat*}{4}
	\text{minimize}   &\qquad & w_0 k + \sum_{i = 1}^{d} w_i \\
	\text{subject to} &\qquad & w_0 + \sum_{\substack{j\ :\ j \ge 1 \text{ and } \\ j \le i \le j + \Delta - 1}}{w_j} & \ge c_i & \qquad & \forall i = 1 \ldots d \\
					&\qquad & w_i  & \ge 0 &\qquad & \forall i = 1 \ldots d \\
					&\qquad & w_0  & \in \mathbb{R} &\qquad &
\end{alignat*}
As Theorem~\ref{theorem:ternary-search}, which proofs appears in Section~\ref{app:dual}, this dual LP has a combinatorial structure that enables us to solve it in nearly-linear time.

Before delving into details, we introduce some notation. In what follows, we will be interested in cost vectors and sparsity parameters other than $c$ and $k$, respectively.
Hence, whenever this is the case, we will write $\LPdeltaprimal(c', k')$ to refer to the program $\LPdeltaprimal$ for cost vector $c'$ and sparsity $k'$. Similarly, whenever we consider some different cost vector $c'$ and sparsity parameter $k'$, we will denote the corresponding dual LP by $\LPdeltadual(c', k')$.
\\
As already mentioned, our running-time results are given with respect to $\gamma$, where $\gamma$ is the maximal number of bits needed to store any $c_i$. As $\LPdeltaprimal$ and $\LPdeltadual$ are invariant under shifting and multiplication of $c$, for the sake of clarity of our exposition and without loss of generality we assume $c$ is an integral vector with non-negative entries.

At a high-level, we develop our algorithm in three main steps. In Section \ref{sec:dual} we considered the dual of the LP $\LPdeltaprimal$ and show that its combinatorial structure can be exploited to compute an optimal solution to it in nearly-linear time. By strong duality, this optimal dual solution gives us then the \emph{value} of the optimal primal solution. But, unfortunately, it does not give us the optimal primal solution itself.

To alleviate this issue, we develop a divide-and-conquer procedure for extracting that optimal primal solution. First, in Section~\ref{section:recover-delta}, we demonstrate that by analyzing answers of the dual oracle on perturbed versions of the original problem we can quickly recover a single non-zero entry of the optimal primal solution. That entry can be used to partition our instance into two smaller subproblems. Then, in Section~\ref{section:distributing-sparsity}, we show how to make these two subproblems fully independent by devising an optimal split of the sparsity constraint that they share. This optimal split is extracted from the structure of the dual solutions.

With these components in place, we assemble our final algorithm in Section~\ref{sec:recoverprimal}.

\subsection{Recovering a segment of an optimal solution}
\label{section:recover-delta}
At this point, we developed a way of computing the optimal value $OPT$ of $\LPdeltaprimal$.
However, this is not sufficient for our purposes as the separated sparsity problem also requires us to provide the corresponding solution, i.e., a binary vector $\ustar$ corresponding to $OPT$.

One might hope that this primal solution $\ustar$ can be inferred from the dual solution $\hw$ that our algorithm $\optimalvalue(\mc, \mk)$ (see Algorithm \ref{alg:dual}) provides. It is not hard to see, however, that the generic relationship between the optimal primal and optimal dual solutions that the so-called ``complementary slackness'' provides is not sufficient here. 

Therefore, we instead design a problem-specific algorithm for finding the desired primal solution vector $\ustar$. As a first step, we focus on the task of recovering a single segment of $\ustar$. Specifically, given some target segment $S$ of at most $\Delta$ consecutive entries, we want to either find a \emph{single} index $i\in S$ such that $\ustar_i = 1$, for some optimal primal solution $\ustar$; or to conclude that $\ustar_i = 0$ for all $i\in S$.

To recover such a segment, we define in an adaptive manner a family of cost vectors $c^1, \ldots, c^t$, for some $t \in O(\log{\Delta})$  and invoke our dual LP solver $\optimalvalue(\mc, \mk)$ on these cost vectors.
As we show, the solutions to these perturbed instances allow us to infer $\ustar_j$ for all $j \in S$.

To provide more details, let us fix some $\Delta$-length segment $S=[j_s, j_e]$, i.e., $j_e = \min\{j_s + \Delta - 1, n\}$. We now want to decide whether there is an optimal solution $\hu$ to $\LPdeltaprimal$ such that $\hu_i = 1$ for some $i \in [j_s, j_e]$. Observe that there might be another optimal solution $\tu$ to $\LPdeltaprimal$. Furthermore, it might be the case that $\tu_j = 0$ for every index $j \in [j_s, j_e]$, while there is an index $i \in [j_s, j_e]$ such that $\hu_i = 1$.
So while designing the algorithm, we distinguish two cases.
First, we analyse the case in which there exists an optimal solution $\hu$ to $\LPdeltaprimal$ so that $\hu_i = 0$ for every $i \in [j_s, j_e]$.
Next, we consider the complementary case in which for every optimal solution $\hu$, we have that $\hu_i = 1$ for some $i \in [j_s, j_e]$. In the latter case, we recover an index $j$ such that there is a solution $\hu$ for which $\hu_j = 1$ holds.

The first case is captured by the following claim.
\begin{restatable}{lemma}{lemmadeltarecovery}
\label{lemma:delta-recovery--1}
    Let $S \subseteq [\mn]$ be a set of indices, let $\mc \in \bbN^\mn$ be a coefficient vector, and let $\mk$ be the sparsity.
    Define a vector $c'' \in \bbN^\mn$ as follows:
    \[
        c_i'' =
        \begin{cases}
            1 & \text{if } i \in S \\
            1 + \mc_i & \text{otherwise}
        \end{cases} \; .
    \]
    Then, $\optimalvalue(c'', \mk)$ equals $\optimalvalue(\mc, \mk) + \mk$ iff there exists an optimal solution $\ustar$ to $\LPdeltaprimal(\mc, \mk)$ such that $\ustar_i = 0$ whenever $i \in S$.
\end{restatable}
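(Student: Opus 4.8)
The plan is to read $\optimalvalue(\cdot,\mk)$ as the optimal objective value: by Theorem~\ref{theorem:ternary-search} and strong duality this equals $\val{\LPdeltaprimal(\cdot,\mk)}$, and by total unimodularity (Lemma~\ref{lemma:primal-is-TUM} and Corollary~\ref{corollary:y0-integral-optimal}) it equals the combinatorial optimum $OPT(\mc):=\max_{\Omega\in\Msupports_{\mk,\Delta}}\sum_{i\in\Omega}\mc_i$. I would first restate the right-hand side combinatorially: since the program stays totally unimodular after fixing $u_i=0$ for $i\in S$ (a submatrix of a TUM matrix), the existence of an optimal solution $\ustar$ with $\ustar_i=0$ for all $i\in S$ is equivalent to the existence of an optimal support $\Omega\in\Msupports_{\mk,\Delta}$ with $\Omega\cap S=\emptyset$ and $\sum_{i\in\Omega}\mc_i=OPT(\mc)$.

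The computational heart is the identity, valid for every feasible support $\Omega$ (which has exactly $\mk$ elements): $\sum_{i\in\Omega}c''_i=\mk+\sum_{i\in\Omega\setminus S}\mc_i$, because each of the $\mk$ chosen coordinates contributes the base value $1$ while the coordinates outside $S$ additionally contribute $\mc_i$. Maximizing over $\Omega$ gives $OPT(c'')=\mk+M$ with $M:=\max_{\Omega}\sum_{i\in\Omega\setminus S}\mc_i$. Since $\mc\ge 0$, enlarging the index set from $\Omega\setminus S$ to $\Omega$ only increases the sum, so $M\le OPT(\mc)$ always; hence the claimed equation $\optimalvalue(c'',\mk)=\optimalvalue(\mc,\mk)+\mk$ is equivalent to $M=OPT(\mc)$, and the whole lemma reduces to showing $M=OPT(\mc)$ iff some optimal support avoids $S$.

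Both implications are then short. For $(\Leftarrow)$, an optimal support $\Omega$ with $\Omega\cap S=\emptyset$ satisfies $\sum_{i\in\Omega\setminus S}\mc_i=\sum_{i\in\Omega}\mc_i=OPT(\mc)$, so $M\ge OPT(\mc)$ and thus $M=OPT(\mc)$. For $(\Rightarrow)$, pick $\Omega$ attaining $M=OPT(\mc)$; then $OPT(\mc)\ge\sum_{i\in\Omega}\mc_i=\sum_{i\in\Omega\setminus S}\mc_i+\sum_{i\in\Omega\cap S}\mc_i=OPT(\mc)+\sum_{i\in\Omega\cap S}\mc_i$, which forces simultaneously $\sum_{i\in\Omega}\mc_i=OPT(\mc)$ (so $\Omega$ is optimal for $\mc$) and $\sum_{i\in\Omega\cap S}\mc_i=0$.

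The step I expect to be most delicate, and the one that forces the hypotheses to be used, is deducing $\Omega\cap S=\emptyset$ from $\sum_{i\in\Omega\cap S}\mc_i=0$. Here I would invoke $\mc\in\bbN^\mn$, i.e. $\mc_i\ge 1$ for every $i$ (recall that $\bbN=\mathbb{N}_+$ in this paper): a sum of strictly positive integers can vanish only over the empty index set. This positivity is genuinely necessary rather than cosmetic: if zero costs were allowed, one could have $\optimalvalue(c'',\mk)=\optimalvalue(\mc,\mk)+\mk$ while every feasible support is forced through a zero-cost index of $S$ — e.g. for $\mn=3$, $\Delta=2$, $\mk=2$, the only feasible support is $\{1,3\}$ — so the equivalence would fail. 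The uniform $+1$ offset plays the complementary bookkeeping role of keeping $c''$ a valid strictly positive input and of making the base value of every feasible support exactly $\mk$. Once $(\Rightarrow)$ produces an optimal support $\Omega$ disjoint from $S$, its indicator vector is the desired $\ustar$, which closes the equivalence.
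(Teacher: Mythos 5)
Your proof is correct and follows essentially the same route as the paper's: both rest on the accounting identity that any feasible support's $c''$-value equals $\mk$ plus its $\mc$-value off $S$, with the strict positivity $\mc_i \ge 1$ being what rules out any intersection with $S$ in the forward direction. You additionally make explicit a step the paper leaves implicit --- that the support witnessing the value equality is itself optimal for $\mc$ --- which is a welcome clarification rather than a deviation.
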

\begin{proof}
    Let us show the two direction of equivalence separately.
\paragraph{$(\implies)$} Assume that $\optimalvalue(c'', \mk)$ equals $\optimalvalue(\mc, \mk) + \mk$. Let $\hu$ be an optimal solution to $\LPdeltaprimal(\mc, \mk)$ for the cost vector given by $c''$. Now we want to show that $\hu_i = 0$ for every $i \in S$. Towards a contradiction, assume it is not the case, i.e. there exists $j \in S$ such that $\hu_j = 1$. But then, $\sum_{i\ :\ \hu_i = 1} (c_i + 1) > \sum_{i\ :\ \hu_i = 1} c_i''$ as $\mc_i + 1 > c_i''$ for $i \in S$, and hence $\optimalvalue(c'', \mk) < \optimalvalue(\mc, \mk) + \mk$, contradicting our assumption.

\paragraph{$(\impliedby)$}
First, observe that $\optimalvalue(c'', \mk) \le \optimalvalue(\mc, \mk) + \mk$ as $c'' \le \mc + 1$. On the other hand, if there exists an optimal solution $\ustar$ to $\LPdeltaprimal(\mc, \mk)$ such that $\ustar_i = 0$ whenever $i \in S$, then $\ustar$ achieves value $\optimalvalue(\mc, \mk) + \mk$ in $\LPdeltaprimal(\mc, \mk)$ for the costs given by $c''$. In other words, restricted to the set of indices outside of $S$, $\mc$ is a shifted by 1 variant of $c''$. Therefore, we also have $\optimalvalue(c'', \mk) \ge \optimalvalue(\mc, \mk) + \mk$. Now this implies $\optimalvalue(c'', \mk) = \optimalvalue(\mc, \mk) + \mk$, as desired.
\end{proof}

Next we provide algorithm $\deltarecovery$ that we use in the proof of the next theorem.
\begin{algorithm}[H]
	\caption{\deltarecovery:}
	\label{alg:deltarecovery}
	Input: $\mc \in \bbN^{\mn}$, sparsity $\mk$, index $j_s$ \\
	Output: index $r$ as described in Theorem~\ref{theorem:delta-recovery}
	\vspace{-0.1in}
	\begin{enumerate}[1.]
	    \setlength\itemsep{-0.3em}
		\item\label{item:init-idx} $j_e \gets \min\{j_s + \Delta - 1, \mn\}$
		\item\label{item:init} $s \gets j_s$; $\quad e = j_e$; $\quad OPT \gets \optimalvalue(\mc, \mk)$
		\item $c_i'' \gets
            \begin{cases}
                1 & \text{if } i \in [j_s, j_e] \\
                1 + \mc_i & \text{otherwise}
            \end{cases}
            $
        \item\label{item:ignore} If $\optimalvalue(c'', \mk) = OPT + \mk$ then return -1
		\item\label{line:binary-search-loop} While $s < e$
			\begin{enumerate}[(a)]
			    \setlength\itemsep{-0.3em}
				\item $mid \gets \left \lfloor \tfrac{s + e}{2} \right \rfloor$
				\item\label{item:c''-inside-loop} $c_i'' \gets
		            \begin{cases}
		                1 & \text{if } i \in [j_s, j_e] \setminus [s, mid] \\
		                1 + \mc_i & \text{otherwise}
		            \end{cases}
		            $
				\item\label{item:update-e} If $\optimalvalue(c'', \mk) = OPT + \mk$ then $e \gets mid$
				\item\label{item:update-s} Else $s \gets mid + 1$
			\end{enumerate}
        \item Return $s$
	\end{enumerate}
\end{algorithm}

\begin{restatable}{theorem}{theoremdeltarecovery}
\label{theorem:delta-recovery}
    There exists an algorithm that given $\mc \in \bbN^\mn$, sparsity $\mk$, and an index $j_s \in [1, \mn]$, outputs an integer $r$ having the following properties:
    \begin{itemize}
        \item If for every optimal solution $\hu$ to $\LPdeltaprimal(\mc, \mk)$ there is an index $i$ such that $\hu_i = 1$ and $i \in [j_s, \min\{j_s + \Delta - 1, \mn\}]$, then $r$ is set to be an index in $[j_s, \min\{j_s + \Delta - 1, \mn\}]$ such that there is an optimal solution $\ustar$ for which we have $\ustar_r = 1$.
        \item Otherwise, $r$ is set to -1.
    \end{itemize}
    Furthermore, if the algorithm runs in time $O\left((\log{\Delta}) \rtoptimalvalue{\mn}{(\mcmax + 1)}{\mk}\right)$.
\end{restatable}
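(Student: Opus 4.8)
The plan is to prove correctness and the running-time bound of the algorithm $\deltarecovery$ (Algorithm~\ref{alg:deltarecovery}), treating Lemma~\ref{lemma:delta-recovery--1} as a black-box oracle that detects, for any index set $S$, whether some optimal solution of $\LPdeltaprimal(\mc, \mk)$ can avoid $S$ entirely. The structural fact I would record at the very start is that the window $[j_s, j_e]$ with $j_e = \min\{j_s + \Delta - 1, \mn\}$ has length at most $\Delta$, so by the $\Delta$-separation constraint \emph{every} feasible support contains \emph{at most one} index of $[j_s, j_e]$. This upgrades ``an optimal solution uses the window'' to ``its unique window index is well defined'', which is exactly what makes the binary search meaningful.

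First I would dispatch the $r = -1$ branch (step~\ref{item:ignore}). Applying Lemma~\ref{lemma:delta-recovery--1} with $S = [j_s, j_e]$, the test $\optimalvalue(c'', \mk) = OPT + \mk$ succeeds precisely when some optimal $\ustar$ satisfies $\ustar_i = 0$ for all $i \in [j_s, j_e]$, which is exactly the ``Otherwise'' case of the theorem; hence returning $-1$ here is correct. Conversely, if the test fails we have learned that every optimal solution uses (by the length bound, exactly) one window index, which is the precondition needed for the binary-search phase.

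The heart of the argument is the loop invariant at step~\ref{line:binary-search-loop}: at the start of each iteration there is an optimal solution of $\LPdeltaprimal(\mc, \mk)$ whose unique window index lies in $[s, e] \subseteq [j_s, j_e]$. To maintain it I would invoke Lemma~\ref{lemma:delta-recovery--1} with $S = [j_s, j_e] \setminus [s, mid]$. If the test returns $OPT + \mk$, some optimal solution avoids $[j_s, j_e] \setminus [s, mid]$; since it must still use one window index, that index lies in $[s, mid]$, justifying $e \gets mid$. If the test fails, every optimal solution meets $[j_s, j_e] \setminus [s, mid]$; combining this global fact with the invariant and the ``at most one window index'' property forces the solution promised by the invariant to use an index in $[s, e] \cap \left([j_s, j_e] \setminus [s, mid]\right) = [mid + 1, e]$, justifying $s \gets mid + 1$. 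When the loop exits with $s = e$, the invariant delivers an optimal $\ustar$ with $\ustar_s = 1$, so returning $s$ is correct.

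For the running time, each coefficient of $c''$ is at most $1 + \mc_i \le \mcmax + 1$, so by Theorem~\ref{theorem:ternary-search} every call $\optimalvalue(c'', \mk)$ costs $O\!\left(\rtoptimalvalue{\mn}{(\mcmax + 1)}{\mk}\right)$. Since $[s, e] \subseteq [j_s, j_e]$ has length at most $\Delta$, the loop performs $O(\log \Delta)$ iterations with one oracle call each, plus the $O(1)$ calls in steps~\ref{item:init}--\ref{item:ignore}; multiplying yields the claimed $O\!\left((\log \Delta)\,\rtoptimalvalue{\mn}{(\mcmax + 1)}{\mk}\right)$ bound. I expect the main obstacle to be the second binary-search case: one must carefully combine the \emph{global} statement that every optimal solution now hits $[j_s, j_e] \setminus [s, mid]$ with the \emph{local} invariant to pin the witnessing solution's index into $[mid + 1, e]$. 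This step hinges essentially on the ``at most one window index'' property and is easy to get wrong if one forgets that distinct optimal solutions may place their window index at distinct positions.
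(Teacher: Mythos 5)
Your proposal is correct and follows essentially the same route as the paper: the same use of Lemma~\ref{lemma:delta-recovery--1} with $S=[j_s,j_e]$ for the $-1$ branch and with $S=[j_s,j_e]\setminus[s,mid]$ inside the loop, the same invariant (the paper's property $R_{s,e}$), and the same resolution of the ``else'' case via the fact that a length-$\le\Delta$ window admits at most one chosen index. The running-time accounting also matches the paper's.
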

\begin{proof}
On line~\ref{item:ignore}, $\deltarecovery$ first checks whether we can simply ignore all the entries indexed by $\{i, \ldots, \min\{i + \Delta - 1, \mn\}\}$. And if yes, it returns -1. (This step is formalized in the statement of Theorem~\ref{theorem:delta-recovery}.) However, entries from those interval can only be disregarded if there exists an optimal solution $\ustar$ to $\LPdeltaprimal(\mc, \mk)$ such that for every $j \in \{i, \ldots, \min\{i + \Delta -1 , \mn\}\}$ we have $\ustar_j = 0$. So, if there is no such $\ustar$, i.e. the entries corresponding to that interval have to be considered, the lines following line~\ref{item:ignore} of $\deltarecovery$ serve to pinpoint an entry $j \in \{i, \ldots, \min\{i + \Delta - 1, \mn\}\}$ so that there exists an optimal solution $\hu$ to $\LPdeltaprimal(\mc, \mk)$ for which holds $\hu_j = 1$. The way it is done is by applying a binary search over the interval that \emph{can not} be ignored, tracked via variables $s$ and $e$ of the algorithm.

\paragraph{Correctness.} Algorithm $\deltarecovery$ outputs $-1$ correctly by Lemma~\ref{lemma:delta-recovery--1} for $S = \{j_s, \ldots, j_e\}$.

Next, we show that when the binary search loop starting at line~\ref{line:binary-search-loop} ends we have $s = e$, i.e. $[s, e]$ corresponds to a single index. As long as $\Delta \ge 1$ and $j_s \le \mn$, at line~\ref{item:init-idx} and line~\ref{item:init} values $s$ and $e$ are initialized so that $s \le e$, so initially $[s, e]$ is indeed a non-empty interval. Furthermore, as we have $s < e$ in each iteration, it holds $mid < e$. But it also holds $mid \ge s$ (in fact $mid$ equals $s$ only when $s + 1$ equals $e$). So, updating $e$ to $mid$ at line~\ref{item:update-e}, or $s$ to $mid + 1$ at line~\ref{item:update-s}, $[s, e]$ remains a non-empty interval in any iteration. Notice that the update rules also guarantee that we either increase $s$ or decrease $e$ at each iteration, and therefore $[s, e]$ shrinks its size by 1 at least at each iteration. Putting this together, and taking into account the loop-termination condition at line~\ref{line:binary-search-loop}, we conclude that after the loop ends it holds $s = e$.

We say that interval $[s, e]$ has property $R_{s, e}$ if there is an optimal solution $\hu$ to $\LPdeltaprimal(\mc, \mk)$ such that $\hu_i = 1$ for some $i \in [s, e]$. Now, if we show that after line~\ref{item:ignore} at every step of the algorithm interval $[s, e]$ has property $R_{s, e}$, then the proof will follow immediately. That is exactly how we proceed. Namely, we show that if $[s, e]$ has property $R_{s, e}$, then after updating $s$ or $e$ at line~\ref{item:update-e} or line~\ref{item:update-s} obtaining $s'$ and $e'$, respectively, then interval $[s', e']$ will have property $R_{s', e'}$.

Observe that at the beginning of the very first iteration of the loop property $R_{s, e}$, equivalent to $R_{j_s, j_e}$, holds as by Lemma~\ref{lemma:delta-recovery--1} \emph{every} optimal solution $\hu$ is such that $\hu_i = 1$ for some $i \in [j_s, j_e]$. Next, let $[s, e]$ be updated to $[s', e']$. We want to show that property $R_{s', e'}$ holds as well.

First, assume that line~\ref{item:update-e} gets executed, i.e. $s' = s$ and $e' = mid$. Then, by Lemma~\ref{lemma:delta-recovery--1}, we have that there exists an optimal solution $\hu$ to $\LPdeltaprimal(\mc, \mk)$ such that $\hu_i = 0$ for every $i \in [j_s, j_e] \setminus [s, mid]$. But we also have that there is $i \in [j_s, j_e]$ such that $\hu_i = 1$. So, putting it together, we conclude that property $R_{s', e'}$ holds.

Next, assume that the if condition at line~\ref{item:update-e} does not hold. So, line~\ref{item:update-s} is executed, i.e. $s' = mid + 1$ and $e' = e$. Then again by Lemma~\ref{lemma:delta-recovery--1}, and as a consequence of both line~\ref{item:ignore} and line~\ref{item:update-e}, we have that for every optimal solution $\hu$ we have $\hu_i = 1$ for some $\in [j_s, j_e] \setminus [s, mid]$. Also, as $j_e - j_s + 1 \ge \Delta$, we also have $\hu_j = 0$ for every $j \in [s, mid]$. But then, as by our assumption there is an optimal solution $\tu$ and an index $i$ such that $\tu_i = 1$ and $i \in [s, e]$, we have that $i \in [mid + 1, e]$. Hence property $R_{mid + 1, e}$, which is equivalent to $R_{s', e'}$, holds.

\paragraph{Running time.} If $\deltarecovery$ outputs -1 at line~\ref{item:ignore}, it invokes $\optimalvalue$ for the cost vector $\mc$ and $c''$. Note that $\cmax' \le \mcmax + 1$. Then, by Theorem~\ref{theorem:ternary-search} that case takes $O\left(\rtoptimalvalue{\mn}{(\mcmax + 1)}{\mk}\right)$ time.

If the method does not output -1, then it enters the while loop. The loop applies a standard binary search over $j_s$ and $j_e$ which are set so that it holds $j_e - j_s + 1 \le \Delta$. So, the loop iterates for $O(\log{\Delta})$ times. Every iteration invokes $\optimalvalue$ for vector $c''$ as defined at line~\ref{item:c''-inside-loop}, resulting in the total running time of $O((\log{\Delta}) \rtoptimalvalue{\mn}{(\mcmax + 1)}{\mk})$.
\end{proof}

\subsection{Distributing sparsity}
\label{section:distributing-sparsity}
The algorithm we presented in the previous section enables us to quickly recover a target length-$\Delta$ segment of some optimal primal solution $\ustar$. We now would like to build on this procedure to develop a divide-and-conquer approach to recovering the primal solution $\ustar$ in full. 

Our intention is to use the procedure from Section \ref{section:recover-delta} to split our input problem into two (smaller and approximately equally sized) subproblems by recovering a ``middle'' segment of the solution $\ustar$ and then to proceed recursively on each of these subproblems. The difficulty here, however, is that these two resulting subproblems are not really independent, even though they correspond to separate cost vectors.
The subproblems still share the sparsity constraint.
That is, to make these subproblems truly independent, we must also specify the split of our sparsity ``budget'' $\mk$ between them.

In this section, we analyze properties of the dual LP and exhibit very close ties between its structure, as captured by active constraints, and the optimal sparsity distribution for our two subproblems. Specifically, we establish the following theorem.
\begin{restatable}{theorem}{theoremdistribute}
\label{theorem:distribute-sparsity}
	Let $\mc \in \bbN^{\mn}$, let $k''$ be a sparsity parameter, and let indices $s$ and $e$ be such that $e - s + 1 \ge \Delta$ or $e$ equals $\mn$.
	Define a vector $c''\in \bbN^{\mn}$ as follows:
	\[
	    c''_i =
	    \begin{cases}
	        -\infty & \text{ if } i \in [s, e] \\
	        \mc_i & \text{otherwise}
	    \end{cases}
	\]
	Assume that there is a $\Delta$-separated choice of coordinates of $c''$ such that $k''$ coordinates are chosen, and none of them has an index in $[s, e]$, i.e., the instance is feasible when restricted to coordinates outside of $[s, e]$.
	Then, there is an algorithm that outputs $k_L'$ and $k_R'$ with the property that there exists an optimal $\Delta$-separated solution for cost vector $c''$ and sparsity $k''$ such that it chooses $k_L'$ coordinates of $c''$ with indices less than $s$, and $k_R'$ coordinates of $c''$ with indices greater than $e$.
	
	The algorithm runs in time $O\left(\mn (\log{\mcmax} + \log{k''}) \right)$.
\end{restatable}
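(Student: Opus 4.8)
The plan is to reduce the sparsity-distribution question to a single run of the dual solver on $c''$ and then read off the split from the structure of the active constraints. First I would invoke $\optimalvalue(c'', k'')$ (representing the $-\infty$ entries so that the corresponding constraints of $\LPdeltadual(c'', k'')$ are vacuous) to obtain the optimal integer value $v$ of $w_0$; by Theorem~\ref{theorem:ternary-search} this costs $O\left(\mn(\log \mcmax + \log k'')\right)$, already matching the claimed bound. I would then compute $w \gets \dualgreedy(c'', v)$ and $w' \gets \dualgreedy(c'', v+1)$ together with their active constraints $A \gets \activeconstraints(c'', w)$ and $A' \gets \activeconstraints(c'', w')$ in $O(\mn)$ additional time.

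The crux of the argument is a \emph{decoupling} claim. Because $c''_i = -\infty$ for $i\in[s,e]$ and $e-s+1\ge\Delta$ (or $e=\mn$), every constraint indexed in $[s,e]$ is vacuous, so $\dualgreedy$ leaves $w_i=0$ there, and the window of length $\Delta$ examined while processing any index $>e$ never reaches an index $<s$. Hence the restriction of $\dualgreedy(c'',\lambda)$ to $[1,s-1]$ coincides with $\dualgreedy$ run on the left subinstance $c''|_{[1,s-1]}$ with $w_0=\lambda$, and likewise on the right; in particular the active constraints partition cleanly into a left part (count $L_\lambda$) and a right part (count $R_\lambda$), with none lying in $[s,e]$. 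The same decoupling yields, for every split $k''=k_L+k_R$, the objective decomposition
\[
    \val{\LPdeltadual_v(c'', k'')} = \val{\LPdeltadual_v(c''|_{[1,s-1]}, k_L)} + \val{\LPdeltadual_v(c''|_{[e+1,\mn]}, k_R)},
\]
since $w_0 k'' = v k_L + v k_R$ and $\sum_i w_i$ splits along the two sides while the middle contributes $0$. I expect this decoupling, and the objective decomposition it powers, to be the main obstacle; everything downstream is short interval arithmetic built on Lemma~\ref{lemma:active-sets-optimality}.

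Next I would transfer optimality from the combined instance to the two subinstances. Applying Lemma~\ref{lemma:active-sets-optimality} to the left subinstance shows that $v$ is an optimal choice of $w_0$ for $\LPdeltadual(c''|_{[1,s-1]}, k_L)$ exactly when $L_{v+1}\le k_L\le L_v$, and symmetrically on the right when $R_{v+1}\le k_R\le R_v$; here $L_{v+1}\le L_v$ and $R_{v+1}\le R_v$ follow from monotonicity of the active count in $\lambda$ applied to each subinstance (Lemma~\ref{lemma:y-delta-change} together with convexity of $\LPdeltadual_\lambda$). Combined with the decomposition above, any split with $k_L\in[L_{v+1},L_v]$ and $k_R\in[R_{v+1},R_v]$ makes \emph{both} sub-duals attain their optima at $w_0=v$, so their values equal the optimal primal values $f_L(k_L)$ and $f_R(k_R)$ of the two sides and sum to $\val{\LPdeltadual(c'', k'')} = OPT$. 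That such a split with $k_L+k_R=k''$ exists follows because $\activeconstraints$ outputs $\Delta$-separated indices and optimality of $v$ for the combined instance gives $L_{v+1}+R_{v+1}\le k''\le L_v+R_v$; the sumset of the integer intervals $[L_{v+1},L_v]$ and $[R_{v+1},R_v]$ is exactly $[L_{v+1}+R_{v+1},\,L_v+R_v]$, so I may take $k_L'=\max(L_{v+1},\,k''-R_v)$ and $k_R'=k''-k_L'$, which the four resulting inequalities place in the required intervals. The feasibility hypothesis makes the combined optimum finite, and $k_L'\le L_v$, $k_R'\le R_v$ bound the chosen counts by the numbers of $\Delta$-separated active indices available per side, so $f_L(k_L')$ and $f_R(k_R')$ are well defined.

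Finally I would glue the two sub-solutions: an optimal left selection of $k_L'$ coordinates and an optimal right selection of $k_R'$ coordinates are automatically $\Delta$-separated from one another, since a left index is $\le s-1$, a right index is $\ge e+1$, and their gap exceeds $\Delta$. Their union is therefore a feasible $\Delta$-separated $k''$-sparse choice of value $f_L(k_L')+f_R(k_R')=OPT$, hence an optimal solution for $c''$ and $k''$ using exactly $k_L'$ coordinates below $s$ and $k_R'$ above $e$, as required. The running time is dominated by the single call to $\optimalvalue$ plus two $\dualgreedy$/$\activeconstraints$ passes, giving $O\left(\mn(\log \mcmax + \log k'')\right)$.
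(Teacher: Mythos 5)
Your proposal is correct and follows essentially the same route as the paper: a single dual solve on the combined instance, the decoupling of $\dualgreedy$ across the vacuous middle block so that the active-constraint counts at $w_0=v$ and $v+1$ split into per-side intervals $[L_{v+1},L_v]$ and $[R_{v+1},R_v]$, an appeal to Lemma~\ref{lemma:active-sets-optimality} and monotonicity on each side, and a choice of $k_L'$ in the sumset. The paper's formula $k_L' = k^2_L + \min\{k^1_L - k^2_L,\, k'' - k^2_L - k^2_R\}$ is just a different representative of the same feasible interval as your $\max(L_{v+1},\,k''-R_v)$, and the paper additionally makes explicit the finite-shift encoding of the $-\infty$ entries that you leave implicit.
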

Observe that Theorem~\ref{theorem:distribute-sparsity} essentially gives an algorithm that distributes sparsity in an optimal way and thus enables us to implement the desired divide-and-conquer approach.

Let us illustrate how to use the algorithm $\deltarecovery$ (Algorithm \ref{alg:deltarecovery}) together with $\activeconstraints$ to recover an optimal solution for the example $c = (3, 2, 1, 4, 1, 1, 2, 1)$, $k = 4$ and $\Delta = 2$.
Let $\ustar$ be an optimal solution to $\LPdeltaprimal$.
As described above, we proceed by splitting $c$ into two subvectors and solve the separated sparsity problem on each of them independently.
The first step is to find the exact splitting point in $c$.
To achieve this, we invoke $\deltarecovery(c, 4, k)$.
As $c_4 = 4$ is part of any optimal solution, the method returns index $4$ and hence we have $\ustar_4 = 1$.
So at this moment we have learned one index of the optimal solution.
In fact, it also implies that $\ustar_3 = \ustar_5 = 0$.
Hence it remains to learn the remaining $k - 1$ indices that define $\ustar$.
However, we have to distribute the remaining sparsity $k-1$ over the two subproblems.
To that end, we define a new vector $c'$ as follows
\[
    c' =
    \begin{cases}
        -\infty & \text{if } i \in \{3, 4, 5\} \\
        c_i & \text{otherwise.}
    \end{cases}
\]
The definition of $c'$ enforces that no optimal solution to the separated sparsity problem for sparsity $k-1$ will choose $c_3$, $c_4$, or $c_5$ (setting $c'_i$ to the dummy value $-\infty$ is for illustration purposes only).
Next, we invoke $\activeconstraints(c', \optimalvalue(c', k - 1))$ and denote its output by $active'$. 
Observe that, as long as $\optimalvalue(c, k)$ has finite value, an optimal solution to $\optimalvalue(c', k - 1)$ is finite as well.
We have that $active' = \{1, 6, 8\}$.
Now we count the number of elements in $active'$ that are ``to the left'' and ``to the right'' of $\ustar_4$, denoting these quantities by $\hk$ and $\tk$ respectively, i.e.,
\[
    \hk = |\{i \in active'\ |\ i < 4\}|, \quad \text{ and } \quad \tk = |\{i \in active'\ |\ i > 4\}| \; .
\]
In a similar way, we split the vector $c'$ into $\hc$ and $\tc$ as follows:
\[
    \hc = (c_1, c_2), \text{ and } \tc = (c_6, c_7, c_8) \; .
\]
Finally, we solve two $\Delta$-separated instances independently, one for $\hc$ and $\hk$, and the second one for $\tc$ and $\tk$.

\begin{proofof}{Theorem~\ref{theorem:distribute-sparsity}}

Let us start by presenting an algorithm the distributes the sparsity, and continue by analyzing it.
\begin{algorithm}[H]
	\caption{\distributesparsity:}
	Input: $\mc \in \bbN^{\mn}$, sparsity $k''$ to be distributed, indices $s$ and $e$ as described in Theorem~\ref{theorem:distribute-sparsity} \\
	Output: sparsity $k_L'$ for the left side, sparsity $k_R'$ for the right side
	\vspace{-0.1in}
	\begin{enumerate}[1.]
	    \setlength\itemsep{-0.3em}
		\item $shift \gets 1 + (k'' \mcmax + 1)$
		\item $c_i'' \gets
		        \begin{cases}
		            1 & \text{if } i \in [s, e] \\
		            \mc_i + shift & \text{otherwise}
		        \end{cases}
		    $
        \item\label{item:w1-opt-int} $w^1 \gets \optimalvalue(c'', k'', -(k'' - 1) \mcmax + shift, \mcmax + shift)$
        \item $active^1 \gets \activeconstraints(c'', w^1)$
        \item $active^2 \gets \activeconstraints(c'', \dualgreedy(c'', w^1_0 + 1))$
	    \item $k^1_L \gets |\{i \in active^1\ |\ i < s\}|$
	    \item $k^2_L \gets |\{i \in active^2\ |\ i < s\}|$
	    \item $k^2_R \gets |\{i \in active^2\ |\ i > e\}|$
	    \item $k_L' \gets k^2_L + \min\{k^1_L - k^2_L, k'' - k^2_L - k^2_R\}$
	    \item $k_R' \gets k'' - k_L'$
	    \item Return $(k_L', k_R')$
	\end{enumerate}
\end{algorithm}

\paragraph{Optimality for $\LPdeltadual(c'', k'')$.} We argue that $w^1$ defined on line~\ref{item:w1-opt-int} is an optimal solution to $\LPdeltadual(c'', k'')$ although $w^1_0$ is restricted to belong to interval $[-(k'' - 1) \mcmax + shift, \mcmax + shift] = [\mcmax + 2, \mcmax + shift]$. First, observe that for $w^1_0 \ge \mcmax + 2$ no constraint with index in $[s, e]$ is tight, and hence no such constraint can be active. Furthermore, as no constraints in $[s, e]$ is tight, we have $w^1_i = 0$ for all $i \in [e, s]$. Now, consider $c_L = (\mc_1, \ldots, \mc_{s - 1})$ and $c_R = (\mc_{e + 1}, \ldots, \mc_\mn)$ as defined in the algorithm. Then, by following the proof of Lemma~\ref{lemma:ystar-lower-bound}, and as $e - s + 1 \ge \Delta$, we conclude that $\dualgreedy(c'', \mcmax + 2)$ has at least $\min\left\{\mk, \left\lceil \tfrac{s - 1}{\Delta} \right \rceil \right\} + \min\left\{\mk, \left\lceil \tfrac{n - e}{\Delta} \right \rceil \right\}$ active constraints which, by the construction and the assumption that the input instance is feasible, is $k''$ at least. So, by the monotonicity of the size of active constraints given by Lemma~\ref{lemma:active-constraints-monotone} and the optimality condition provided via Lemma~\ref{lemma:active-sets-optimality} we have that $w^1$ is an optimal solution to $\LPdeltadual(c'', k'')$.

\paragraph{Correctness of distributed sparsity.} Let $active^1$ denote $\activeconstraints(c'', w^1)$ and $active^2$ denote $\activeconstraints(c'', w^2)$, where $w^2 \gets \dualgreedy(c'', w^1_0 + 1)$. Next, split $w^1$ and $w^2$ as follows. Let $w^1_L$ be a zero-indexed $s$-dimensional and $w_R^1$ be a zero-indexed $(\mn - e + 1)$-dimensional vector defined as
\[
    w^1_L \gets (w^1_0, w^1_1, \ldots, w^1_{s - 1}), \text{ and } w^1_R \gets (w^1_0, w^1_{e + 1}, \ldots, w^1_{\mn}).
\]
Intuitively, letter 'L' stands for left to $s$ and letter 'R' stands for right to $e$. Similarly to $w^1_L$ and $w^1_R$, define $w^2_L$ and $w^2_R$ as
\[
    w^2_L \gets (w^2_0, w^2_1, \ldots, w^2_{s - 1}), \text{ and } w^2_R \gets (w^2_0, w^2_{e + 1}, \ldots, w^2_{\mn}).
\]
Also, split $active^1$ and $active^2$ with respect to $s$ and $e$ in the obvious way
\[
    active^1_L \gets \{i \in active^1\ |\ i < s\}, \text{ and } active^1_R \gets \{i \in active^1\ |\ i > e\}, 
\]
and
\[
    active^2_L \gets \{i \in active^2\ |\ i < s\}, \text{ and } active^2_R \gets \{i \in active^2\ |\ i > e\}.
\]
As a reminder, there is no $i \in [s, e]$ such that $i \in active^1$. Furthermore, as there is no tight constraint in $[s, e]$ for $w^1$, we have $w^1_i = 0$ for every $i \in [s, e]$. Observe that the same holds for $active^2$ and $w^2$. In addition, we have $e - s + 1 \ge \Delta$. From this, we can derive the following list of equalities, which essentially allows us to split the input problem into two independent subproblems. So, we have: $\dualgreedy(c_L, w^1_0)$ equals $w^1_L$; $\dualgreedy(c_L, w^1_0 + 1)$ equals $w^2_L$; $\dualgreedy(c_R, w^1_0)$ equals $w^1_R$; and, $\dualgreedy(c_R, w^1_0 + 1)$ equals $w^2_R$. But also, we have: $\activeconstraints(c_L, w^1_L)$ equals $active^1_L$; $\activeconstraints(c_L, w^2_L)$ equals $active^2_L$; $\activeconstraints(c_R, w^1_R)$ equals $active^1_R$; and, $\activeconstraints(c_R, w^2_R)$ equals $active^2_R$.

Now, by Lemma~\ref{lemma:active-constraints-monotone} we conclude that $|active^1_L| \ge |active^2_L|$ and $|active^1_R| \ge |active^2_R|$. This in turn implies that $k_L'$ and $k_R'$ in $\distributesparsity$ are derived correctly. (Note that $k_L' + k_R' = k''$.) But then, by Lemma~\ref{lemma:active-sets-optimality} we have that $w^1_L$ is an optimal solution to $\LPdeltadual(c_L, k_L')$ and $w^1_R$ is an optimal solution to $\LPdeltadual(c_R, k_R')$, so we can independently solve $\recoverprimal(c_L, k_L')$ and $\recoverprimal(c_R, k_R')$ knowing that the optimal solution, i.e. its value, remains the same. We point out that $w_L' \gets \optimalvalue(c_L, k_L)$ might differ from $w^1_L$, and similarly $w_R' \gets \optimalvalue(c_R, k_R)$ might differ from $w^1_R$. In fact, $w_L'$ and $w_R'$ might be such that $w_L'{}_0 \neq w_R'{}_0$, although we have that $w^1_L{}_0$ is equal to $w^1_R{}_0$.

\paragraph{Running time.} Every line, except maybe line~\ref{item:w1-opt-int}, take $O(\mn)$ time. On the other hand, line~\ref{item:w1-opt-int} takes time $O\left(\mn \log{\{\mcmax + shift - (-(k'' - 1)\mcmax + shift)\}} \right)$ which is equal to $O\left(\mn (\log{\mcmax} + \log{k''}) \right)$.
\end{proofof}

\subsection{Separated sparsity in nearly-linear time}
\label{sec:recoverprimal}
We now have all components in place to state our final algorithm $\recoverprimal(\mc, \mk)$ that produces an optimal integral solution for $\LPdeltaprimal(\mc, \mk)$ (see Algorithm \ref{alg:recoverprimal}).
\begin{algorithm}
	\caption{\recoverprimal:}
	\label{alg:recoverprimal}
	Input: $\mc \in \bbN^{\mn}$, sparsity $\mk$ \\
	Output: a primal solution $\mn$-dimensional vector $\ustar$
	\vspace{-0.1in}
	\begin{enumerate}[1.]
	    \setlength\itemsep{-0.3em}
	    \item\label{line:k=0} If $\mk = 0$ return $\allzeros$
		\item\label{item:get-r} $r \gets \deltarecovery(\mc, \left \lfloor \tfrac{\mn + 1}{2} \right \rfloor, \mk)$
		\item If $r = -1$ then
		    \vspace{-0.1in}
		    \begin{enumerate}[1.]
		        \setlength\itemsep{-0.3em}
		        \item $s \gets \left \lfloor \tfrac{\mn + 1}{2} \right \rfloor$, $\quad e \gets \min\{s + \Delta - 1, \mn\}$
		        \item $k'' \gets \mk$
		    \end{enumerate}
		\item Else
		    \vspace{-0.1in}
		    \begin{enumerate}[1.]
                \setlength\itemsep{-0.3em}
                \item $s \gets \max\{r - \Delta + 1, 1\}$, $\quad e \gets \min\{r + \Delta - 1, \mn\} $
                \item $k'' \gets \mk - 1$
            \end{enumerate}
        \item\label{item:distribute-sparsity} $(k_L', k_R') \gets \distributesparsity(\mc, k'', s, e)$
        \item\label{item:split-c} $c_L \gets (\mc_1, \ldots, \mc_{s - 1})$, $\quad c_R \gets (\mc_{e + 1}, \ldots, \mc_\mn)$
        \item\label{item:recursive-call} $u^L \gets \recoverprimal(c_L, k_L')$, $\quad u^R \gets \recoverprimal(c_R, k_R')$
        \item\label{item:obtain-star} $\ustar_i =
            \begin{cases}
                u^L_i & \text{if } i < s \\ 
                1 & \text{if } i \in [s, e] \text{ and } r = i \\
                u^R_{i - e} & \text{if } i > e \\
                0 & \text{otherwise}
            \end{cases}
            $
        \item Return $\ustar$
	\end{enumerate}
\end{algorithm}



We prove the following result for our algorithm.
\begin{restatable}{theorem}{theoremrecoverprimal}
\label{theorem:recover-primal}
	The algorithm $\recoverprimal(\mc, \mk)$ solves the model projection problem for separated sparsity in time $O\left(\rtoptimalvalue{n}{(\cmax + 1)}{k}\log{n} \log{\Delta} \right)$.
\end{restatable}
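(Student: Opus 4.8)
The plan is to prove correctness by strong induction on the size of the subinstance and to bound the running time by summing the work over the levels of the recursion tree. For correctness I would maintain the invariant that, on any \emph{feasible} input, $\recoverprimal(\mc, \mk)$ returns the support of an optimal $\Delta$-separated $\mk$-sparse solution to $\LPdeltaprimal(\mc, \mk)$; the base case $\mk = 0$ is discharged at line~\ref{line:k=0}. For the inductive step I would branch on the value $r$ returned by $\deltarecovery$ at the midpoint $m = \lfloor (\mn+1)/2 \rfloor$ and invoke Theorem~\ref{theorem:delta-recovery}. If $r = -1$, then some optimal solution places no coordinate in the segment $[s,e]$, so blocking out $[s,e]$ while keeping $k'' = \mk$ preserves the optimal value. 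If $r \neq -1$, then some optimal solution has $\ustar_r = 1$; since $[s,e]$ is the window of radius $\Delta - 1$ about $r$, the separation constraint forces every other chosen index to satisfy $|j-r|\ge \Delta$ and hence to lie outside $[s,e]$, so fixing $\ustar_r = 1$ and blocking out $[s,e]$ with $k'' = \mk - 1$ again preserves the optimum up to the additive $c_r$. In both branches I would verify that the blocked segment satisfies $e - s + 1 \ge \Delta$ or $e = \mn$ (checking the $s = 1$ and $e = \mn$ boundary cases explicitly) so that the hypotheses of Theorem~\ref{theorem:distribute-sparsity} are met.

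The core of the argument is to show that solving the two subinstances $(c_L, k_L')$ and $(c_R, k_R')$ independently recombines into a global optimum. Two ingredients are needed. First, because the removed segment has length at least $\Delta$ (or reaches an endpoint), any left index $\le s-1$ and any right index $\ge e+1$ differ by at least $\Delta$; hence the two subproblems are truly independent, and the concatenation of any two feasible subsolutions, together with $r$ when $r \neq -1$, is feasible. Second, Theorem~\ref{theorem:distribute-sparsity} supplies $(k_L', k_R')$ with $k_L' + k_R' = k''$ such that some optimum of the blocked instance uses exactly $k_L'$ coordinates on the left and $k_R'$ on the right. A short exchange argument then shows that the left half of this decomposition attains the optimal value among left-feasible $k_L'$-sparse supports (otherwise an independent better left piece would improve the global value), and symmetrically on the right; thus the blocked optimum equals the sum of the two subproblem optima. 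By the induction hypothesis these optima are attained by $\recoverprimal(c_L, k_L')$ and $\recoverprimal(c_R, k_R')$, and adding the coordinate $r$ when present yields a global optimum of cardinality exactly $\mk$.

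For the running time I would first show that the recursion depth is $O(\log \mn)$, where $\mn$ is the size of the original input. In a call on a subinstance of size $n'$, the routine $\deltarecovery$ probes the $\Delta$-segment starting at the midpoint $\lfloor (n'+1)/2 \rfloor$, so the blocked segment $[s,e]$ always straddles this midpoint; a short case analysis then shows that the two resulting pieces, of lengths $s-1$ and $n'-e$, each have length at most $\lfloor (n'+1)/2 \rfloor - 1 \le n'/2$. Hence the subinstance size at least halves at every level. At such a node the dominant cost is the single $\deltarecovery$ call, which by Theorem~\ref{theorem:delta-recovery} takes $O\big((\log \Delta)\, \rtoptimalvalue{n'}{(\cmax+1)}{\mk}\big)$ time; the $\distributesparsity$ call (Theorem~\ref{theorem:distribute-sparsity}) and the $O(n')$ split/merge steps cost no more, since the coordinates and sparsity of every subinstance are bounded by $\cmax$ and $\mk$. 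Because the subinstances at a fixed recursion level occupy disjoint index ranges, their sizes sum to at most $\mn$, so each level costs $O\big((\log \Delta)\, \rtoptimalvalue{\mn}{(\cmax+1)}{\mk}\big)$; multiplying by the $O(\log \mn)$ levels yields the claimed bound.

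The step I expect to be the main obstacle is the decomposition/exchange argument of the second paragraph: combining the ``an optimal split exists'' guarantee of Theorem~\ref{theorem:distribute-sparsity} with independence to conclude that optimizing the two sides separately is globally optimal, while simultaneously tracking that feasibility---the existence of $\mk$ pairwise $\Delta$-separated indices---is preserved for every subinstance, so that both the induction hypothesis and the feasibility precondition of Theorem~\ref{theorem:distribute-sparsity} apply. The remaining boundary bookkeeping ($s=1$, $e=\mn$, and blocked segments shorter than $\Delta$ that nevertheless reach an endpoint) is routine but must be dispatched to invoke the subroutine theorems cleanly.
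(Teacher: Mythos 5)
Your proposal is correct and follows essentially the same route as the paper: a divide-and-conquer recursion that calls $\deltarecovery$ on the middle $\Delta$-segment, blocks out the resulting interval, splits the sparsity budget via Theorem~\ref{theorem:distribute-sparsity}, recurses on the two independent halves, and charges the $O(\log d)$ recursion levels (each of total length $O(d)$) with the cost of the $\deltarecovery$ and $\distributesparsity$ calls. Your exchange argument for recombining the two subsolutions and your boundary checks on $[s,e]$ are only more explicit versions of steps the paper delegates to Theorems~\ref{theorem:delta-recovery} and~\ref{theorem:distribute-sparsity}.
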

We start by showing some technical lemmas. Recall that by Lemma~\ref{lemma:ystar-upper-bound} we provided an upper bound on any $\wstar_0$ such that $\wstar$ is an optimum of $\LPdeltadual$. On the other hand, there is no lower bound on $\wstar_0$. To see that, consider a very simple example $c = (1)$, $\Delta = k = 1$. Nevertheless, we can provide a lower bound in the following form.
\begin{lemma}\label{lemma:ystar-lower-bound}
    There exists an optimal solution $\wstar$ to $\LPdeltadual(\mc, \mk)$ such that $\wstar_0 \ge - (\mk - 1) \mcmax$.
\end{lemma}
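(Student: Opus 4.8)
The plan is to avoid reasoning about an arbitrary optimal dual vector and instead pin down the \emph{largest} value of $w_0$ over all optima of $\LPdeltadual(\mc,\mk)$, exhibiting it directly as the output of $\dualgreedy$. Write $f(\lambda):=\val{\LPdeltadual_{\lambda}(\mc,\mk)}$. By strong duality together with Corollary~\ref{corollary:primal-nok-projlagr} and the definition of $\projlagr$, this value function has the explicit Lagrangian form
\[
  f(\lambda)=\max_{S\in\Msupports_{\Delta}}\Big[\textstyle\sum_{i\in S}\mc_i+\lambda(\mk-|S|)\Big]=\max_{0\le m\le \mk}\big[\phi(m)+\lambda(\mk-m)\big],
\]
where $\phi(m):=\max_{S\in\Msupports_{m,\Delta}}\sum_{i\in S}\mc_i$ is the best $\Delta$-separated selection of exactly $m$ coordinates (finite for $0\le m\le \mk$ by feasibility, since every subset of a $\Delta$-separated set is $\Delta$-separated). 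Thus $f$ is the upper envelope of the affine functions $\lambda\mapsto \phi(m)+\lambda(\mk-m)$, one per $m$ — reconfirming the convexity of Lemma~\ref{lemma:LPdeltadual-is-convex} — and its minimum equals $OPT:=\val{\LPdeltadual(\mc,\mk)}=\phi(\mk)\ge 0$, the last equality following from integrality of $\LPdeltaprimal$ and the standing assumption $\mc\ge \allzeros$.

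First I would identify the right endpoint $\lambda_{\max}$ of the interval of minimizers of $f$ (nonempty and closed, since the dual optimum is attained). The affine piece for $m=\mk$ is the constant $\phi(\mk)=OPT$, so $f(\lambda)=OPT$ holds exactly when every other piece lies below it; for $m<\mk$ that piece has positive slope $\mk-m$ and stays $\le OPT$ iff $\lambda\le (OPT-\phi(m))/(\mk-m)$. Consequently
\[
  \lambda_{\max}=\min_{0\le m\le \mk-1}\frac{OPT-\phi(m)}{\mk-m}.
\]
Since $\dualgreedy(\mc,\lambda_{\max})$ attains $f(\lambda_{\max})=OPT=\val{\LPdeltadual(\mc,\mk)}$ by Lemma~\ref{lemma:greedy-is-optimal}, it is a genuine optimum of $\LPdeltadual(\mc,\mk)$ whose zeroth coordinate equals $\lambda_{\max}$, so it suffices to lower-bound $\lambda_{\max}$ (note that, because the lemma only claims \emph{existence} of an optimum meeting the bound, I do not need $\lambda_{\max}$ to be integral).

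The final step is a one-line estimate. Using $OPT\ge 0$ and the trivial bound $\phi(m)\le m\,\mcmax$ (each selected coordinate is at most $\mcmax$), each term obeys $(OPT-\phi(m))/(\mk-m)\ge -m\,\mcmax/(\mk-m)$, whence
\[
  \lambda_{\max}\ge -\,\mcmax\cdot\max_{0\le m\le \mk-1}\frac{m}{\mk-m}=-\,\mcmax\,(\mk-1),
\]
because $m/(\mk-m)$ is increasing in $m$ on $[0,\mk)$ and is maximized at $m=\mk-1$. Taking $\wstar:=\dualgreedy(\mc,\lambda_{\max})$ then yields the claimed optimal solution with $\wstar_0=\lambda_{\max}\ge -(\mk-1)\mcmax$.

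I expect the only delicate point to be the justification that $f$ has this clean envelope form and that $\min_\lambda f$ is \emph{exactly} $\phi(\mk)$: this is where strong duality, the $\LPdeltaprimalnok$–$\projlagr$ equivalence, and the integrality of $\LPdeltaprimal$ all enter, and one must verify that the minimizer set is nonempty and closed so that $\lambda_{\max}$ is attained and realized by $\dualgreedy$. An alternative, purely combinatorial route would replace this step with the active-constraint optimality criterion of Lemma~\ref{lemma:active-sets-optimality} (characterizing the optimal $\lambda$ as the largest value at which $\dualgreedy$ produces at least $\mk$ active constraints), but the envelope argument above is self-contained and requires no monotonicity or integrality bookkeeping.
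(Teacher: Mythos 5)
Your proof is correct, but it takes a genuinely different route from the paper. The paper's argument is combinatorial and stays entirely on the dual side: it first proves a monotonicity lemma (the number of active constraints of $\dualgreedy(\mc, v)$ is non-increasing in $v$), then runs a downward induction on $l$ showing that at $w_0 = -(\mk-1)\mcmax$ the constraints $l\Delta+1$, $l = 0,\dots,\mk-1$, are all active, so $\dualgreedy(\mc, -(\mk-1)\mcmax)$ already has at least $\mk$ active constraints; the active-constraint optimality criterion (Lemma~\ref{lemma:active-sets-optimality}) then forces an optimizer with $w_0 \ge -(\mk-1)\mcmax$. Your argument instead works with the Lagrangian value function $f(\lambda)$, identifies it as an upper envelope of affine pieces indexed by the support size $m$, computes the right endpoint $\lambda_{\max}$ of the minimizer interval in closed form, and bounds it by the trivial estimate $\phi(m)\le m\,\mcmax$. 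What your approach buys is brevity and independence from the active-constraint machinery (no monotonicity lemma, no induction on the greedy structure); what the paper's approach buys is that the monotonicity lemma and the explicit identification of which constraints are active are reused elsewhere (e.g.\ in $\distributesparsity$), so the work is not wasted there. One small imprecision to tighten: since $\projlagr$ maximizes over all of $\Msupports_{\Delta}$, supports with $|S|>\mk$ also contribute affine pieces (with negative slope $\mk-|S|$), so the identity $f(\lambda)=\max_{0\le m\le \mk}[\phi(m)+\lambda(\mk-m)]$ is not valid for all $\lambda$. This is harmless for your conclusion --- the negative-slope pieces only affect the \emph{left} endpoint of the minimizer interval, and your formula for $\lambda_{\max}$ (determined solely by the positive-slope pieces, given that the minimizer set is nonempty and closed) and the resulting bound both stand --- but the envelope should be stated over the full range of achievable $|S|$.
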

\begin{proof}
To prove the lemma, we utilize the optimality condition described by Lemma~\ref{lemma:active-sets-optimality} and the following claim.
\begin{lemma}\label{lemma:active-constraints-monotone}
    Let $active^1$ be the output of $\activeconstraints(\mc, \dualgreedy(\mc, v))$ and $active^2$ be the output of $\activeconstraints(\mc, \dualgreedy(\mc, v - t))$, for any integer $v$ and a positive integer $t$. Then, it holds
    \[
        |active^1| \le |active^2|.
    \]
\end{lemma}
\begin{proof}
    Let $S$ and $S'$ be tight constraints for $\dualgreedy(\mc, v)$ and $\dualgreedy(\mc, v - t)$, respectively. Then, as we have discussed, $S \subseteq S'$. So, all we have to show is that there are at least as many active constraints formed from $S'$ as there are formed from $S$. We do that by induction on the size of $S$ under the assumption that $S \subseteq S'$.
    
    \paragraph{Base of induction: $|S| = 0$.} As the number of active constraints is non-negative, and if $|S| = 0$ there is no active constraint, the claim follows.
    
    \paragraph{Inductive step: $|S| > \mn$ and $S \subseteq S'$, for $\mn \ge 0$.} Let $i_{min}$ and $i_{min}'$ be the smallest index of $S$ and $S'$, respectively. Constraint $i_{min}$ is active for $\dualgreedy(\mc, v)$, and $i_{min}'$ is active for $\dualgreedy(\mc, v - t)$. Define $T = S \setminus \{i_{min}, \ldots, i_{min} + \Delta - 1\}$ and $T' = S' \setminus \{i_{min}', \ldots, i_{min}' + \Delta - 1\}$. Since $i_{min}' \le i_{min}$ it holds $T \subseteq T'$. Now, as for every active constraint $j$ there is no other active one in the neighborhood of $\Delta - 1$ around $j$ and except that neighborhood the other constraints are not affected by $j$, we have that
    \begin{eqnarray*}
        active^1 & = & \{i_{min}\} \cup \{\text{active constraints for } T\}, \text{ and } \\
        active^2 & = & \{i_{min'}\} \cup \{\text{active constraints for } T'\}.
    \end{eqnarray*}
    Now, as $T \subseteq T'$ and $|T| < |S|$, by inductive hypothesis we have
    \[
        |\{\text{active constraints for } T\}| \le |\{\text{active constraints for } T'\}|,
    \]
    and hence the lemma follows.
\end{proof}

    Let $\hw \gets \dualgreedy(\mc, -(k - 1) \cmax)$ and $active \gets \activeconstraints(\mc, \hw)$. We show that $|active| \ge \mk$. Furthermore, we show that $l \Delta + 1 \in active$, for all $l = 0, \ldots, \mk - 1$. Once it is shown, the claim follows by Lemma~\ref{lemma:active-constraints-monotone} and Lemma~\ref{lemma:active-sets-optimality}. Precisely, Lemma~\ref{lemma:active-constraints-monotone} shows that the number of active constraints for solutions corresponding to $\dualgreedy(\mc, -(\mk - 1) \mcmax - t)$, for $t \ge 0$, is at least $\mk$. But then, from Lemma~\ref{lemma:active-sets-optimality} we have that $\dualgreedy(\mc, -(\mk - 1) \mcmax)$ or $\dualgreedy(\mc, -(\mk - 1) \mcmax + t)$, for some $t > 0$, outputs an optimal solution.
    
    So, it only remain to show $|active| \ge k$. We prove that by induction, showing the following property. Let $i = l \Delta + 1$ be an index for some integer $l$ such that $0 \le l \le \mk - 1$. If $\hw_0 = -(\mk - 1) \mcmax$, $\sum_{j = \max\{1, i - \Delta + 1\}}^{i - 1} \hw_j \le l \mcmax$ and no constraint in $\{(l - 1) \Delta + 2, \ldots, l \Delta\}$ is active, then constraints $l \Delta + 1, (l + 1) \Delta + 1, \ldots, (\mk - 1) \Delta + 1$ are active. The induction is applied in a downward fashion on $l$, i.e. the base case is $l = \mk - 1$, and our goal is to show it holds for $l = 0$.
    
    \paragraph{Base of induction: $l = \mk - 1$.} Let $i = (\mk - 1) \Delta + 1$. As $\sum_{j = \max\{1, i - \Delta + 1\}}^{i - 1} \hw_j \le (\mk - 1) \mcmax$ and $\mc_i \ge 0$, we have $\sum_{j = \max\{1, i - \Delta + 1\}}^{i - 1} \hw_j \le \hw_0 + \mc_i$, and hence constraint $i$ is tight. Furthermore, as no constraint in $\{i - \Delta + 1, \ldots, i - 1\}$ is active, constraint $i$ is an active one.
    
    \paragraph{Inductive step: $0 \le l < \mk - 1$.} Let $i = l \Delta + 1$. First, if $\sum_{j = \max\{1, i - \Delta + 1\}}^{i - 1} \hw_j \le l \mcmax$ we have that constraint $i$ is tight, and as before active as well, and also we have $\hw_i \ge -\hw_0 - l \mcmax$.
    
    Next, we want to show that $\sum_{j = i + 1}^{i + \Delta} \hw_j \le (l + 1) \mcmax$ (note that $i + \Delta \le \mn$), so that we can use the inductive hypothesis for $l + 1$. First, we show that for every index $i'$ it holds
    \begin{equation}\label{eq:delta-sum-bound}
        \sum_{j = \max\{1, i' - \Delta + 1\}}^{i'} \le \mcmax - \hw_0.
    \end{equation}
    Recall that $\hw_0$ is negative. Towards a contradiction, assume that there exists index $q$ such that $\sum_{j = \max\{1, q - \Delta + 1\}}^{q} > \mcmax - \hw_0$. In case of tie, let $q$ be the smallest such index, which implies $\hw_q > 0$. But then, it contradicts the greedy choice of $\dualgreedy$ algorithm, as by the greedy choice we have $\hw_q = 0$ or $\sum_{j = \max\{1, q - \Delta + 1\}}^{q} = \mc_q - \hw_0 \le \mcmax - \hw_0$.
    
    Now we combine \eqref{eq:delta-sum-bound}, for $i' = i + \Delta - 1$ and $\hw_i \ge -\hw_0 - l \mcmax$ to obtain
    \[
        \sum_{j = i + 1}^{i + \Delta} \hw_j \le \mcmax - \hw_0 - \hw_i \le (l + 1) \mcmax,
    \]
    as desired. In addition, as constraint $i$ is active, we have that no constraints in $\{i + 1, \ldots, i + \Delta - 1\}$ is active, and hence we can use the inductive hypothesis.
    
    This concludes the proof.
\end{proof}

\subsection{Proof of Theorem~\ref{theorem:recover-primal}}

The algorithm $\recoverprimal(\mc, \mk)$ utilizes $\deltarecovery$ and $\distributesparsity$ to split the input problem into two subproblems and then recurses on them.
Let us now analyze the correctness and the running time of this algorithm.

\paragraph{Base case.} If the input sparsity is $0$, the algorithm outputs a zero-vector at line~\ref{line:k=0}.

\paragraph{$\Delta$-middle entries.} Next, the algorithm invokes $\deltarecovery$ over the $\Delta$ middle entries. Depending on $r$, it sets $s$ and $e$ to correspond to the interval of $\mc$ that should be removed from consideration in the recursive calls. In the same time, that intervals serves to break the input problem into two independent subproblems. It also sets $k''$ that represents the sparsity distributed outside interval $[s, e]$. Correctness of this call is guaranteed by Theorem~\ref{theorem:delta-recovery}. Value $shift$, and in turn vector $c''$, is set so that we have a guarantee that no element from the interval $[s, e]$ is chosen as part of an optimal solution. To see that, consider vector $c^{(3)} \gets c'' - shift \cdot \allones$. Then, $c_i^{(3)} = \mc_i$ for $i \notin [s, e]$ and $c_i^{(3)} = - (\mk \mcmax + 1)$. In other words, as long as $k'' \le \mk$ entries can be chosen outside of the interval $[s, e]$, no optimal solution should choose any entry within the interval $[s, e]$.

\paragraph{Distributing the sparsity.} The correctness of line~\ref{item:distribute-sparsity} follows by Theorem~\ref{theorem:distribute-sparsity}.

\paragraph{Running time.} Line~\ref{item:get-r} of algorithm $\recoverprimal$ runs in time $O\left((\log{\Delta}) \rtoptimalvalue{\mn}{(\mcmax + 1)}{\mk}\right)$ which is a subset of $O\left((\log{\Delta}) \rtoptimalvalue{\mn}{(\cmax + 1)}{k}\right)$. Vector $\mc$ is split into $c_L$ and $c_R$ at line~\ref{item:split-c} in $O(\mn)$ time. Afterwards, the method recurses on the two subproblems, and combines their outputs into $\ustar$. Obtaining $\ustar$ at line~\ref{item:obtain-star} also takes $O(\mn)$. So, it only to remain to discuss the recursion.

From Theorem~\ref{theorem:distribute-sparsity} we have that line~\ref{item:distribute-sparsity} take $O\left(\mn (\log{\mcmax} + \log{k''}) \right)$. Furthermore, as $k'' \le k$ and $\mcmax \le \cmax$, we have $O\left(\mn (\log{\mcmax} + \log{\mk}) \right) = O\left(\mn (\log{\cmax} + \log{k}) \right)$.

Every recursive step shrinks the corresponding $\mc$ vector by half at least. So, the recursion has depth of $O(\log{d})$. At every level of recursion are considered vectors $\mc$ of total length $O(d)$ -- this follows from the recursive call at line~\ref{item:recursive-call} and the fact the $c_L$ and $c_R$ represent disjoint pieces of $\mc$. So, this implies the total running time of $O\left(\log{d} (d \left(\log{\cmax} + \log{k}\right) + (\log{\Delta}) \rtoptimalvalue{d}{(\cmax + 1)}{k}) \right)$, which in a more compact way can be written as $O\left(\rtoptimalvalue{d}{(\cmax + 1)}{k} \log{d} \log{\Delta} \right)$.

\end{document}